\documentclass[onecolumn, 11pt]{article}

\pdfoutput=1

\usepackage[utf8]{inputenc}

\usepackage{longtable}
\usepackage[left=2cm,right=2cm,top=3cm,bottom=2.5cm]{geometry}
\usepackage{media9} 
\usepackage{fancyhdr}
\fancyhf{}
\cfoot{\thepage}
\pagestyle{fancy}

\usepackage{setspace}
\usepackage{import}
\usepackage[hang, small,labelfont=bf,up,textfont=it,up]{caption}
\usepackage{subcaption}

\usepackage{sectsty}
\sectionfont{\centering}
\usepackage{float}

\usepackage{blkarray}
\usepackage{graphicx}
\usepackage{mwe}

\newcommand{\motif}[2]{\vcenter{\hbox{\includegraphics[scale=0.3]{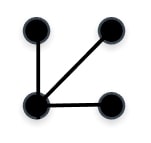}}}}

\usepackage{todonotes}
\usepackage[english]{babel}										
\usepackage[protrusion=true,expansion=true]{microtype}
\usepackage{gensymb}
\usepackage{booktabs}
\usepackage{mathptmx}
\usepackage{siunitx}
\usepackage{subcaption}
\usepackage{nomencl}
\usepackage{multirow}
\usepackage{framed}
\usepackage{multicol}
\usepackage{url}
\usepackage{amsmath, amsfonts,amsthm,amssymb}

\usepackage{pdflscape}

\usepackage{color}
\makeatletter
\def\BState{\State\hskip-\ALG@thistlm}
\makeatother

\usepackage{verbatim}

\usepackage{tablefootnote}

\theoremstyle{plain}
\newtheorem{theorem}{Theorem}[section]

\theoremstyle{definition}
\newtheorem{example}[theorem]{Example}
\theoremstyle{definition}
\newtheorem{remark}[theorem]{Remark}

\usepackage[explicit]{titlesec}
\titleformat{name=\paragraph,numberless}[runin]
  {\normalfont\normalsize\bfseries}{}{15pt}{\uline{#1.}}
  
\usepackage[toc,page]{appendix}

\usepackage{algorithm}
\usepackage{algpseudocode}

\usepackage{footnote}
\makesavenoteenv{tabular}
\makesavenoteenv{table}

\usepackage{threeparttable}

\usepackage{siunitx}
\sisetup{table-format = 2.4,table-number-alignment = center,table-align-text-post = false}
\usepackage{dcolumn}
\newcolumntype{d}[1]{D{.}{.}{#1}}
\usepackage{etoolbox}
\robustify\bfseries
\sisetup{detect-weight=true,detect-inline-weight=math}

\usepackage{chngcntr}

\usepackage{array}

\usepackage[numbers]{natbib}

\newcommand{\tr}{\operatorname{tr}}

\newcommand{\var}{\operatorname{var}}
\newcommand{\dto}{\stackrel{d}{\longrightarrow}}
\newcommand{\rN}{\ensuremath{\mathrm{N}}}

\usepackage{enumitem}

\usepackage{mathtools}

\usepackage{sectsty}
\sectionfont{\bfseries\Large\raggedright}

\usepackage{tikz}
\newcommand\hlight[1]{\tikz[overlay, remember picture,baseline=-\the\dimexpr\fontdimen22\textfont2\relax]\node[rectangle,fill=blue!50,rounded corners,fill opacity = 0.2,draw,thick,text opacity =1] {$#1$};}

\begin{document}

\singlespace

\title{Subgraphs and Motifs in a Dynamic Airline Network}

\author{
{Marius Agasse-Duval and Steve Lawford$^\dagger$}\\[2pt]
Data, Economics and Interactive Visualization (DEVI) group, ENAC (University of Toulouse),\\7 avenue Edouard Belin, CS 54005, 31055, Toulouse, Cedex 4, France\\
$^\dagger$Corresponding author. Email: steve.lawford@enac.fr
}

\date{}

\maketitle

\vspace{0.5cm}

\begin{abstract}

    \noindent How does the small-scale topological structure of an airline network behave as the network evolves? To address this question, we study the dynamic properties of small undirected subgraphs using 15 years of data on Southwest Airlines' domestic route service. We use exact enumeration formulae to identify statistically over- and under-represented subgraphs, known as motifs and anti-motifs. We discover substantial topology transitions in Southwest's network and provide evidence for time-varying power-law scaling between subgraph counts and the number of edges in the network. We also suggest a node-ranking measure that can identify important nodes relative to specific local topologies. Our results extend the toolkit of subgraph-based methods and provide new insight into transportation networks and the strategic behaviour of firms.
    
\end{abstract}

\section{Introduction}
A network \emph{motif} is a connected subgraph, usually with a small number of nodes, that occurs significantly more often in a real-world network than it does in an ensemble of appropriately-chosen random graphs. Motifs were first introduced by \cite{milo_etal02}, who applied them to biochemical gene regulation networks, ecosystem food webs, neuronal connectivity networks, sequential logic electronic circuits, and a network of hyperlinks from the World Wide Web.\footnote{An early study by some of these authors presented a specific application of motifs to genetics \cite{shen-orr_etal02}.} They found evidence that distinct sets of motifs are associated with different types of network, and suggest that motifs are basic structural elements, or topological interaction patterns, each of which may perform precise specialized functions, and that can be used to define universal network classes (e.g., evolutionary, information processing, etc.). Their paper was rapidly followed by many subsequent studies that looked for motifs in biological data, in particular in gene regulation and neuroanatomical networks e.g., \cite{alon07, dobrin_etal04, prill_etal05, sporns_kotter04, yeger-lotem_etal04} and more recently \cite{chen_etal13, wu_etal13}. However, the presence and interpretation of motifs in economic or transportation networks has received very little attention.

Graph-theoretic research on transportation networks typically focuses on macroscopic features such as network diameter, or on microscopic measures that include various node centralities to identify important nodes. In this paper, we examine ``mesoscopic'' subgraph-based measures that fall between these global and local extremes. We describe the scaling behaviour of small and possibly overlapping subgraphs on up to five nodes, and identify motif dynamics in a transportation network, using 15 years of data on the U.S. domestic airport--airport route network of one of the world's largest passenger carriers, Southwest Airlines.

The network that we study has several notable features. First, it is quite small, with no more than 88 nodes and 522 edges, in 2013Q4 (Section \ref{sec:network_data}). By contrast, many real-world networks are extremely large. For example, Facebook and Twitter reported 2.01 billion and 328 million monthly active users, respectively, in the second quarter of 2017 \cite{facebook17, twitter17}. The academic search engine Google Scholar covered an estimated 160 million indexed documents in 2014 \cite{orduna-malea_etal15}. The Stanford Large Network Dataset Collection \cite{snapnets} lists at least 120 large technological, social, communication and other graphs (or subgraphs) with thousands to millions of nodes and edges. The small size of our network enables us to apply very accurate exact methods for subgraph and motif analysis, and it is not necessary to use the fastest available approximate sampling techniques. Second, airports and routes describe the topology of a human-made technological network, and its evolution over time will closely reflect a carrier's strategic, economic and operational decisions, as well as other regulatory and spatial (geographical) constraints. For this reason, we would expect the interpretation of subgraph-based measures, and the dynamic behaviour of motifs, to be rather different to that which is observed for naturally-evolving biological networks, or for social networks on (say) collaborations between scientists or informal links between company executives.

We make the following specific contributions:

\begin{itemize}
    \item We consider small (three, four, and five-node) undirected subgraphs (Section \ref{sec:graphs_and_motifs}). We use exact enumeration formulae to count subgraphs, and identify motifs (and anti-motifs) with reference to two null random networks, chosen to have some of the same characteristics as the real network, namely the Erd{\H{o}}s-R{\'{e}}nyi random graph $G(n, p)$, and a rewiring model closely inspired by \cite{milo_etal02} (Section \ref{sec:identify_motifs}). There are few available studies of motifs in economic and transportation networks and none, to our knowledge, that use exact methods. We provide new evidence of motifs and anti-motifs in an airline network and show that their significance varies over time.
   
    \item We investigate scaling in subgraph counts (Section \ref{sec:scaling_properties}). We find that the number of subgraphs of a given type generally increases over time, as the size of the network grows. While this is not surprising, we also identify a possible power-law scaling regularity between subgraph counts and the number of edges $m$ in the network, of the form $y = A \, m^{\beta}$. This scaling is stable across a wide range of number of edges but appears to undergo a dramatic transition during the sample period, corresponding to several scaling regimes. We present evidence that the power-law exponent $\beta$ in each regime is related to the number of nodes $b$ in each subgraph, either as $\beta \approx b-1$ (regime 1) or $\beta \approx b/2$ (regime 2). We draw comparisons with the implied scaling properties of an Erd{\H{o}}s-R{\'{e}}nyi random graph and several deterministic graphs, and use a toy regime-switching model to provide insight into changes in scaling behaviour.
    
    \item We mention two further applications of subgraphs to the descriptive analysis of networks (Section \ref{sec:subgraph_centrality} and Appendix \ref{sec:results_spatial}). First, we propose a new subgraph-based method for ranking nodes, using the number of a particular type of subgraph in a network that are incident to a given node, and show that it can give different rankings to standard measures such as degree or betweenness centrality. We compare our results with the ``subgraph-centrality'' of \cite{estrada_rodriguez-velazquez05}, which we find to be highly correlated with degree (and other standard) centrality measures on our data. In Appendix \ref{sec:results_spatial}, we briefly examine the spatial distribution of the triangle subgraph, and show that it suggests a clear geographical shift in the concentration of network activity over time.
    
\end{itemize}
Our work is based on a large literature in graph theory and complex systems, and we discuss related research in Section \ref{sec:related}. In Section \ref{sec:conclusions}, we conclude and present directions for future work. All proofs, some supporting figures and tables, and discussions of spatial subgraphs and bounds on the number of complete subgraphs, are collected in Appendices \ref{sec:proofs}--\ref{sec:bounds_on_complete}.

\clearpage

\section{Subgraphs and motifs}\label{sec:graphs_and_motifs}
We begin with an overview of the relevant tools of graph theory that we will use in this paper. Major monographs on the subject include mathematical aspects \cite{diestel17}, applications to social networks and economics \cite{jackson08}, and algorithms \cite{jungnickel08}. Algorithms for graph search, shortest path length, and maximum flow, are also covered in detail by \cite[Section 6]{cormen_etal09}. The comprehensive survey by \cite{newman03} provides a complex systems perspective.
\subsection{Preliminaries}
A \emph{graph} is an ordered pair $G = (V(G), E(G))$, where $V(G)$ is a set of \emph{nodes} and $E(G)$ is a set of \emph{edges} $E(G) \subseteq V(G) \times V(G)$. When there is no ambiguity, we write $V = V(G)$ and $E = E(G)$. The number of nodes and edges are denoted by $n = |V|$ and $m = |E|$ respectively. We refer to a graph by its $n \times n$ \emph{adjacency matrix} $g$ with representative element $(g)_{ij}$. In this paper, we consider \emph{simple} (no self-links or multiple edges) undirected and unweighted graphs, so that $(g)_{ii} = 0$ (no self-links), $(g)_{ij} = (g)_{ji}$ (undirected) and $(g)_{ij} \in \{0, 1\}$ (unweighted, no multiple edges). We use $(i, j) \in E$ to denote an edge between nodes $i$ and $j$, and say that they are \emph{directly-connected}. A \emph{walk} between nodes $i$ and $j$ is a sequence of edges $\{(i_{r}, i_{r+1})\}_{r=1,\ldots,R}$ such that $i_{1} = i$ and $i_{R+1} = j$. A \emph{path} is a walk containing distinct nodes. A graph is \emph{connected} if there is a path between any pair of nodes $i$ and $j$. A \emph{cycle} (or a \emph{simple cycle}) is a walk (or path) that starts and ends at the same node. We assume that every theoretical network used in this paper is connected and, furthermore, all of our empirical networks are also connected.

The \emph{diameter} (or \emph{average path length}) is the maximum (or mean) shortest path length across all pairs of nodes in a graph. The \emph{degree} $k_i = \sum_{j}(g)_{ij}$ is the number of nodes that are directly-connected to node $i$, and the \emph{degree distribution} $P(k)$ is the probability distribution of $k$ over $G$.\footnote{Unless otherwise stated, all summations are computed over the full range of permitted values of the index of summation.} In a $k$-\emph{regular} graph, every node has degree $k$. The \emph{(1-degree) neighbourhood} of node $i$ in $G$ is denoted $\Gamma_{G}(i) = \{j: (i, j) \in E\}$, and is the set of all nodes that are directly-connected to $i$, so that $k_{i} = |\Gamma_{G}(i)|$. The \emph{density} $d(G) = 2 m / n (n-1)$ is the number of edges in $G$ relative to the maximum possible number of edges in a graph with $n$ nodes: it ranges from 0 (a set of isolated nodes) to 1 (an $n$-\emph{complete} graph $K_{n}$, which has all possible edges).

A graph \emph{isomorphism} from a simple graph $G$ to a simple graph $H$ is a bijective mapping $f:V(G) \to V(H)$ such that $(i, j) \in E(G)$ if and only if $(f(i), f(j)) \in E(H)$. We use $G \cong H$ to denote that $G$ and $H$ are isomorphic. A graph \emph{automorphism} is an isomorphism of a graph with itself.\footnote{Isomorphic graphs on the same set of nodes have the same topology but will generally have different adjacency matrices.} Let $G' = (V', E') \subseteq G$ denote a subgraph of $G$, such that $V' \subseteq V$ and $E' \subseteq E$. If $G' \subseteq G$ and $G' \neq G$, then $G'$ is a \emph{proper subgraph} of $G$, which we write as $G' \subset G$. A \emph{cyclic} (or \emph{acyclic}) subgraph contains some (or no) simple cycles. If $G' \subseteq G$ and all the edges $(i,j) \in E$ such that $i,j \in V'$ are in $E'$, then $G'$ is an \emph{induced} subgraph of $G$ (otherwise it is \emph{non-induced}). We use the notation $M_{a}^{(b)}$ of \cite{lawford20, lawford_mehmeti20} to denote a specific non-induced subgraph, where $b$ is the number of nodes in the subgraph and $a$ is the decimal representation of the smallest binary number derived from a row-by-row reading of the upper triangles of each adjacency matrix $g$ across the set of all isomorphic subgraphs on the same $b$ nodes. Let $\widetilde{M}_{a}^{(b)}$ be an induced subgraph. The non-induced and induced subgraph counts in $G$ are denoted $|M_{a}^{(b)}|$ and $|\widetilde{M}_{a}^{(b)}|$. There are twenty-nine undirected, connected, and non-isomorphic subgraphs on three, four, or five nodes (see Table \ref{fig:motif_notation}). We illustrate using the 4-star:

\begin{example}[Notation]\label{ex:notation}

    \[
        \motif{}{}
        = 
        \begin{blockarray}{ccccc}
            & 1 & 2 & 3 & 4 \\
            \begin{block}{c(cccc)}
                1 & 0 & 1 & 1 & 1 \\
                2 & 1 & 0 & 0 & 0\\
                3 & 1 & 0 & 0 & 0\\
                4 & 1 & 0 & 0 & 0 \\
            \end{block}
        \end{blockarray}\enskip
        \cong
        \begin{blockarray}{ccccc}
            & 1 & 2 & 3 & 4 \\
            \begin{block}{c(cccc)}
                1 & 0 & 1 & 0 & 0 \\
                2 & 1 & 0 & 1 & 1\\
                3 & 0 & 1 & 0 & 0\\
                4 & 0 & 1 & 0 & 0 \\
            \end{block}
        \end{blockarray}\enskip
        \cong
        \begin{blockarray}{ccccc}
            & 1 & 2 & 3 & 4 \\
            \begin{block}{c(cccc)}
                1 & 0 & 0 & 1 & 0 \\
                2 & 0 & 0 & 1 & 0\\
                3 & 1 & 1 & 0 & 1\\
                4 & 0 & 0 & 1 & 0 \\
            \end{block}
        \end{blockarray}\enskip
        \cong
        \begin{blockarray}{ccccc}
            & 1 & 2 & 3 & 4 \\
            \begin{block}{c(cccc)}
                1 & 0 & 0 & 0 & 1 \\
                2 & 0 & 0 & 0 & 1\\
                3 & 0 & 0 & 0 & 1\\
                4 & 1 & 1 & 1 & 0 \\
            \end{block}
        \end{blockarray}\enskip.
    \]

\noindent We choose an arbitrary labelling of the nodes of the subgraph (from 1 to $b$), to give a subgraph $G'$. We then find all subgraphs that are isomorphic to $G'$, list their adjacency matrices, and then use the upper-triangular elements of each adjacency matrix, including leading zeros, to give binary representations, e.g., $111000_{2}$ and $100110_{2}$ and $010101_{2}$ and $001011_{2}$, respectively. We find the decimal representation of each binary number, and set $b$ equal to the minimum of these, e.g., we have $56_{10}$ and $38_{10}$ and $21_{10}$ and $11_{10}$, and so we use $M_{11}^{(4)}$ to denote the 4-star.
\end{example}
A $b$-complete subgraph is also called a \emph{clique}. A \emph{maximal clique} in a graph is a clique that cannot be made any larger by the addition of another node (and its edges) while preserving the complete-connectivity of the clique. A \emph{maximum clique} is a maximal clique with the largest possible number of nodes in the graph, and the \emph{clique number} $w(G)$ is the number of nodes in the maximum clique. Special graphs are the Erd{\H{o}}s-R{\'{e}}nyi random graph $G(n, p)$ with nodes $V = \{1,\ldots,n\}$ and edges that arise independently with constant probability $p$; the \emph{star} graph $S_{1,n-1}$ with \emph{center} node $i_{1}$ that is directly-connected to every other node (these edges are called \emph{spokes}), and that has no other edges; and the \emph{circle} graph $C_{n}$ with edges $(i, i+1) \in E$ for $i=1,\ldots,n-1$, and $(1,n) \in E$.

\subsection{Exact subgraph enumeration}\label{sec:counting_subgraphs}
We count each of the subgraphs in Table \ref{fig:motif_notation} using exact formulae. We emphasize here an important distinction between induced and non-induced subgraphs. A non-induced subgraph $H'$ with $b$ nodes and $c$ edges is allowed to be part of a ``larger'' subgraph $H$ on the same $b$ nodes, in the sense that $H' \subset H$, so that $H$ has more edges than $H'$. For example, the tadpole $M_{15}^{(4)}$ can be nested in the diamond $M_{31}^{(4)}$ and the 4-complete $M_{63}^{(4)}$, but it cannot be nested in the 4-star $M_{11}^{(4)}$ or the 4-path $M_{13}^{(4)}$ or the 4-circle $M_{30}^{(4)}$. Conversely, an induced subgraph $H'$ cannot be part of any larger subgraph $H$ on the same $b$ nodes, in the above sense. So, the set of all induced subgraphs of a given type, e.g., 3-stars, is a subset of all subgraphs of the same type, i.e., some 3-stars in a graph might be nested in triangles, while others are not. By definition, the triangle and 4-complete and 5-complete subgraphs must be induced. Throughout the paper, we allow arbitrary overlapping of nodes and edges between two subgraphs (this corresponds to the $\mathcal{F}_{1}$ ``frequency concept'' in \cite{schreiber_schwobbermeyer05}).

The analytical techniques that we use to derive exact enumeration results on the eight connected non-induced subgraphs with \emph{three or four nodes} are very well known and first appeared in \cite{alon_etal97}. For ease of reference, we collect these results in Theorem \ref{thm:analytic_count}, and provide a purely combinatorial and elementary proof for each formula, based on the number of closed walks and the description of more complicated subgraphs in terms of simpler ones, with no explicit mention of the moments of the spectral density (the relationship between eigenvalues and structural graph properties is discussed by \cite{harary_schwenk79}). It is also well understood that induced subgraph counts follow generally as linear combinations of non-induced counts, and we count induced subgraphs on three and four nodes using the exact formulae in Theorem \ref{thm:analytic_count_not_nested}, which we prove using elementary combinatorial methods. In our implementation, we use memoization of non-induced counts to give a very fast computation of induced counts. There has been much less work on \emph{five-node} subgraphs, and complete results on non-induced and induced subgraph counts were given only recently by \cite[Theorem 2.1, Theorem A.1]{lawford20} and, using a different method of proof, by \cite{pinar_etal17}. In this paper, we use the results of \cite{lawford20} to count non-induced and induced five-node subgraphs.\footnote{There is a rich and fascinating literature in computer science on fast matrix multiplication, subgraph counting, listing of subgraphs and maximal cliques, and motif detection, with development of exact and approximate algorithms that work well on very large graphs. However, it is not the aim of our paper to provide more efficient routines for massive datasets, or to develop the fastest possible algorithms, when very accurate exact methods have good practical runtime performance. For a brief history of fast matrix multiplication, see \cite[Section 1]{vassilevskawilliams14}. Efficient algorithms for listing all triangles in a graph are given by \cite{bjorklund_etal14}, while \cite{chu_cheng12} develop an exact triangle listing algorithm based on iterative partitioning of the input graph $G$, and survey other triangle listing algorithms. Fast algorithms for finding some 4-node subgraphs are presented in \cite{vassilevskawilliams_etal15}. For a short discussion of the \emph{k-clique problem} see \cite{vassilevska09}. Subgraph enumeration on large graphs is discussed by \cite{itzhack_etal07, kashtan_etal04b}.  State-of-the-art network motif detection algorithms are surveyed by \cite{khakabimamaghani_etal13, tran_etal14, wong_etal11}, who report experimental evidence on the runtime of eleven software tools.} See Appendix \ref{sec:bounds_on_complete} for weak bounds on the number of complete subgraphs.

\subsection{Real-world network data}\label{sec:network_data}
Our network data is constructed from the U.S. Department of Transportation's DB1B Airline Origin and Destination survey over the period 1999Q1 to 2013Q4.\footnote{The data is publicly-available, and can be downloaded from \url{http://www.transtats.bts.gov/}\ } The source provides quarterly information on a 10\% random sample of all tickets that were sold for domestic U.S. airline travel, and has been widely used in the economics literature, e.g., \cite{aguirregabiria_ho12, ciliberto_tamer09, dai_etal14, goolsbee_syverson08}. In this paper, we focus on one carrier, Southwest Airlines, which appears in every quarter of the full sample, and is the largest (number of nodes and edges) and densest ($d(G)$) network available in the dataset. We drop any tickets that were sold under a codeshare agreement, or that had unusually high or low fares. We retain coach class tickets, unless more than 75\% of the carrier's tickets in a particular quarter were reported as either business or first class, in which case we keep all tickets for that carrier. We aggregate individual tickets to unidirectional route-level observations, and drop routes that have very few passengers, or that do not have a constant number of passengers on each segment. We refer to airports using the official three-letter IATA designators. For further details on the data treatment, see \cite{lawford20, lawford_mehmeti20}. For each quarter, we build the associated simple unweighted and undirected graph (or ``route map'') $G$ as follows: (node) the set of nodes $V$ are all airports that served as an origin or destination on some route for Southwest in that quarter; (edges) the set of edges $E$ are all non-directional airport--airport routes for which a sufficient number of passengers bought tickets for direct travel.

\begin{figure}
    \begin{subfigure}{.5\textwidth}
      \centering
       \includegraphics[width=.8\linewidth]{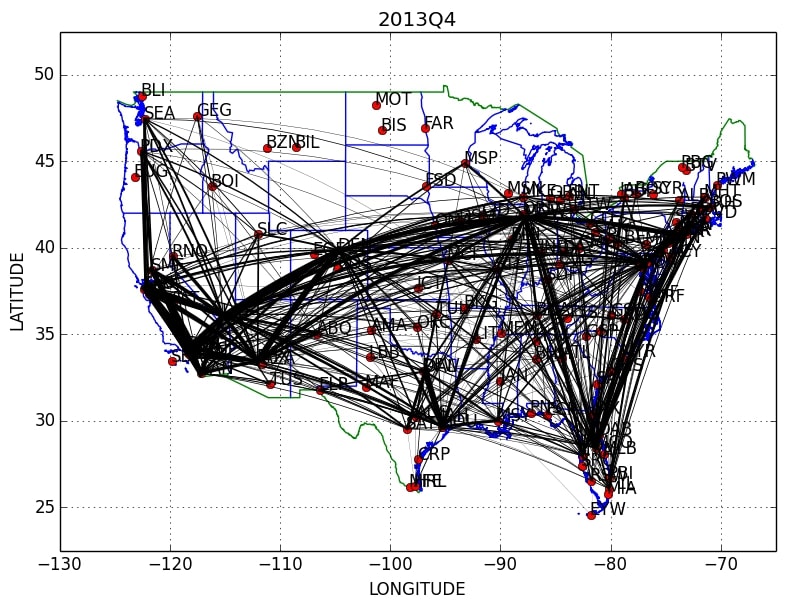}
      \caption{Spatial network.}
      \label{fig:southwest_network_2013q4}
    \end{subfigure}
    \begin{subfigure}{.5\textwidth}
      \centering
      \includegraphics[width=.8\linewidth, height=.24\textheight, trim=0cm 0cm 0cm 0cm, clip]{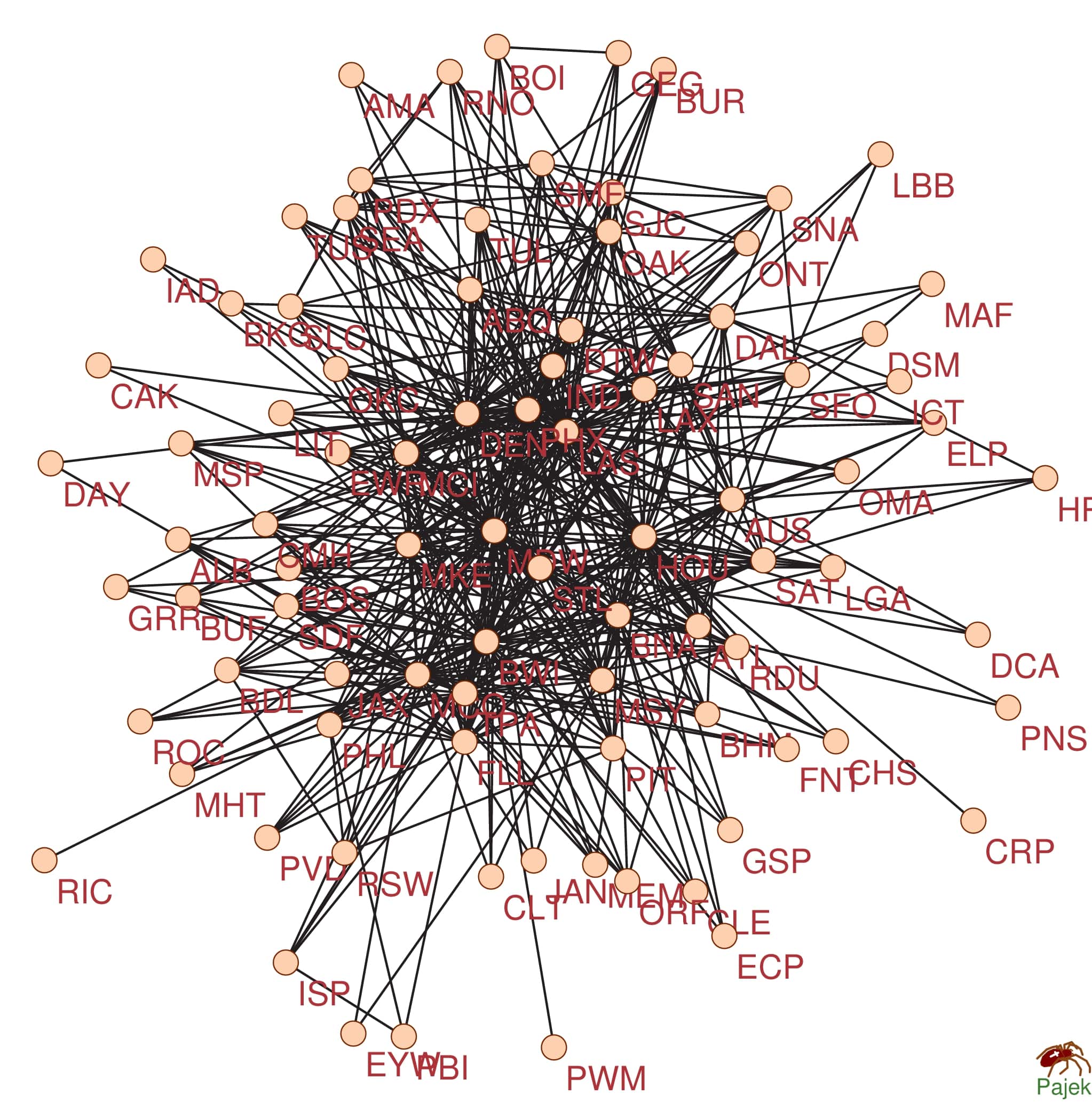}
      \caption{Topological network.}
      \label{fig:southwest_network_2013q4_topology}
    \end{subfigure}
    	\caption{Southwest's network in 2013Q4, computed using nondirectional nonstop round-trip coach class tickets. Routes in (\ref{fig:southwest_network_2013q4}) are plotted as minimum-distance paths between directly-connected origin and destination airports; the line width is proportional to the number of passengers on each route, from the U.S. Department of Transportation's Airline Origin and Destination Survey (DB1B). The topological network in (\ref{fig:southwest_network_2013q4_topology}) was plotted using Pajek's \cite{mrvar_batagelj16} Kamada-Kawai visualization algorithm.}
    	\label{fig:illustrative_networks}
\end{figure}

In Figure \ref{fig:illustrative_networks}, we give two representations of the 2013Q4 network. The spatial plot shows that passenger activity is highly concentrated between particular geographical areas, creating clearly visible traffic ``corridors'', while other regions have little or no service. The topological plot gives some insight into the degree distribution, with low-degree and high-degree nodes both apparent. Southwest's network has grown steadily over the sample period, from $(n, m) = (54, 251)$ in 1999Q1 to $(n, m) = (88, 522)$ in 2013Q4. Figure \ref{fig:network_dynamics} displays some properties of the network, and the salient features of the numerical data are as follows:
\begin{itemize}
\item The number of edges increases almost linearly, while the number of nodes increases slowly until the last two years of the sample, followed by a more rapid increase (Figure \ref{fig:nodes_and_edges_dynamic}).
\item The density was stable from 2001 to 2009, at around 0.20, but fell sharply thereafter, to below 0.14, as the increase in nodes was not matched proportionally by new edges (Figure \ref{fig:density_WN_2013_4}).
\item The diameter and average path length of Southwest's network are, respectively, 3--4 and roughly 2. These values are very close to those of the corresponding $G(n, p)$, when the edge-formation probability $p$ is set equal to the density of Southwest's network.\footnote{We computed the statistics for $G(n, p)$ using 1,000 replications for each time period, except for expected overall and average clustering, for which we used 100 replications. See \cite{barabasi16} for a non-technical introduction to random graphs.} Viewed through this global lens, Southwest's network might appear to behave very much like the random $G(n, p)$ (Figure \ref{fig:diameter_apl}). However, we see later that there are various critical differences, and Southwest's network cannot be considered as an Erd{\H{o}}s-R{\'{e}}nyi random graph.
\item The overall clustering coefficient measures the fraction of connected triples of nodes that have their third edge connected to form a triangle; the average clustering coefficient computes this measure on a node-by-node basis and then averages across nodes. There is considerably more clustering (both overall and average) in Southwest's network than in the random $G(n, p)$, for which expected overall and average clustering are identical and equal to the density $p$ (Figure \ref{fig:clustering}).
\item Despite the global stability of Southwest's network that is suggested by diameter and average path length, there is considerable dynamic variation at the local route level: on average, 2.5\% of routes in a quarter were not served in the previous quarter, and 1.2\% of routes that were served in the previous quarter were closed in the subsequent one (Figure \ref{fig:plus_minus_routes}).
\item There is substantial heterogeneity in the \emph{degree centrality} $DC_{i} = k_{i} / (n - 1)$ across different nodes. We illustrate this with Denver (DEN), Detroit Metropolitan (DTW), Las Vegas McCarran (LAS), Chicago Midway (MDW) and Phoenix Sky Harbor (PHX). Midway has experienced several discrete jumps in its activity. Denver entered the network in 2006Q1, with direct links to 8\% of other nodes, a figure that rose to a maximum of 72\% of other nodes in 2012Q4. Denver and Midway are the two airports that have seen the largest change in degree centrality over the sample period (Figure \ref{fig:centrality_nodes}).
\end{itemize}

\begin{figure}
    \begin{subfigure}{.5\textwidth}
      \centering
       \includegraphics[width=.8\linewidth]{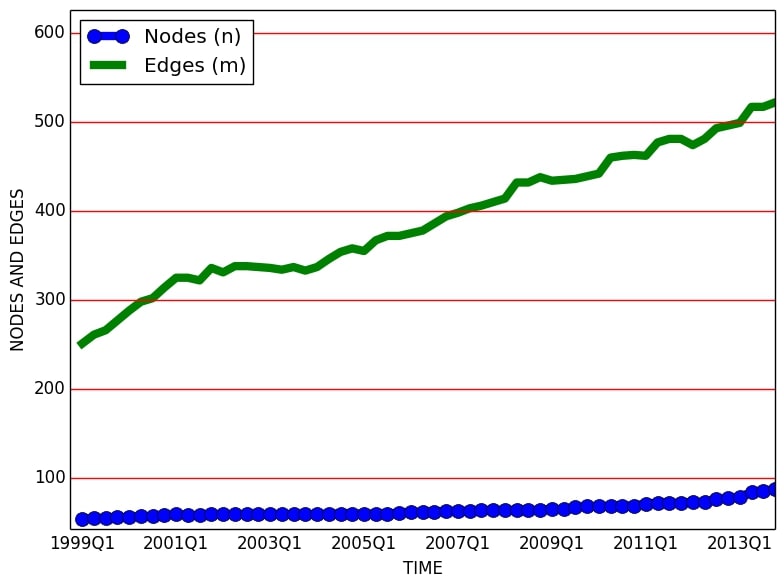}
      \caption{Number of nodes and edges.}
      \label{fig:nodes_and_edges_dynamic}
    \end{subfigure}
    \begin{subfigure}{.5\textwidth}
      \centering
      \includegraphics[width=.8\linewidth]{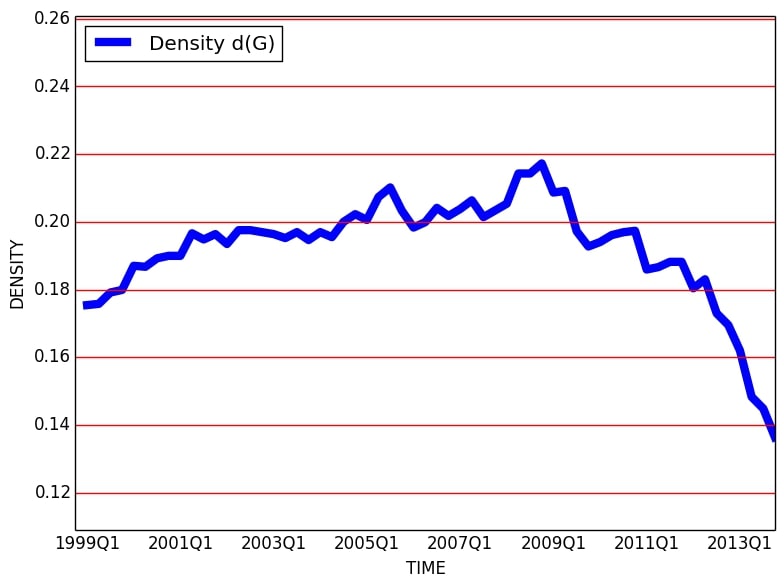}
      \caption{Density.}
      \label{fig:density_WN_2013_4}
    \end{subfigure}
    \begin{subfigure}{.5\textwidth}
      \centering
       \includegraphics[width=.8\linewidth]{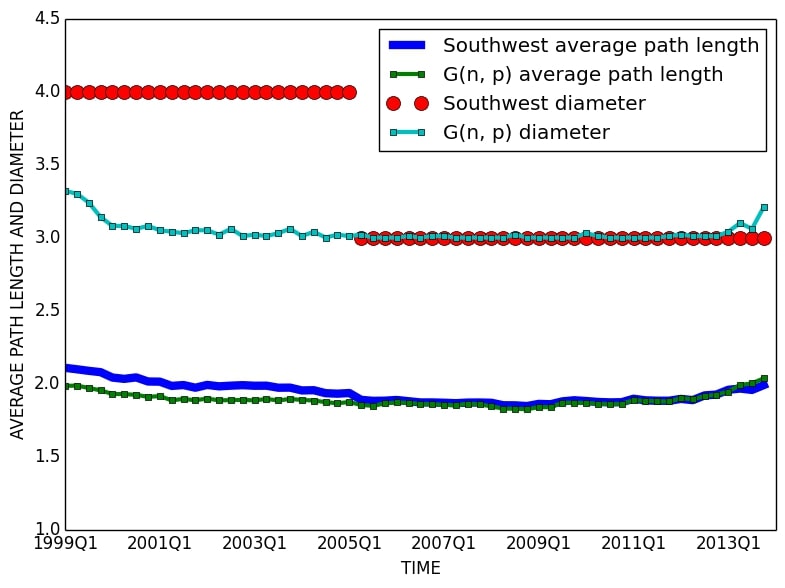}
      \caption{Diameter and average path length.}
      \label{fig:diameter_apl}
    \end{subfigure}
    \begin{subfigure}{.5\textwidth}
      \centering
      \includegraphics[width=.8\linewidth]{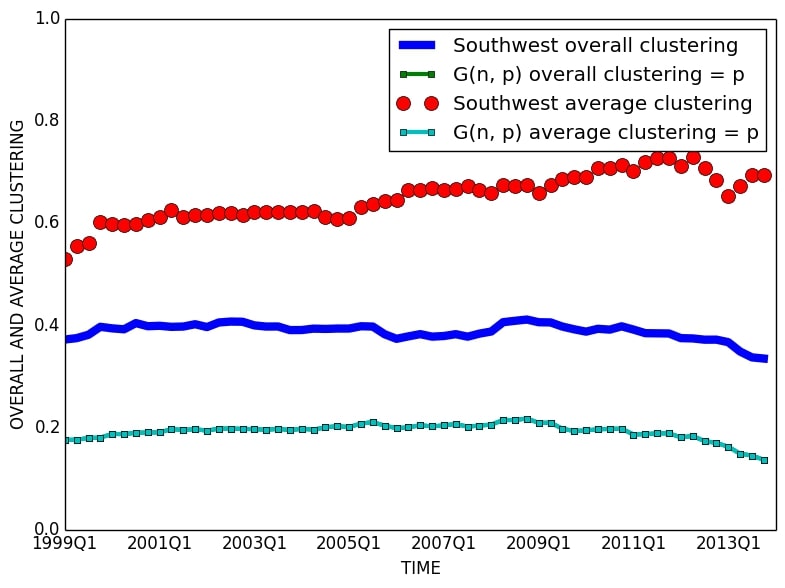}
      \caption{Overall and average clustering.}
      \label{fig:clustering}
     \end{subfigure}
    \begin{subfigure}{.5\textwidth}
      \centering
       \includegraphics[width=.8\linewidth]{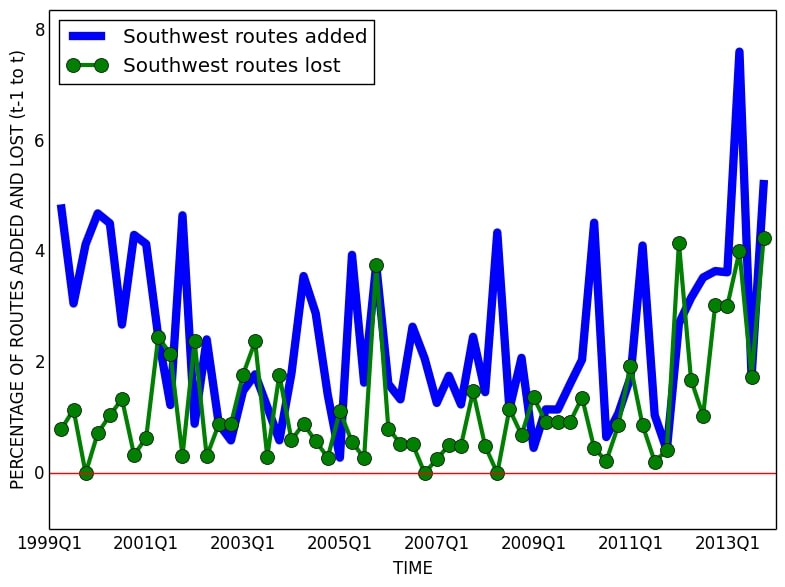}
      \caption{Percentage of routes added and lost.}
      \label{fig:plus_minus_routes}
    \end{subfigure}
    \begin{subfigure}{.5\textwidth}
      \centering
      \includegraphics[width=.8\linewidth]{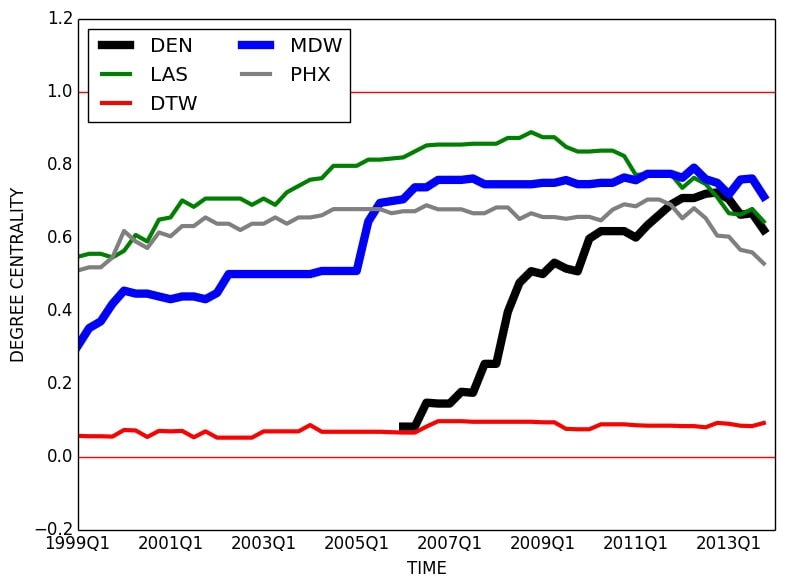}
      \caption{Degree centrality, selected nodes.}
      \label{fig:centrality_nodes}
    \end{subfigure}
    	\caption{Global and local properties of Southwest's network from 1999Q1 to 2013Q4. (\ref{fig:nodes_and_edges_dynamic} and \ref{fig:density_WN_2013_4}) The number of edges ($m$) increases approximately linearly across the sample period, while the density falls sharply after 2012 due to a rapid increase in the number of new airports ($n$) that was not matched proportionally by new routes; (\ref{fig:diameter_apl}) the diameter and average shortest path length compared to the Erd{\H{o}}s-R{\'{e}}nyi random graph $G(n, p)$ with density $p$ equal to the density of Southwest's network; (\ref{fig:clustering}) the overall (average) clustering coefficient for Southwest is about two (three) times the clustering of $G(n, p)$; (\ref{fig:plus_minus_routes}) there is generally a net increase in the number of routes between successive quarters; (\ref{fig:centrality_nodes}) heterogeneity in degree centrality over time, for different airports.}
    	\label{fig:network_dynamics}
\end{figure}

The tendency for many sparse real-world networks to have average path lengths close to those of a random graph but with much higher local clustering (nodes have many mutual neighbours) is called the \emph{small-world} property. An elegant theoretical explanation for this is given by \cite{watts_strogatz98} who show that the presence of a small number of ``short-cut'' edges, which connect nodes that would otherwise be farther apart than the average path length in a random network, can lead to a rapid fall in average path length while having very little impact on local clustering.\footnote{A ``sparse'' network is defined by \cite{watts_strogatz98} as one which satisfies, in our notation, $n \gg (m/n) \gg \log(n) \gg 1$. For Southwest's 2013Q4 network, we have $(n, m) = (88, 522)$, whereupon $n = 88 > (m/n) \approx 5.93 > \log(n) \approx 4.48 > 1$. Throughout, $\log(\cdot)$ refers to the natural logarithm. Evidence that Southwest's network is small-world is also presented by \cite[Section 3.1]{lawford_mehmeti20} and \cite[Section II.A]{wuellner_etal10}.} This is consistent with the presence of a small number of high degree ``hub'' nodes in Southwest's network. For a longer theoretical treatment of the small-world property, with a focus on social networks, see \cite{watts99}.

\subsection{Scaling of subgraph counts}\label{sec:scaling_properties}
There has been much interest in the statistical physics literature on rules for the scaling of subgraph counts with measures of network size and structure, e.g., \cite{itzkovitz_alon05, itzkovitz_etal03}. We report log non-induced subgraph counts over time in Table \ref{fig:non_induced_counts}. We observe that there is substantial variation in the count across subgraph type, e.g., the counts in 2013Q4 of the triangle $M_{7}^{(3)}$ and the 4-path $M_{13}^{(4)}$ differ by several orders of magnitude. For much of the sample, the slope of the log count is close to linear, indicating approximately constant percentage growth in subgraph counts over time. In part, the similar slopes across subgraphs reflect the correlation in non-induced counts, e.g., adding a triangle to a network will also increase the non-induced 3-star count. Because the number of edges increases almost linearly over time, we focus on the relationship between the subgraph count and the number of edges in the network.

Figure \ref{fig:log_log_slope_all_subgraphs} summarizes the estimated slope coefficient and coefficient of determination $R^{2}$ from a least squares regression of $\log |M_{a}^{(b)}|$ on a constant and $\log(m)$, across the 60 quarters of the sample, for each of the subgraphs in Table \ref{fig:motif_notation}. Numerical results are reported in Table \ref{tab:log_log_results}. Together, these results suggest that each non-induced subgraph count can be well-approximated by a power-law $|M_{a}^{(b)}| = A \, m^{\beta}$, where $A$ is a constant, $m$ is the number of edges in the graph, and $\beta$ is the slope coefficient from the log-log regression. Hence, as $m$ increases by a multiplicative factor $\kappa$, the subgraph count will increase by a factor $\kappa^{\beta}$. We would not expect this scaling to arise by tautology.\footnote{For excellent surveys of research on empirical power-laws in economics and finance (including firm and city sizes, and CEO compensation), with discussion of theoretical mechanisms such as random growth that result in scaling behaviour, and of economic complexity more generally, see \cite{durlauf05, gabaix09, gabaix16}. Power-laws in a variety of real-world datasets are described by \cite{clauset_etal09}, who discuss statistical methods that can distinguish between power-laws and alternative models; and \cite{gabaix_etal03} present a model of power-law movements in stock prices, and in the volume and number of financial trades.} These results lead us to make two observations, which we address below:

\begin{itemize}
\item It appears that the slope $\beta$ is closely related to the number of nodes $b$ \emph{in each subgraph}, since $\beta \approx b - 1$ in Figure \ref{fig:log_log_slope_all_subgraphs} and Table \ref{tab:log_log_results}. We observe little evidence of any power-law scaling for induced subgraphs.

\item The scaling seems to hold across a wide range of network sizes ($m$), and appears robust to the large fall in network density that is observed during the last couple of years of the sample (Figure \ref{fig:density_WN_2013_4}). We see later that this apparent robustness misses important signs of regime-switching scaling (Section \ref{sec:regime_switch_scaling}).
\end{itemize}

\begin{figure}\centering
    	\includegraphics[width=.65\linewidth, keepaspectratio]{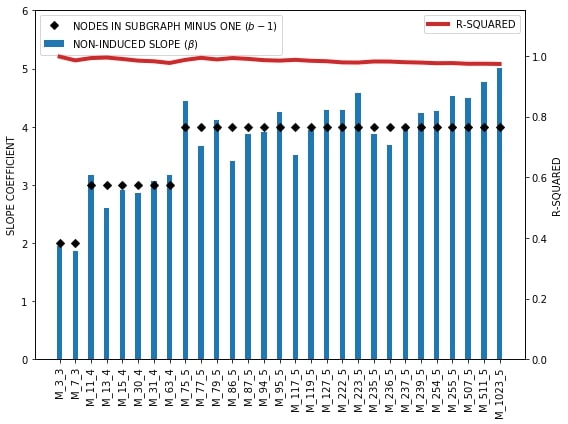}
    	\caption{A summary of log-log least squares regressions of non-induced subgraph count on number of edges, for Southwest's network, over the period 1999Q1 to 2013Q4 (60 datapoints). The vertical blue bars represent the estimated slope coefficient $\beta$. The red line plots the computed coefficient of determination $R^{2}$. The black diamonds correspond to $b-1$, where $b$ is the number of nodes in each respective subgraph. There is some evidence that $\beta \approx b-1$ for each 3-node, 4-node and 5-node subgraph, indicating a power-law relationship between subgraph count and edges.}
    	\label{fig:log_log_slope_all_subgraphs}
\end{figure}

\noindent To start with, is it surprising that there is \emph{any} scaling behaviour in Southwest's network? To build some intuition, we consider the implied scaling from three standard random and deterministic theoretical network models, each of which imposes a specific relationship between the number of edges $n$ and the number of edges $m$.

\begin{example}[Subgraphs in the Erd{\H{o}}s-R{\'{e}}nyi random graph]
Let $G = G(n,p)$ be an Erd{\H{o}}s-R{\'{e}}nyi random graph with expected degree $\mathbb{E}(k) = (n-1) \, p$, where $n$ is allowed to vary, but the edge-formation probability $p > 0$ is fixed. A general scaling result for the count of a non-induced subgraph, with $b$ nodes and $c$ edges, is given by \cite[eqn. (7)]{itzkovitz_alon05}
\begin{equation}\label{eq:erdos_renyi_scaling}
    \mathbb{E}(|M_{a}^{(b)}|) \sim n^{b-c} (\mathbb{E}(k))^{c} \sim n^{b} \, p^{c},
\end{equation}
as $n \to \infty$. It is straightforward to verify (\ref{eq:erdos_renyi_scaling}) for particular non-induced subgraphs, by combinatorial methods. Using linearity of expectation, it is sufficient to find the number of each subgraph $S$ of interest in $K_{n}$, and then to multiply this count by $p^{|E(S)|}$ to give the count in $G(n, p)$. Consider the expected tadpole count $\mathbb{E}(|M_{15}^{(4)}|)$. There are $\binom{n}{3}$ triangles in $K_{n}$. Let one of the 3 corners of a triangle be the ``center'' (degree 3 node) of a tadpole. A tadpole is formed by linking the center to one of $(n - 3)$ other nodes in $K_{n}$. Hence, $\mathbb{E}(|M_{15}^{(4)}|) = 3 \, \binom{n}{3} \, (n-3) \, p^{4} = 12 \, \binom{n}{4} \, p^{4} \sim (1/2) \, n^{4} \, p^{4}$. Alternatively, we can specialize the analytic tadpole count (\ref{eq:m_15_4}) from Theorem \ref{thm:analytic_count} to $K_{n}$:
\begin{equation*}
|M_{15}^{(4)}| = \frac{1}{2} \sum_{i:k_{i} > 2}(g^{3})_{ii} \, (k_{i}-2),
\end{equation*}
noting that $k_{i} = (n-1)$ for all $i$. Upon applying \cite[Lemma 3.1]{lawford20} to give $(g^{3})_{ii} = (n-1)(n-2)$, the result follows immediately. It is easy to derive analytic formulae for expected induced subgraph counts in $G(n, p)$ by
\begin{equation}\label{eq:erdos_renyi_scaling_induced}
    \mathbb{E}(|\widetilde{M}_{a}^{(b)}|) = \mathbb{E}(|M_{a}^{(b)}|) \times (1-p)^{\binom{b}{2} - c}.
\end{equation}
Continuing our example, $\mathbb{E}(|\widetilde{M}_{15}^{(4)}|) = 12 \, \binom{n}{4} \, p^{4} \, (1-p)^{2} \sim (1/2) \, n^{4} \, p^{4} \, (1-p)^{2}$. We note that $\mathbb{E}(m) = \binom{n}{2} \, p$ implies that
\begin{equation*}
n = \frac{1}{2}\left(1 + \sqrt{1 + \frac{8 \mathbb{E}(m)}{p}}\right)
\end{equation*}
exactly for $p \neq 0$. Combining this with $\mathbb{E}(|M_{a}^{(b)}|) \sim n^{b} \, p^{c}$, and noting that $\mathbb{E}(m) \to \infty$ as $n \to \infty$, we have $\mathbb{E}(|M_{a}^{(b)}|) \sim 2^{b/2} \, p^{c-b/2} \, \mathbb{E}(m)^{b/2}$ and so
\begin{equation*}
    \log(\mathbb{E}(|M_{a}^{(b)}|)) \sim \mathrm{constant} + \frac{b}{2} \, \log(\mathbb{E}(m))
\end{equation*}
as $\mathbb{E}(m) \to \infty$. Hence, we would expect the slope of a log-log regression of non-induced subgraph count on number of edges, for a \emph{given} realization of $G(n,p)$, to be close to $\beta = b/2$ as the number of edges increases. From (\ref{eq:erdos_renyi_scaling_induced}), we would also expect to see the same slope for \emph{induced} subgraph scaling. We verified both observations numerically.

\end{example}

\begin{example}[Star subgraphs in the deterministic star graph]
Now consider the deterministic model based upon the star $S_{1, n-1}$ and count the number of $b$-star subgraphs. Since $m = n - 1$, it follows immediately from (\ref{eq:m_3_3}) and (\ref{eq:m_11_4}) that (3-star) $|M_{3}^{(3)}| \sim (1/2) \,  m^{2}$ and (4-star)  $|M_{11}^{(4)}| \sim (1/6) \, m^{3}$, with implied log-log slopes of $\beta = 2$ and $\beta = 3$ respectively. In general, the number of $b$-star subgraphs in $S_{1,n-1}$ equals
\begin{equation*}
\binom{n-1}{b-1} \sim \frac{n^{b-1}}{(b-1)!} \sim \frac{m^{b-1}}{(b-1)!}
\end{equation*}
as $m \to \infty$, with an implied slope $\beta = b-1$ for both non-induced and induced stars.

\end{example}

\begin{example}[Star and triangle subgraphs in the deterministic circle graph]
While it is tempting to conjecture that $\beta$ always increases in the number of nodes $b$ in the subgraph, an easy counterexample shows that this is not true. Consider the number of $b$-star subgraphs in the circle graph $C_{n}$. Since each node has degree $k_{i} = 2$, it follows from (\ref{eq:m_3_3}) that $|M_{3}^{(3)}| = n = m$, with implied slope $\beta = 1$. However, there are \emph{no} $b$-star subgraphs in the circle $C_{n}$, and so the implied slope is $\beta = 0$ for all $b > 3$. Similarly, there will be no triangles in any circle $C_{n}$.

\end{example}

It seems that any scaling behaviour in real-world networks will generally depend upon (i) the number of nodes $b$ in the subgraph, (ii) the topology of the subgraph for a given $b$, e.g., the 3-star or the triangle, and (iii) the properties of the graph $G$ in which these subgraphs are contained, and the way in which the topology of $G$ evolves as $n$ increases, including the implied relationship between the numbers of nodes and edges in $G$. Moreover, the underlying model of evolution might change over time. It is unclear whether we would find $\beta \approx b-1$ in other real-world networks of interest. Clearly, the scaling behaviour of Southwest's network across the full sample is very different to that of Erd{\H{o}}s-R{\'{e}}nyi, even though the graphs have some common topological features such as similar average path length.

\subsubsection{Robustness of scaling to changes in network evolution in a toy regime-switching model}\label{sec:scaling_regime_switch}
We now focus on the observation that the scaling in Figure \ref{fig:log_log_slope_all_subgraphs} appears to be robust to a significant change in network evolution: in 2012 and 2013, the net number of routes increased at a much slower rate, relative to the net number of airports, than it did before 2012, with a resulting fall in network density (Figure \ref{fig:density_WN_2013_4}). We obtain analytic results for a toy regime-switching model of network evolution, and show how apparently robust scaling can appear in non-induced subgraph counts, despite significant underlying changes in the dynamics.

Consider a deterministic dynamic network model that starts with two connected nodes and adds one additional node in each subsequent time period. There are two regimes, where $\ell$ is the total number of nodes in the network in a given time period:

\begin{itemize}
\item (\textbf{Regime 1}) For $\ell \leq n^{\star}$, the network evolves as an $n$-star. One of the initial two nodes is chosen to be the (fixed) center, and each subsequent node links only to the center node.

\item (\textbf{Regime 2}) For $\ell > n^{\star}$, each subsequent node links to \emph{all} existing nodes.
\end{itemize}

\noindent So, $n^{\star}$ is the network size at which the model of evolution switches from Regime 1 to Regime 2. The network will evolve as an $n$-star for $n \leq n^{\star}$ and will become increasingly like a complete graph as $n > n^{\star}$ and $n$ becomes large. See Figure \ref{fig:regime_switch} for an illustration, with $n^{\star} = 5$. Consider the number of 3-stars in the combined network described by Regimes 1 and 2.

    \begin{figure}\centering
    	\begin{subfigure}{0.32\textwidth}
    		\centering
    		\includegraphics[width=.5\linewidth, keepaspectratio]{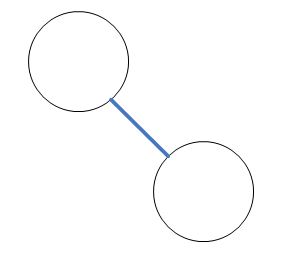}
    		\caption{$\ell=2$ nodes.}
    		\label{figregime_switch_step_2}
    	\end{subfigure}
    	\begin{subfigure}{0.32\textwidth}
    		\centering
    		\includegraphics[width=.68\linewidth, keepaspectratio, height=.165\textheight]{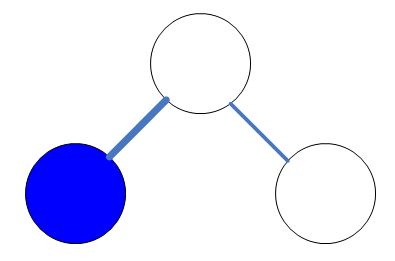}
    		\caption{$\ell=3$ nodes.}
    		\label{fig:regime_switch_step_3}
    	\end{subfigure}
    		\begin{subfigure}{0.32\textwidth}
    		\centering
    		\includegraphics[width=.5\linewidth, keepaspectratio]{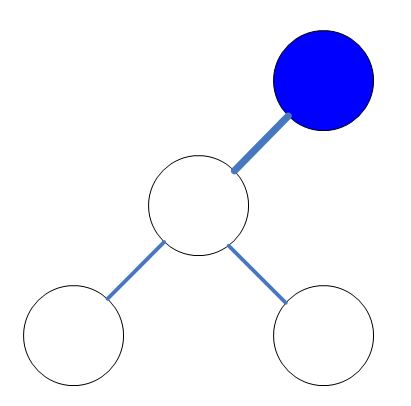}
    		\caption{$\ell=4$ nodes.}
    		\label{fig:regime_switch_step_4}
    	\end{subfigure}
    	\begin{subfigure}{0.32\textwidth}
    		\centering
    		\includegraphics[width=.5\linewidth, keepaspectratio]{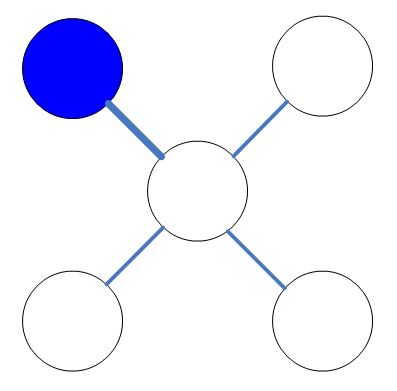}
    		\caption{$\ell=n^{\star}=5$ nodes.}
    		\label{fig:regime_switch_step_5}
    	\end{subfigure}
    	\begin{subfigure}{0.32\textwidth}
    		\centering
    		\includegraphics[width=.58\linewidth, keepaspectratio, height=.175\textheight]{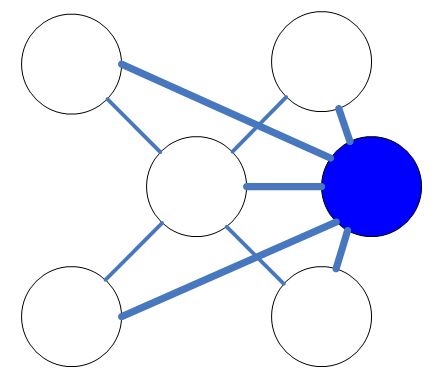}
    		\caption{$\ell=6$ nodes.}
    		\label{fig:regime_switch_step_6}
    	\end{subfigure}
    		\begin{subfigure}{0.32\textwidth}
    		\centering
    		\includegraphics[width=.6\linewidth, keepaspectratio]{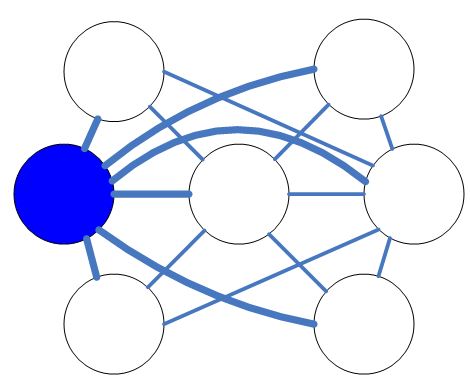}
    		\caption{$\ell=7$ nodes.}
    		\label{fig:regime_switch_step_7}
    	\end{subfigure}
    	\caption{An illustrative regime-switching model on $\ell \geq 2$ nodes, with regime change after $\ell = n^{\star} = 5$ (Section \ref{sec:scaling_regime_switch}). In each step, the new node and new edges are highlighted in bold. The model initially evolves as a star graph. After the regime switch, new nodes connect to all existing nodes, and the graph approaches a complete graph as $\ell \to \infty$.}
    	\label{fig:regime_switch}
    \end{figure}

\begin{itemize}

\item (\textbf{Regime 1}) Here, $\ell = 2, 3, \ldots, n^{\star}$ and $m = \ell - 1$. From (\ref{eq:m_3_3}), the number of non-induced 3-stars is given by $|M_{3}^{(3)}| = \binom{m}{2} = (1/2) \, m(m-1) \sim (1/2) \, m^{2}$ as $n^{\star}$ increases.

\item (\textbf{Regime 2}) Here, $\ell = n^{\star} + 1, n^{\star} + 2, \ldots, n$. When $\ell = n^{\star} + 1$, we obtain $m = (n^{\star} - 1) + n^{\star} = 2n^{\star} - 1$. When $\ell = n^{\star} + 2$, we have $m = (n^{\star} - 1) + n^{\star} + (n^{\star} + 1) = 3n^{\star}$. In general, and setting $a = \ell - n^{\star}$, we can show that the number of edges is given by

\begin{equation*}
    m = (a + 1)n^{\star} + \sum_{j=-1}^{a-1}j = (a + 1)n^{\star} + \sum_{j=1}^{a+1}(j-2) \nonumber = (a + 1)\left(n^{\star} + \frac{a}{2} - 1\right),
\end{equation*}

and so $m \sim (1/2) \, a^{2}$ as $a$ becomes large (so that $n \gg n^{\star}$). From (\ref{eq:m_3_3}), the non-induced 3-star count is $|M_{3}^{(3)}| = (1/2)\sum_{k \in \mathcal{D}}k(k-1)$, where $\mathcal{D}$ is the set of node degrees. In Regime 1, $\mathcal{D} = \{1, 1, \ldots, 1, \ell - 1\}$, where $\ell - 1$ nodes have degree 1. In Regime 2, $\mathcal{D} = \{(a + 1), \ldots, (a + 1), (n^{\star} + a - 1), \ldots, (n^{\star} + a - 1)\}$, where $n^{\star} - 1$ nodes have degree $a + 1$, and $a + 1$ nodes have degree $n^{\star} + a - 1$. Putting this all together, it follows that
\begin{equation*}
|M_{3}^{(3)}| = \frac{1}{2}(a + 1)\left(a(n^{\star} - 1) + (n^{\star} + a - 1)(n^{\star} + a - 2)\right) \sim \frac{1}{2}a^{3},
\end{equation*}
as $a$ becomes large. Using $m \sim (1/2) \, a^{2}$ and $|M_{3}^{(3)}| \sim (1/2) \, a^{3}$, we have $|M_{3}^{(3)}| \sim 2^{1/2} \, m^{3/2}$ in Regime 2.

\end{itemize}

\noindent Hence, there will be a transition in the implied scaling slope $\beta$ as the regime changes, from 2 in Regime 1 (if $n^{\star}$ is large) to 3/2 in Regime 2 (if $n$ is large relative to $n^{\star}$). Although there is a different degree of scaling in each regime, and a very different model of evolution before and after $n^{\star}$, if we ignore this and apply least squares to the entire sample ($\ell = 1, \ldots, n$) then the regression slope will be a weighted average of the slopes in the individual regimes.\footnote{In Figure \ref{fig:regime_smoothing} we illustrate the toy model using simulated data, with $\ell = 4, \ldots, n^{\star} = 20, \ldots, n = 30$: the slopes of 2 (in Regime 1) and 1.5 (in Regime 2) are averaged by the regression to 1.56, with a very high $R^{2}$ of 0.983.}

\begin{figure}\centering
    	\includegraphics[width=.55\linewidth, keepaspectratio]{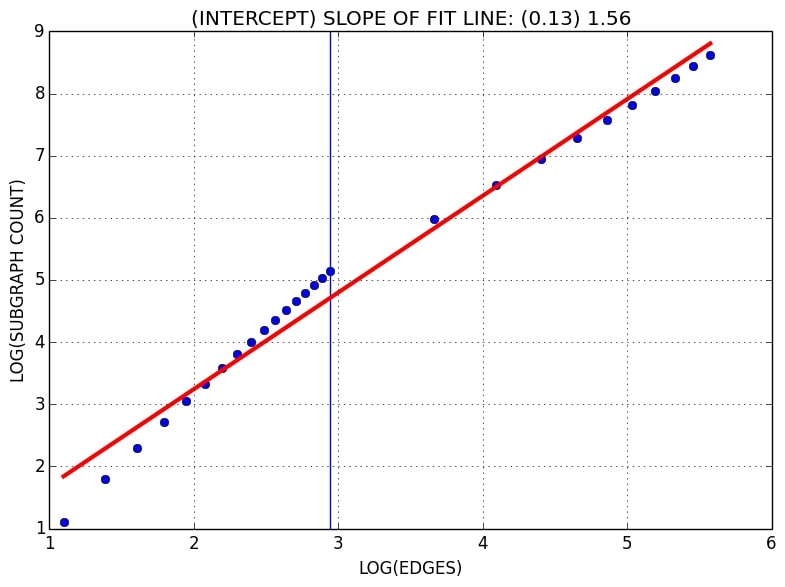}
    	\caption{Averaging of the scaling factor by log-log regression fit, for a toy regime-switching model (Section \ref{sec:scaling_regime_switch}), illustrated using simulated data, with $\ell = 4, \ldots, n = 30$, and breakpoint $n^{\star} = 20$ between Regimes 1 and 2.}
    	\label{fig:regime_smoothing}
\end{figure}

\subsubsection{Evidence for a regime-switch in power-law scaling for Southwest's network}\label{sec:regime_switch_scaling}
It is possible that Southwest's network evolution changed substantially after about 2009, when the growth in network density appeared to stall, and that the regressions are averaging the scaling in two or more regimes. Indeed, the regression errors do seem to be consistently larger at the end (and start) of the sample, when $n$ is largest (smallest). We investigate by fitting log-log regressions to subsamples, from 1999 to 2008 (40 datapoints) and from 2009 to 2013 (20 datapoints). This provides surprising support for strong \emph{but different} power-law scaling before and after 2009, suggesting a significant change in the model of network evolution at or around that time. There is slightly weaker but still good evidence that $\beta \approx b - 1$ over 1999 to 2008 compared to the full sample (Figure \ref{fig:log_log_slope_all_subgraphs_regime1}) and quite convincing evidence that $\beta \approx b / 2$ over 2009 to 2013 (Figure \ref{fig:log_log_slope_all_subgraphs_regime2}). Of course, this is not proof that Southwest's network actually evolves as an Erd{\H{o}}s-R{\'{e}}nyi random graph after 2009 even though the implied scaling is the same. 

\subsection{Which subgraphs are motifs?}\label{sec:identify_motifs}
Do any induced subgraphs arise more (or less) often than we would expect at random? We consider the significance of subgraph counts against two randomized null networks, to detect: (a) 3-node motifs relative to Erd{\H{o}}s-R{\'{e}}nyi $G(n, p)$, (b) 3-node motifs relative to a degree-preserving rewiring of the original network, (c) 4-node motifs relative to a distribution that controls for the number of 3-node induced subgraphs in the network, and (d) 5-node motifs relative to a distribution that controls for the number of 3-node and 4-node induced subgraphs in the network. It is well-known that the choice of null distribution is of critical importance in determining which subgraphs are identified as motifs. Certainly, the null should have some of the properties of the real-world network.\footnote{See \cite[Appendix A]{itzkovitz_etal05} for some discussion of randomized ensembles subject to constraints. We also experimented with variants of the \emph{erased configuration model}, with and without some clustering, e.g., \cite{angel_etal17, newman09, schlauch_zweig15}, but found that these gave some self-loops and many multiple-edges. Since our networks are quite small, we cannot make use of the observation that these issues are not important asymptotically. The resulting randomized graphs have rather different properties to the real-world networks and so we did not use this approach.} It is also important to search for \emph{induced} rather than non-induced subgraph motifs, for two reasons. First, non-induced star subgraph counts depend only on $P(k)$. It follows that $|M_{7}^{(3)}|$ and $|M_{11}^{(4)}|$ and $|M_{75}^{(5)}|$ will be invariant to a degree-preserving rewiring, and so we will not be able to use that approach to find non-induced $b$-star motifs in general. Second, a motif is naturally interpreted as a specific (unique) topology on a given set of $b$ nodes, and it makes intuitive sense to look for induced subgraphs: a given set of nodes form a motif (or an anti-motif) when they do not have any more complicated topological interrelationship, and their topology is statistically over-represented (or under-represented) in the network. We perform inference using the z-score of each subgraph count.

\subsubsection{3-node motifs relative to $G(n, p)$}\label{sec:motif_g_n_p}
We note a result of Ruci{\'{n}}ski that gives the asymptotic distribution of the z-score relative to $G(n, p)$:

\begin{theorem}[Asymptotic normality of the z-score \cite{rucinski88}]\label{thm:z_statistic}
Let $J(n, p)$ be a random graph with nodes $V = \{1,\ldots,n\}$ and edges that arise independently with probability $p(n)$. Let $X_{n}$ denote the count of subgraphs of $J(n, p)$ that are isomorphic to a graph $G$. Define $\gamma := \max \{|E(G')| \big / |V(G')|: G' \subseteq G\}$, and let $\mathbb{E}(X)$ and $\var(X)$ be the expectation and variance of a random variable $X$. Then,
\begin{equation*}
    Z_{n} = \frac{X_{n} - \mathbb{E}(X_{n})}{\sqrt{\var(X_{n})}} \dto \rN(0,1),
\end{equation*}

as $n \to \infty$, if and only if $n \, (p(n))^{\gamma} \to \infty$ and $n^{2}(1 - p(n)) \to \infty$; and $\rN(0,1)$ is the standard normal.
\end{theorem}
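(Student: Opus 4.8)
The plan is to reduce the statement to a known limit theorem for subgraph counts in $G(n,p)$ and then identify the centered-and-scaled count with the $z$-score. The theorem of Ruci\'nski is a large-deviation-type CLT for subgraph counts, and the quantity $\gamma = \max\{|E(G')|/|V(G')| : G' \subseteq G\}$ is the well-known \emph{maximum subgraph density} that governs the threshold for appearance of $G$ in the random-graph process; so the natural route is to invoke the martingale/method-of-moments machinery that underlies these results rather than to re-derive it. First I would make the standard reduction: write $X_n = \sum_{\alpha} I_\alpha$, where the sum is over all injective (up to automorphism) placements $\alpha$ of the vertices of $G$ into $V=\{1,\dots,n\}$, and $I_\alpha$ is the indicator that all $|E(G)|$ required edges are present; each $I_\alpha$ is Bernoulli with mean $p^{|E(G)|}$, and $\E(X_n)$ and $\var(X_n)$ are then finite sums of products of such means over pairs $\alpha,\alpha'$ sharing a common subgraph. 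The covariance structure is controlled precisely by the densest shared piece, which is where $\gamma$ enters.

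The core of the argument is to show asymptotic normality of the normalized sum $Z_n$. I would proceed by computing the standardized moments $\E(Z_n^k)$ and showing they converge to the Gaussian moments; the combinatorial bookkeeping organizes contributions by the ``overlap pattern'' of the $k$ placements, and one shows that the dominant contribution comes from placements that pair up with exactly one shared vertex (or are disjoint), which reproduces the Wick/Gaussian pairing. The condition $n\,(p(n))^\gamma \to \infty$ is exactly what forces $\var(X_n)\to\infty$ in a way dominated by the single-vertex-overlap terms — equivalently, the expected number of copies of the densest subgraph $G'$ tends to infinity, so fluctuations are not dominated by any sparse ``rare'' subconfiguration. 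The condition $n^2(1-p(n))\to\infty$ is the ``not-too-dense'' requirement: near $p=1$ the complement becomes sparse and the count degenerates (e.g.\ for the triangle, if $1-p$ is too small there are essentially no non-edges and the centered count collapses), so this guarantees enough randomness remains. For the converse (the ``only if''), I would exhibit, when either condition fails, a subconfiguration whose presence/absence dominates $Z_n$ and drives it to a non-Gaussian (e.g.\ shifted Poisson or degenerate) limit.

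The main obstacle is the variance computation and the identification of the leading term: one must show that among the $O(n^{2|V(G)|})$ pairs $(\alpha,\alpha')$, the sum of covariances is asymptotically equivalent to the contribution of pairs overlapping in a single vertex, and that this leading term has order $n^{2|V(G)|-1} p^{2|E(G)|}$ up to constants — this is precisely the step where $\gamma$ and the hypothesis $n\,p^\gamma\to\infty$ must be used to dominate the ``intermediate overlap'' terms indexed by proper subgraphs $G'\subseteq G$. Once the variance asymptotics and the moment convergence are in hand, the final step is bookkeeping: by definition $Z_n=(X_n-\E X_n)/\sqrt{\var X_n}$ is literally the $z$-score of the subgraph count, so the stated convergence $Z_n \dto \rN(0,1)$ is immediate. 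For the application in the paper one only needs the 3-node subgraphs ($M_3^{(3)}$ with $\gamma=1$, and $M_7^{(3)}$ with $\gamma=\tfrac{3}{3}=1$), for which the hypotheses reduce to $np\to\infty$ and $n^2(1-p)\to\infty$, both comfortably satisfied when $p$ is fixed at a network density bounded away from $0$ and $1$.
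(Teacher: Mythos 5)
The paper does not actually prove this statement: Theorem \ref{thm:z_statistic} is imported verbatim from \citet{rucinski88}, and the only original content surrounding it is the Remark verifying that the hypotheses ($n\,p^{\gamma}\to\infty$, $n^{2}(1-p)\to\infty$) are satisfied for the eight subgraphs of interest when $p$ is constant in $(0,1)$. Your final paragraph reproduces exactly that verification, so that part is fine and is all the paper itself requires. Your first three paragraphs, however, attempt to sketch Ruci\'nski's proof, and there is a concrete error in the step you identify as the core of the argument. You claim the dominant contribution to $\var(X_{n})$ comes from pairs of placements ``that pair up with exactly one shared vertex (or are disjoint)'' and that the leading term has order $n^{2|V(G)|-1}p^{2|E(G)|}$. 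But if two copies of $G$ share vertices without sharing any edge, their indicators $I_{\alpha}, I_{\alpha'}$ depend on disjoint sets of independent edge variables and are therefore independent: $\cov(I_{\alpha},I_{\alpha'})=p^{2e(G)-e(H)}-p^{2e(G)}=0$ whenever the overlap subgraph $H$ has $e(H)=0$. Disjoint and single-vertex overlaps contribute nothing. The minimal overlap with nonzero covariance is a shared edge, and for constant $p$ the leading term of the variance is of order $n^{2|V(G)|-2}\,p^{2|E(G)|-1}(1-p)$, coming from pairs sharing exactly one edge. This also explains why the second condition is $n^{2}(1-p)\to\infty$ rather than something weaker: the $(1-p)$ factor sits in precisely this leading covariance term, and the $n^{2}$ counts edge slots.

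Beyond that misidentification, the sketch leaves the genuinely hard parts unaddressed: showing that the intermediate-overlap terms indexed by denser subgraphs $G'\subseteq G$ are dominated uniformly over the whole admissible range of $p(n)$ (this is where $\gamma$ and $n\,p^{\gamma}\to\infty$ actually enter, via the expected count $n^{v(G')}p^{e(G')}$ of every subgraph tending to infinity), carrying out the moment convergence or an equivalent projection/martingale argument, and proving the ``only if'' direction by exhibiting a non-Gaussian limit when either condition fails. These occupy the bulk of Ruci\'nski's paper and cannot be waved through. For the purposes of this paper you do not need any of it --- citing the theorem and checking the hypotheses, as in your last paragraph and in the paper's Remark, suffices --- but as a proof of the stated theorem the proposal has a gap at its central variance computation.
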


\begin{remark}
We do not require the full strength of the result. Note that $\gamma>0$ for each of the subgraphs that we consider, where $\gamma$ is one half of the largest average degree across all subgraphs $G'$ of $G$. If we assume that $p=p(n)$ is $O(1)$ and does not go to zero with $n$, which is supported by our real-world data (Figure \ref{fig:density_WN_2013_4}), then the theorem holds. When $p = p(n)$ is constant in $n$, then $J(n, p)$ reduces to $G(n, p)$. If $p = 0$ (a set of isolated nodes) or $p = 1$ (a complete graph), there is no variation in the subgraph count, and the result does not hold.
\end{remark}

\begin{figure}\centering
    	\includegraphics[width=.65\linewidth, keepaspectratio]{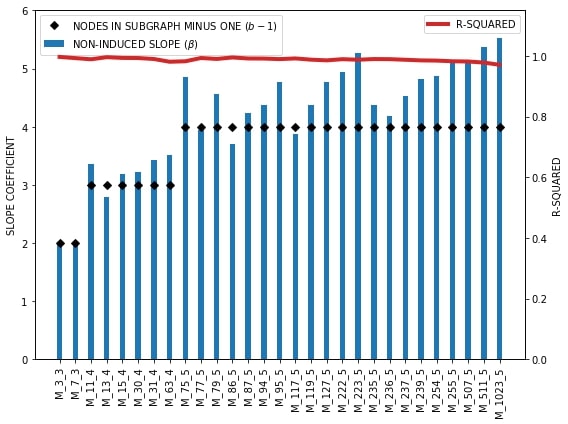}
    	\caption{A summary of log-log least squares regressions of non-induced subgraph count on number of edges, for Southwest's network, over the period 1999Q1 to 2008Q4 (40 datapoints). There is evidence that $\beta \approx b-1$ for each 3-node, 4-node and 5-node subgraph, indicating a power-law relationship between subgraph count and edges.}
    	\label{fig:log_log_slope_all_subgraphs_regime1}
\end{figure}

\begin{figure}\centering
    	\includegraphics[width=.65\linewidth, keepaspectratio]{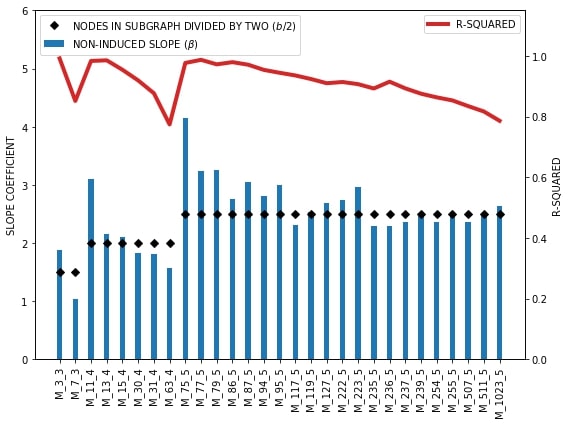}
    	\caption{A summary of log-log least squares regressions of non-induced subgraph count on number of edges, for Southwest's network, over the period 2009Q1 to 2013Q4 (20 datapoints). There is evidence that $\beta \approx b/2$ for each 3-node, 4-node and 5-node subgraph, indicating a different power-law relationship between subgraph count and edges to Figure \ref{fig:log_log_slope_all_subgraphs_regime1}.}
    	\label{fig:log_log_slope_all_subgraphs_regime2}
\end{figure}

We compute the expected number of subgraphs in $G(n, p)$ using:
\begin{equation*}
\mathbb{E}(|M_{3}^{(3)}|) = 3 \binom{n}{3} p^{2}, \quad \mathbb{E}(|M_{7}^{(3)}|) = \binom{n}{3} p^{3}, \quad \mathbb{E}(|\widetilde{M}_{3}^{(3)}|) = 3 \binom{n}{3} p^{2} (1-p),
\end{equation*}
and simulate the variance of the count by 1,000 replications from $G(n, p)$, with edge-probability $p$ set equal to the density $d(G)$ of the real network, discarding any realizations that are not connected. For each quarter in the full sample, we compute the z-score for the induced 3-star and the triangle (Figure \ref{fig:z_scores}); we include the z-score for the \emph{non-induced} 3-star for reference. It is only strictly correct to search for 3-node motifs with reference to Erd{\H{o}}s-R{\'{e}}nyi, since $G(n, p)$ only matches the number of 1-node and 2-node subgraphs (nodes and expected edges) in the real-world network. We observe that (a) the z-scores increase over time, which corresponds to a general increase in the size of the network ($n$), and they are correlated across subgraphs, (b) the non-induced 3-star is highly significant in every period, which might lead us to conclude (incorrectly) that the induced 3-star is a motif too --- this illustrates the importance of searching for induced motifs directly, (c) the triangle is a motif across the full sample, and (d) the induced 3-star is a motif from 2003 onwards.\footnote{See \cite{itzkovitz_alon05}, who study the occurrence of subgraphs in geometric network models, with nodes arranged on a lattice, and edges arising at random with a probability that decreases in the distance between nodes. Relative to Erd{\H{o}}s-R{\'{e}}nyi, they show that all subgraphs with at least as many edges as nodes (in the subgraph) will be motifs as $n \to \infty$, if the real-world and Erd{\H{o}}s-R{\'{e}}nyi networks have the same expected degree $\mathbb{E}(k)$. They give a similar result for heavy-tailed random networks.} We can interpret these results as follows:
\begin{itemize}
\item There is more clustering (triangles) in the real-world network than in $G(n, p)$, and this increases over time. While clustering coefficients indicate the higher clustering, they do not show the dynamic increase relative to $G(n, p)$ that is suggested by the triangle motif (Figure \ref{fig:clustering}).

\item There are more ``spokes'' (induced 3-stars) than in $G(n, p)$, from 2003 onwards. Nevertheless, Southwest's network has similar average path lengths to an Erd{\H{o}}s-R{\'{e}}nyi random network (Figure \ref{fig:diameter_apl}).

\end{itemize}

\begin{figure}\centering
    	\includegraphics[width=.6\linewidth, keepaspectratio]{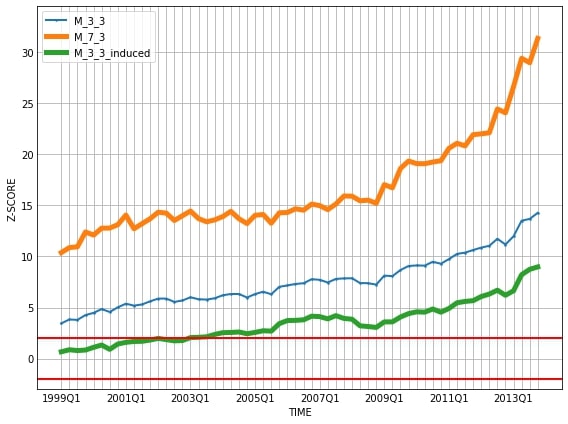}
    	\caption{The z-scores for the non-induced (thin line) and induced 3-star, and the triangle, relative to $G(n, p)$. Mean subgraph counts are computed using analytic formulae. The variance of each subgraph count is computed numerically, by 1,000 draws from $G(n, p)$. The horizontal red lines represent approximate 95\% critical values, $\pm 2$. (Section \ref{sec:motif_g_n_p})}
    	\label{fig:z_scores}
    	\vspace{1cm}
    	\includegraphics[width=.6\linewidth, keepaspectratio]{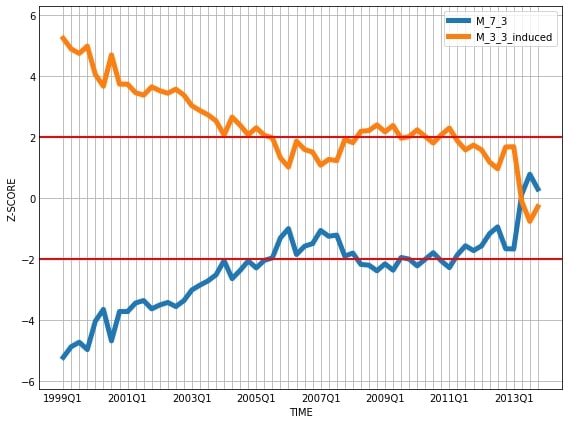}
    	\caption{The z-scores for the induced 3-star and triangle, relative to a degree-preserving rewiring of the real-world network $G$. The mean and variance of the subgraph count are computed numerically, by 1,000 draws from the randomized ensemble. Bootstrap p-values are used for inference, with 100 bootstrap replications. Since bootstrap and standard normal p-values are similar, we refer to the horizontal red lines, which represent the approximate 95\% critical values, $\pm 2$. (Section \ref{sec:motif_rewiring})}
    	\label{fig:z_scores_rewiring}
\end{figure}

\noindent While some authors consider motifs relative to $G(n, p)$, e.g., \cite{prill_etal05}, this null only matches the number of nodes and expected edges, and so we now also match the degree distribution $P(k)$ of the real-world network. Theorem \ref{thm:z_statistic} no longer applies, and the asymptotic distribution of the z-score is not generally known, so we use bootstrap p-values to assess statistical significance.

\subsubsection{3-node motifs relative to a degree-preserving rewiring}\label{sec:motif_rewiring}
In Figure \ref{fig:degree_distribution}, we plot the degree distributions of Southwest's network (kernel density estimates) and the corresponding $G(n, p)$. While $G(n, p)$ matches $n$ and the density $d(G)$ of the real-world network, it cannot generate realizations that capture the ``hub-and-spoke'' nature of the observed degree distribution $P(k)$. In this section, we use a null distribution that matches $P(k)$, by a Markov-chain degree-preserving rewiring of $G$. This method, and that described in the following sections, is based upon \cite{milo_etal02}. Starting from the observed network $G$, we select one pair of edges $(x_{1}, y_{1}) \in E$ and $(x_{2}, y_{2}) \in E$ at random, such that the nodes are all distinct, and both $(x_{1}, y_{2}) \notin E$ and $(x_{2}, y_{1}) \notin E$. Then, edges $(x_{1}, y_{1})$ and $(x_{2}, y_{2})$ are replaced by edges $(x_{1}, y_{2})$ and $(x_{2}, y_{1})$.
The edge-switching is repeated until $G$ has been sufficiently randomized. The resulting graph will have the same number of nodes $n$ and edges $m$ as the original graph, and the same degree distribution $P(k)$ but, in general, a \emph{different} topology.

In Figure \ref{fig:z_scores_rewiring}, we plot the z-scores of the induced 3-star and the triangle. Normal and bootstrap p-values give similar results, and so we refer to the $\rN(0,1)$ critical values. The results are strikingly different to those for the Erd{\H{o}}s-R{\'{e}}nyi $G(n, p)$ null in Figure \ref{fig:z_scores}. We see that:
\begin{itemize}
\item The induced 3-star is a motif from 1999--2005 and again from 2008--2011; it has become notably less significant from 2012 onwards.

\item The triangle has exactly the opposite interpretation, as an anti-motif. This follows by construction: comparing the z-score ($z_{1}$) of $|\widetilde{M}_{3}^{(3)}|$ and the z-score ($z_{2}$) of $|M_{7}^{(3)}|$, and using $|\widetilde{M}_{3}^{(3)}| = |M_{3}^{(3)}| - 3 \, |M_{7}^{(3)}|$ from Theorem \ref{thm:analytic_count_not_nested}, and the fact that $|M_{3}^{(3)}|$ is invariant to rewiring, which gives $\mathbb{E}(|M_{3}^{(3)}|) = |M_{3}^{(3)}|$ and $\var(|M_{3}^{(3)}|) = 0$, it is easy to see that $z_{1} = - z_{2}$. Hence, if one of these subgraphs is a motif, then the other will be an anti-motif; however, both subgraphs can be insignificant (not motifs) together.
\end{itemize}

\noindent Together, these results tell us that triangles (clustering) have become much more prevalent over time, while the importance of 3-stars (spokes) has decreased. This is surprising given the fall in network density over the same period (Figure \ref{fig:density_WN_2013_4}).

\subsubsection{4-node motifs relative to a degree-preserving rewiring that controls for 3-node subgraphs}\label{sec:motif_simulated_annealing}
There might be a large number of 4-node subgraphs simply because there is a significant number of 3-node subgraphs in the network. Hence, we search for 4-node motifs, controlling for the number of 1-node, 2-node and 3-node induced subgraphs. The null distribution is generated as follows, starting from the observed graph $G$. We first perform a degree-preserving rewiring as described in Section \ref{sec:motif_rewiring}, until a sufficient degree of randomness has been attained. We then use simulated annealing \cite{eglese90}, with successive edge-pair switches, to match the number of 3-node induced subgraphs to those in the original graph $G$. Simulated annealing attempts to avoid local optima by sometimes accepting a rewiring which increases the value of the optimization function.\footnote{Specifically, we minimize the function $\mathrm{Energy} = \sum_{i} |(\theta_{\mathrm{real}})_{i} - (\theta_{\mathrm{rand}})_{i}| \big / ((\theta_{\mathrm{real}})_{i} + (\theta_{\mathrm{rand}})_{i})$ by performing edge-pair switches on the already randomized graph, where the induced 3-node subgraph counts in the real and randomized (rand) data are given by $\theta_{\cdot} = (|\widetilde{M}_{3}^{(3)}|, |M_{7}^{(3)}|)^\intercal$. In our notation, we suppress the dependence of Energy and $\theta_{\cdot}$ on the current ``time'' $t$ spent in the optimization. We define the slowly-decaying \emph{temperature} function $\Psi(t+1) = \Psi(t) / \log(t+1)$, with initial value $\Psi(1)=100$. At each time step, a random edge-switch is accepted if it reduces the current Energy, and is otherwise accepted with probability $e^{-|\Delta \, \mathrm{Energy}| / \Psi(t)}$, where $\Delta \, \mathrm{Energy}$ is the difference in Energy before and after the edge-switch. One edge-switch is performed at each temperature level, and the stopping criterion is achieved when $\mathrm{Energy} < 0.00001$.} In Figure \ref{fig:z_scores_sa}, we plot the z-score for each 4-node induced subgraph. We observe that:
\begin{itemize}
    \item The induced 4-star is a strong motif for most of the sample, although it becomes less significant over time.
    
    \item The induced 4-circle and diamond are borderline motifs for much of the sample (2002--2013).
    
    \item The induced 4-path and tadpole are strong anti-motifs for the entire sample.
    
    \item The 4-complete was an anti-motif over 1999--2006 but has become progressively more significant since then, and was a borderline motif in 2013.
\end{itemize}

\noindent These results suggest that the importance of spoke airports (4-star) in Southwest's network has fallen over time, consistent with the findings of Section \ref{sec:motif_rewiring}, while clustered groups of airports (diamond and 4-complete) have gained or maintained a level of importance. In particular, the rise of the 4-complete subgraph implies that new routes have created cliques among groups of airports that were not previously completely-connected, and gives us some new insight into the decision-making that underlies network evolution (see \cite{lawford_mehmeti20} for discussion of cliques in Southwest's network). The significance of the 4-circle is rather unexpected, since travel between two opposite ``corners'' of the subgraph requires a two-step trip. The under-representation of the tadpole and 4-path makes sense, since both patterns imply two-step or perhaps even three-step trips between some of the airports in the subgraph, with no possible shortcuts (and this would be very inefficient for both the carrier and passengers).

\begin{figure}\centering
    	\includegraphics[width=.65\linewidth, keepaspectratio]{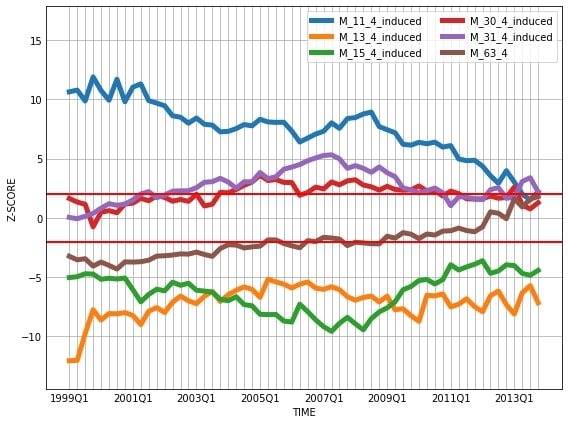}
    	\caption{The z-scores for the induced 4-star, 4-path, tadpole, 4-circle, diamond and 4-complete subgraphs, relative to a degree-preserving rewiring of the real-world network $G$, followed by a simulated annealing optimization that matches the number of induced 3-node subgraphs in $G$, in every time period. The mean and variance of the subgraph counts are computed numerically by 1,000 draws from the randomized ensemble. Bootstrap p-values are used for inference, with 100 bootstrap replications. Since bootstrap and normal p-values are similar, we refer to the horizontal red lines, which represent the approximate 95\% critical values, $\pm 2$. (Section \ref{sec:motif_simulated_annealing})}
    	\label{fig:z_scores_sa}
\end{figure}

\subsubsection{5-node motifs relative to a degree-preserving rewiring that controls for 3-node and 4-node subgraphs}\label{sec:motif_simulated_annealing_5_node}
Here, we search for 5-node motifs, controlling for the number of 1-node, 2-node, 3-node and 4-node induced subgraphs. The null distribution is generated as in Section \ref{sec:motif_simulated_annealing}, with additional control for 4-node subgraph counts. In Table \ref{tab:5_node_motifs_5_node}, we plot the z-score for each 5-node induced subgraph. We observe that:

\begin{itemize}
    \item The induced lollipop, ufo, chevron and hourglass subgraphs are the strongest candidates for motifs. There is some evidence of a transition in significance over the sample, in particular for the ufo subgraph.
    
    \item The induced banner, 5-circle, house and crown subgraphs are the most convincing anti-motifs.
\end{itemize}

Most of these results have reasonable intuitive explanations in a transportation network (cf. Table \ref{fig:motif_notation}). First, the ufo and chevron motifs suggest that some pairs of nodes will have multiple one-step or two-step paths between them. Second, the hourglass motif indicates a central ``hub'' node connected to four spoke nodes, with few short-cuts between the spokes that do not involve the hub. Interestingly, though, the addition of one edge to the hourglass gives the crown, which is an anti-motif. Third, the lollipop motif represents a triangle of nodes that are incident to a multi-step spoke. It is surprising that this is significant over the full sample, because it implies three-step paths between some pairs of airports, although shorter paths might be available if we include nodes that are not in the subgraph. Fourth, the banner, 5-circle and house anti-motifs would seem to be inefficient ways of connecting groups of five nodes, and this contrasts with the borderline 4-circle motif found in Section \ref{sec:motif_simulated_annealing}.

\subsection{Subgraph-based centrality measures}\label{sec:subgraph_centrality}
Node centrality measures are frequently used in applied work to rank nodes by their individual importance in a network. Standard measures include: (a) degree centrality, interpreted as the number of direct neighbours of node $i$, (b) closeness centrality, which characterizes the (inverse of the) average shortest path from a node $i$ to all other nodes, (c) betweenness centrality, which measures the number of times that a node $i$ acts as an ``intermediary'' in the sense of being on shortest paths between other pairs of nodes, and (d) eigenvector centrality, in which a node $i$ is more important when it is directly-connected to other more important nodes; see \cite{jackson08, jackson11}. These measures are highly correlated for many real-world and simulated networks, and thus give very similar rankings of nodes, e.g., \cite{wuchty_stadler03} report high correlations between three geometric centrality measures and the logarithm of node degree, on Erd{\H{o}}s-R{\'{e}}nyi and scale-free random graphs, and \cite{dossin_lawford17} find high linear correlations between degree, closeness, betweenness and eigenvector centralities, on unweighted and weighted real-world networks defined by the domestic route service of various U.S. airlines. The main theoretical treatment of centrality correlation is by \cite{bloch_etal16}, who argue that standard centrality measures are all characterized by the same related axioms.

In an effort to resolve the problem of high correlation, \cite{estrada_rodriguez-velazquez05} introduce \emph{subgraph centrality}
\begin{equation}\label{eq:subgraph_centrality}
    B_{S}(i) = \sum_{\tau = 0}^{\infty} \frac{(g^{\tau})_{ii}}{\tau!} = \sum_{j=1}^{n} (\nu_{j})_{i}^{2} \, e^{\lambda_{j}},
\end{equation}
for node $i$, where $\tau$ is the length of a closed walk, that can contain both cyclic and acyclic subgraphs. The second equality holds for simple graphs, where $\nu_{1},\ldots,\nu_{n}$ are the orthonormal eigenvectors of $g$, with eigenvalues $\lambda_{1},\ldots,\lambda_{n}$. Subgraph centrality measures the number of times that a node $i$ is at the start and end of closed walks of different length, with shorter lengths having greater influence.\footnote{Subgraph centrality is shown to have more discriminative power than standard centrality measures on some real-world networks by \cite{estrada_rodriguez-velazquez05}, and it is further discussed by the same authors in \cite{estrada_rodriguez-velazquez05b}.}

To illustrate, we use (\ref{eq:subgraph_centrality}) to rank nodes in Southwest's 2013Q4 network, and compare the top-ten rankings to degree centrality, in Table \ref{tab:subgraph_centrality}. We also suggest a new measure based on the number of induced subgraphs that are incident to node $i$. Formally, we define \emph{subgraph membership centrality}
\begin{equation}\label{eq:subgraph_membership_centrality}
B_{SM}\left(\widetilde{M}_{a}^{(b)};i\right)=\sum_{j \neq i}\mathbf{1}\left((i,j) \in E\left(\{\widetilde{M}_{a}^{(b)}\}\right)\right),
\end{equation}
where $\{\widetilde{M}_{a}^{(b)}\}$ is the set of all induced subgraphs of type $\widetilde{M}_{a}^{(b)}$ in the graph, and $\mathbf{1}(\cdot)$ is an indicator function. Some $B_{SM}(\cdot)$ require careful interpretation: for instance, a node might have high $B_{SM}(\widetilde{M}_{11}^{(4)})$ because it is the center of many 4-stars, or is frequently a spoke. To avoid ambiguity, we report this measure in Table \ref{tab:subgraph_centrality} for some regular subgraphs: the triangle ($M_{7}^{(3)}$), the induced 4-circle ($\widetilde{M}_{30}^{(4)}$), and the 4-complete ($M_{63}^{(4)}$). We find that:
\begin{itemize}
\item The top-ten rankings for $DC$ and $B_{S}$ and $B_{SM}(M_{7}^{(3)})$ include the same set of nodes, and are very similar. The correlations in 2013Q4 across \emph{all} nodes are 98.8\% (between $DC$ and $B_{S}$) and 98.9\% (between $DC$ and $B_{SM}(M_{7}^{(3)})$). High degree nodes are associated with more closed walks of all lengths, and with triangles.

\item The top-ten rankings for $B_{SM}(\widetilde{M}_{30}^{(4)})$ and $B_{SM}(M_{63}^{(4)})$ display several interesting differences to $DC$. First, Denver (DEN) is the top-ranked node by 4-complete membership. Second, the induced 4-circle $B_{SM}(\widetilde{M}_{30}^{(4)})$ rankings are very different to $DC$, and include eight ``new'' nodes, with Orlando (MCO) top-ranked. In 2013Q4, $DC$ and $B_{SM}(\widetilde{M}_{30}^{(4)})$ are correlated at only 59.5\%. This shows, surprisingly given the results on $B_{S}$, that different nodes can become important when one considers membership of \emph{particular} topologies.

\item The 2013Q4 correlations between degree centrality $B_{SM}$ for \emph{non-induced} subgraphs are all very high (96\%--99\%), and the latter do not give us a more informative measure here than degree centrality.

\end{itemize}

\begin{table}[!htp]
    \begin{flushleft}
        \begin{threeparttable}
            \onehalfspacing
            \caption{Comparison of top-ten node rankings in 2013Q4 by degree centrality ($DC$), Estrada and {Rodr{\'{i}}guez-Vel{\'{a}}zquez}'s subgraph centrality ($B_{S}$), and subgraph membership centrality ($B_{SM}$) based on the triangle $M_{7}^{(3)}$, the 4-circle $\widetilde{M}_{30}^{(4)}$, and the 4-complete $M_{63}^{(4)}$. Calculated values of each centrality measure are reported in parentheses. (Section \ref{sec:subgraph_centrality})}\label{tab:subgraph_centrality}
            \begin{footnotesize}
                \begin{tabular*}{\linewidth}{@{\extracolsep{\fill}}l*{5}{c}}
                    \toprule
                    & \multicolumn{5}{c}{Centrality Measure}\\
                    \cline{2-6}
                    Ranking & $DC$ & $B_{S}$ & $B_{SM}(M_{7}^{(3)})$ & $B_{SM}(\widetilde{M}_{30}^{(4)})$ & $B_{SM}(M_{63}^{(4)})$\\
                    \midrule
                    1. & MDW (0.71) & MDW (2,681,447) & MDW (342) & MCO (446) & DEN (967) \\
                    2. & LAS (0.64) & LAS (2,510,649) & LAS (335) & TPA (353) & LAS (956) \\
                    3. & DEN (0.62) & DEN (2,449,785) & DEN (333) & DAL (295) & MDW (929) \\
                    4. & BWI (0.57) & PHX (2,116,359) & PHX (298) & LAX (194) & PHX (863) \\
                    5. & PHX (0.53) & BWI (2,017,113) & BWI (277) & AUS (190) & BWI (779) \\
                    6. & HOU (0.51) & HOU (1,729,523) & HOU (246) & MKE (181) & HOU (693) \\
                    7. & MCO (0.45) & STL (1,364,403) & STL (201) & FLL (160) & STL (591) \\
                    8. & STL (0.38) & MCO (1,296,749) & BNA (183) & SAN (152) & BNA (577) \\
                    9. & BNA (0.36) & BNA (1,240,896) & MCO (171) & MDW (150) & MCI (438) \\
                    10. & TPA (0.36) & TPA (1,128,861) & TPA (158) & MCI (136) & TPA (427) \\
                    \bottomrule
                \end{tabular*}
            \end{footnotesize}
         \end{threeparttable}
    \end{flushleft}
\end{table}

\noindent To summarize, we have proposed an intuitive new subgraph-based centrality measure based on membership of particular subgraphs, that is informative for particular topologies. For instance, we find that Dallas Love Field (DAL) and Los Angeles (LAX) are very often part of 4-circle groups of airports, but are less likely (relative to other airports) to be part of completely-connected groups. The operational reasons for this structure are still unclear. These results stand in contrast to the subgraph centrality of \cite{estrada_rodriguez-velazquez05}, which is, on this dataset at least, highly correlated with $DC$.\footnote{We do not suggest that $B_{SM}$ will be more informative than $B_{S}$ or $DC$ in general, or for all subgraphs.}

\begin{landscape}
\vspace*{\fill}
\begin{table}[h]
\begin{tabular}{ccccccc}
\includegraphics[scale=0.15]{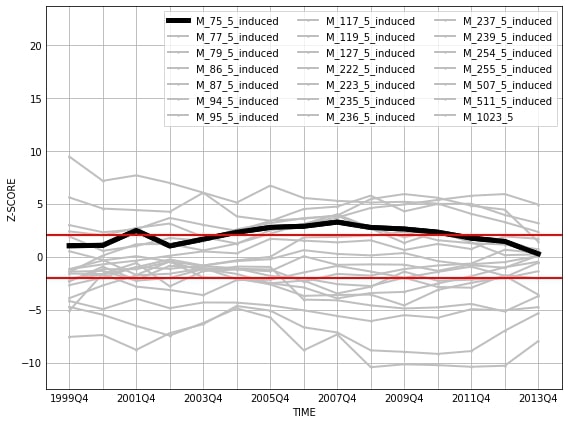} &
\includegraphics[scale=0.15]{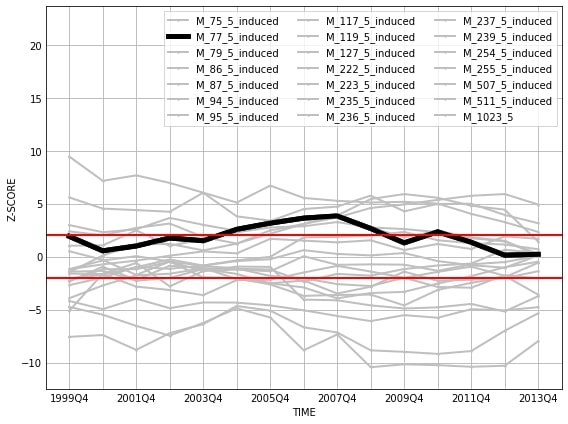} &
\includegraphics[scale=0.15]{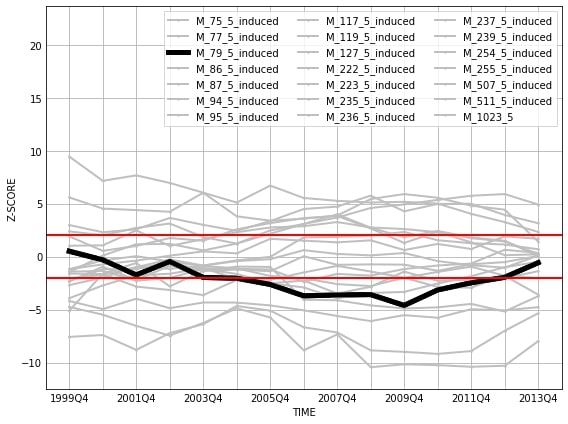} &
\includegraphics[scale=0.15]{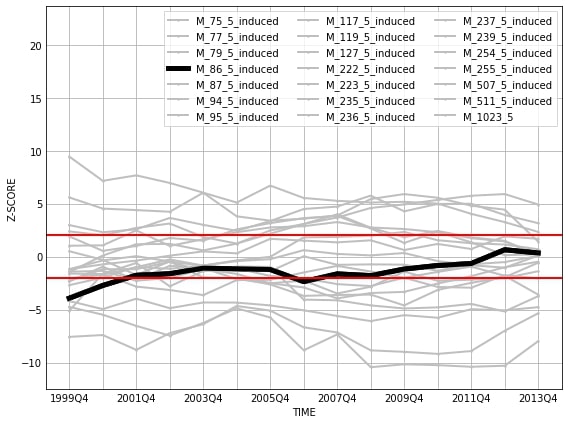} &
\includegraphics[scale=0.15]{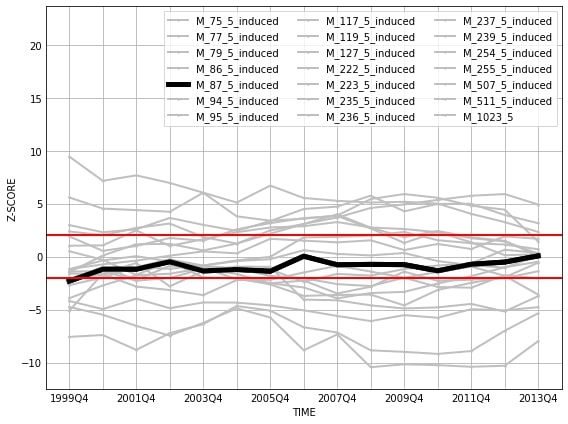} &
\includegraphics[scale=0.15]{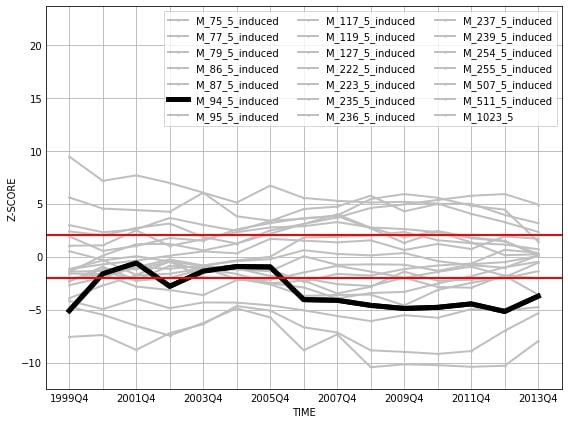} &
\includegraphics[scale=0.15]{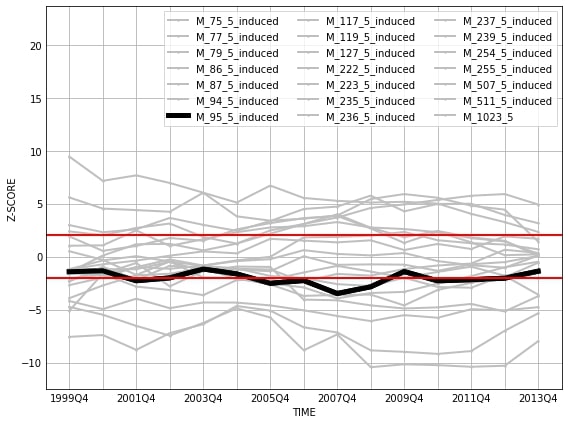} \\
5-star $M_{75}^{(5)}$ & 5-arrow $M_{77}^{(5)}$ & Cricket $M_{79}^{(5)}$ & 5-path $M_{86}^{(5)}$ & Bull $M_{87}^{(5)}$ & Banner $M_{94}^{(5)}$ & Stingray $M_{95}^{(5)}$\\
\\
\\
\includegraphics[scale=0.15]{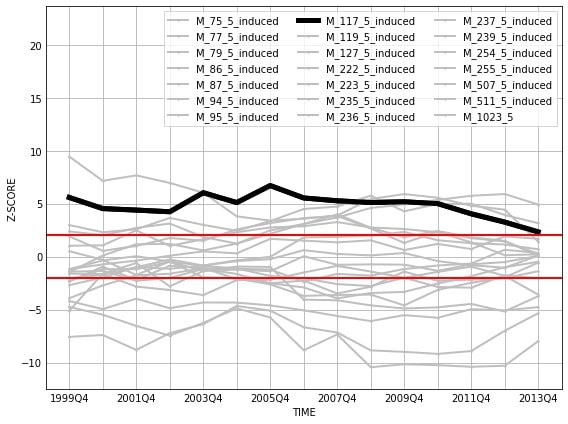} &
\includegraphics[scale=0.15]{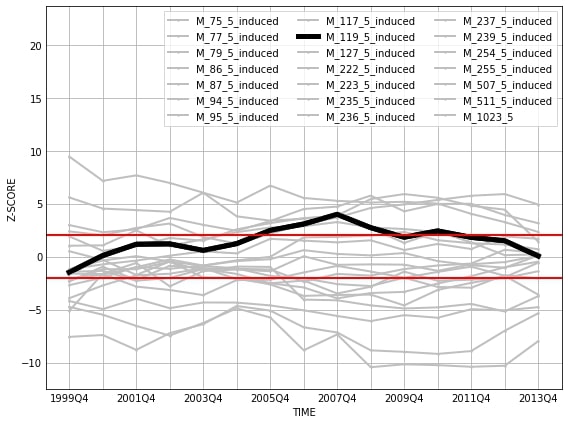} &
\includegraphics[scale=0.15]{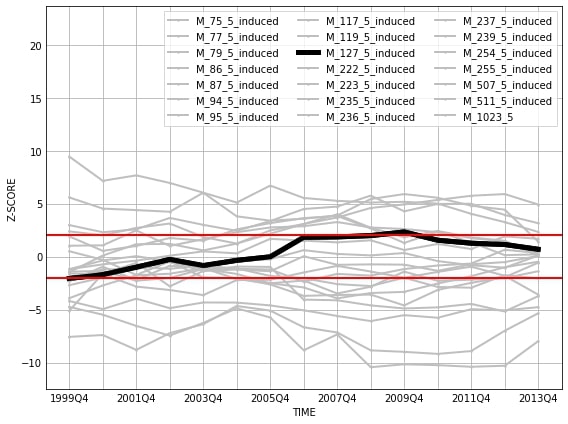} &
\includegraphics[scale=0.15]{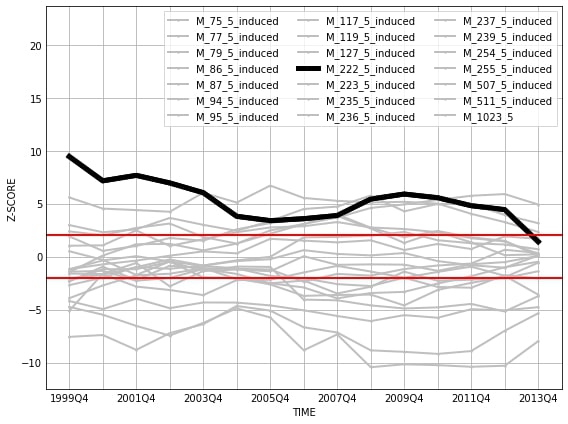} &
\includegraphics[scale=0.15]{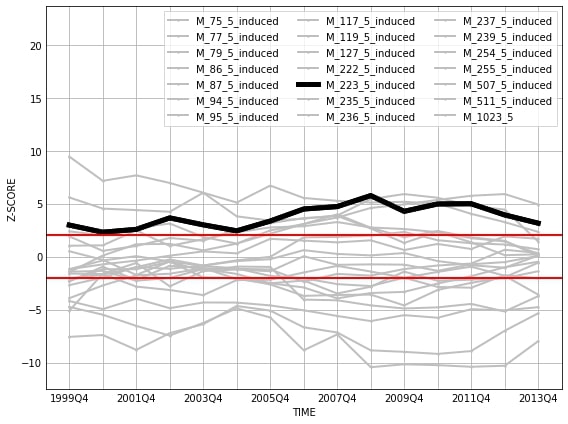} &
\includegraphics[scale=0.15]{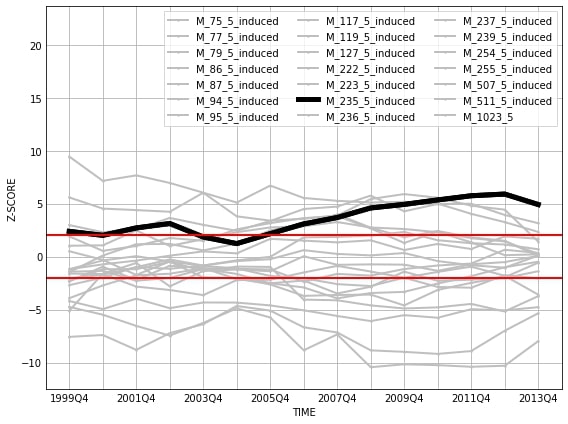} &
\includegraphics[scale=0.15]{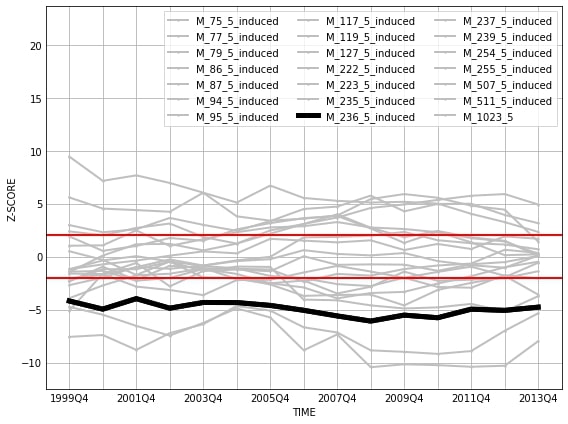} \\
Lollipop $M_{117}^{(5)}$ & Spinning top $M_{119}^{(5)}$ & Kite $M_{127}^{(5)}$ & Ufo $M_{222}^{(5)}$ & Chevron $M_{223}^{(5)}$ & Hourglass $M_{235}^{(5)}$ & 5-circle $M_{236}^{(5)}$\\
\\
\\
\includegraphics[scale=0.15]{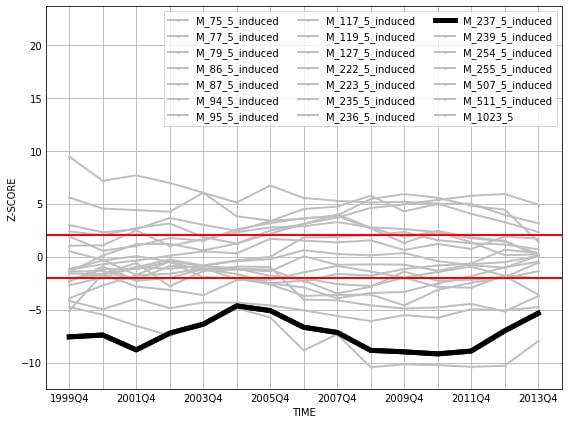} &
\includegraphics[scale=0.15]{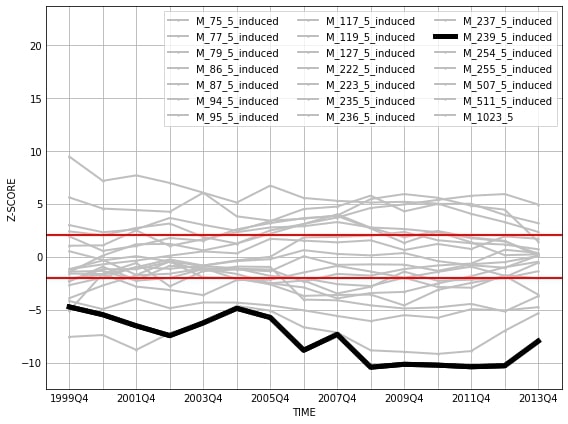} &
\includegraphics[scale=0.15]{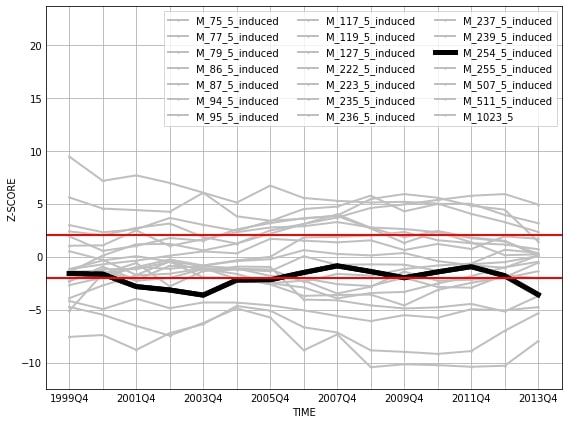} &
\includegraphics[scale=0.15]{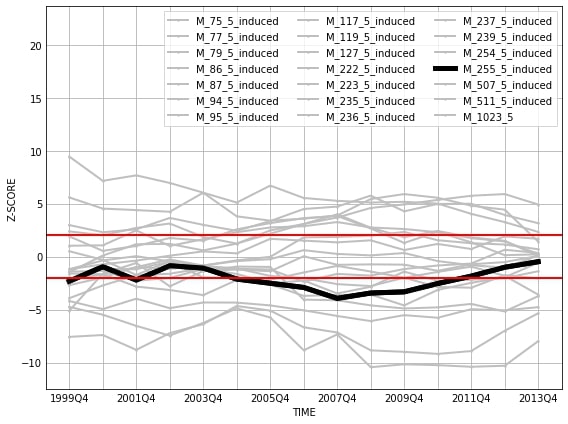} &
\includegraphics[scale=0.15]{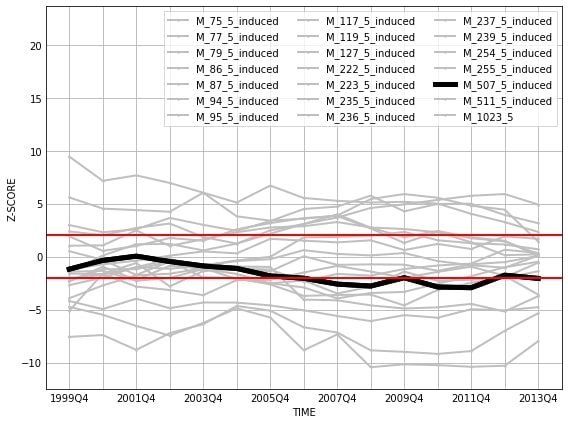} &
\includegraphics[scale=0.15]{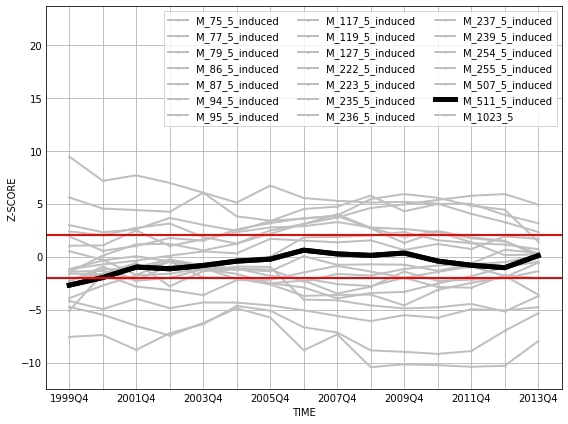} &
\includegraphics[scale=0.15]{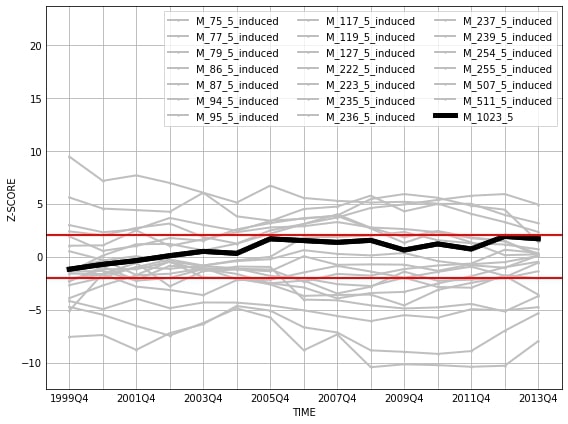} \\
House $M_{237}^{(5)}$ & Crown $M_{239}^{(5)}$ & Envelope $M_{254}^{(5)}$ & Lamp $M_{255}^{(5)}$ & Arrowhead $M_{507}^{(5)}$ & Cat's cradle $M_{511}^{(5)}$ & 5-complete $M_{1023}^{(5)}$\\
\\
\end{tabular}
\caption{The z-scores for the twenty-one induced 5-node subgraphs in Figure \ref{fig:motif_notation}, relative to a degree-preserving rewiring of the real-world network $G$, followed by a simulated annealing optimization that matches the number of induced 3-node and 4-node subgraphs in $G$, in every time period. The mean and variance of the subgraph counts are computed numerically by 1,000 draws from the randomized ensemble. We refer to the horizontal red lines, which represent the approximate 95\% critical values, $\pm 2$. (Section \ref{sec:motif_simulated_annealing_5_node}).}
\label{tab:5_node_motifs_5_node}
\end{table}
\vspace*{\fill}
\end{landscape}

\section{Related work}\label{sec:related}
We now briefly review additional research on the function, interaction, and aggregation of biological motifs, mention the little related work that exists on motifs and topology transitions in transportation networks, and report quantitative results from complex systems applied to air transportation. 

\subsection{Motifs}
\emph{Function of individual motifs.} \cite{alon07} reviews experimental work on motifs in gene regulation and other biological networks, and provides evidence that different families of motifs perform precise and identifiable information-processing functions, at a cellular level, and that these networks have an inherent structural simplicity since they are based on a limited number of basic components. \cite{hayot_jayaprakash05, mangan_alon03} show how motifs in natural networks may express evolved computational operations. Various authors have developed mathematical models for motifs, in an attempt to show how interaction patterns are related to biological function, although there is evidence that structural information may not be sufficient to distinguish between multiple potentially-useful functions in some cases \cite{ingram_etal06, isihara_etal05}.\\
\\
\emph{Interrelationships and aggregation.} It is clear that natural motifs are likely to interact, and another strand of research investigates how this will affect their function. \cite{dobrin_etal04} investigate the aggregation of motifs into clusters, in the transcriptional regulatory network of the bacterium \emph{E. coli}. They find that most individual motifs overlap (by sharing at least one link and/or node), to create \emph{homologous motif clusters}, and that these clusters will themselves merge into a \emph{motif supercluster}, which has similar properties to the whole network; they suggest that this hierarchical interaction of sets of motifs, rather than isolated function, is a general property of cellular networks. \cite{kashtan_etal04} propose topological subgraph (motif) generalizations, created by duplicating nodes (and their edges) that have the same function, which will give larger subgraphs (motifs). Formally, a pair of nodes has the same \emph{role} if there is an automorphism that maps one of the nodes to the other, and all nodes with the same role form a structurally equivalent class. For the undirected subgraphs that we consider (Table \ref{fig:motif_notation}), a node's role just corresponds to its degree \emph{within} the subgraph. There are various ways in which nodes can be duplicated, e.g., duplicating the ``center-node'' role of the 3-star gives a 4-circle, while duplicating both of the ``spoke'' nodes of the 3-star gives a 5-star. These role-preserving generalizations will, as noted by \cite{kashtan_etal04}, tend to have similar functionality to the original motif.\\
\\
\emph{Motifs in airline networks.} Although nearly all of the literature focuses on natural and technological networks, some very interesting work by Bounova \cite{bounova09} applies motifs to airline route maps and investigates topology transitions in U.S. airline networks, but on monthly data over the period 1990--2007. She finds that most airlines have similar networks, but that Southwest is topologically distinct. Using a null randomized ensemble that matches the number of nodes and the degree sequence of the real-world network, and that we would expect to give similar results to Section \ref{sec:motif_rewiring}, for \emph{3-node} subgraphs, she comes to a very different conclusion (\cite[Page 126]{bounova09}, our emphasis added):
\begin{quote}
Southwest brings a surprise in motif finding. \emph{There are no significant motifs, compared to random graphs}, though we tested a few snapshots of the airline's history (1/1990, 8/1997, 8/2007) \ldots Mathematically, this says that Southwest is no different from a random network.
\end{quote}
The z-scores reported for August 2007 in \cite[Figure 3-41]{bounova09} are very low (roughly 0.04--0.16), which contradicts our findings in Figures \ref{fig:z_scores_rewiring} and \ref{fig:z_scores_sa} and Table \ref{tab:5_node_motifs_5_node}. However, by augmenting the topological graph with departure frequency data used as edge-weights in a weighted graph (i.e., an edge is present only if the frequency is greater than some threshold), she finds  evidence that the 4-star is a motif, and remarks that ``hub-spoke motifs are only a recent phenomenon in Southwest'' \cite[Page 127]{bounova09}. The 4-star motif is in essential agreement with the results presented in Figure \ref{fig:z_scores_sa}, although we find that hub-spoke motifs (i.e., 3-star and 4-star) were significant motifs from at least 1999 to 2005 and have, if anything, become \emph{less} significant over time. These differences may be due to (a) use of a different dataset and/or data treatment and filtering steps, and (b) sensitivity to the choice of null random ensemble.

\subsection{Complex systems applied to air transportation}
There is a substantial literature on the application of complex systems to airline networks. See \cite{lin_ban14}, \cite[Table I]{lordan_etal14} and \cite{roucolle_etal20} for nice surveys. \cite{wuellner_etal10} investigate the resilience of airline networks to a targeted removal of nodes and a random removal of edges, and find that graph connectedness and ``travel times'' (based on spatial geodesic lengths and intermediate airport penalties) are generally preserved. The $k$\emph{-core} is defined by iterative removal of nodes (and their edges) with degree less than $k$, until all nodes have degree greater than or equal to $k$ (the final network is called the $k$-core). Using data for 2007, the authors find that Southwest is a special case, with a large $k$-core structure and extreme resilience to node or edge deletion, and conclude that (\cite[Page 056101-1]{wuellner_etal10}, \{.\} our addition): ``\{Southwest\} has essentially built a core network, comprising more than half of its overall destinations, which is a dense mesh of interconnected high-degree (i.e., ``hub'') airports.'' They also report \cite[Table I]{wuellner_etal10} an average path length of 1.542, that is slightly lower than the average path lengths reported in Figure \ref{fig:diameter_apl}. Two related papers \cite{verma_etal14, verma_etal16} analyze the core of the World Airline Network (WAN). This network is made up of more than 3,200 nodes and 18,000 edges. However, unlike the results reported above for Southwest, they find that the WAN has a very small core (containing about 2.5\% of all nodes), that is almost fully connected and is surrounded by a nearly tree-like periphery; upon removal of the core, they find that most of the WAN network is still connected.

\section{Conclusions}\label{sec:conclusions}
We have used exact methods to explore the behaviour of small subgraphs in a dynamic transportation network defined by the route service of Southwest Airlines. The topology has much in common with random graphs, exhibiting ``small-world'' characteristics.\footnote{On the other hand, similarities between Southwest's monthly network and Erd{\H{o}}s-R{\'{e}}nyi were noticed by \cite[e.g., Figure 4]{bounova_weck12}, who describe patterns of linear correlation (``heat maps'') between pairs of graph topology metrics.}  Short path lengths and clustering reflect the need for a carrier to provide efficient service, with few connections between airport pairs. We present new evidence of a regime-switch in power-law scaling between subgraph counts and the number of edges in the network. In a sense, the emergence of any scaling is curious, because transportation networks result from careful route-level planning and strategic decision-making, based on the spatial distribution of demand and competition, and subject to operational and regulatory constraints inherent in providing passenger service (e.g., availability of fleet and crew, scheduling, legal restrictions such as the Wright Amendment of 1979, etc.). The network has evolved by design, from an initial state given by Southwest's network when the U.S. air transport sector was deregulated by the Airline Deregulation Act of 1978, and not at random. Nevertheless, macroscopic regularity emerges from this microscopic behaviour.

We identify motifs and anti-motifs on three, four and five nodes, some of which present substantial dynamic variation, and have a rather different interpretation to motifs in natural networks. Our results on topology evolution provide new insights into the structure of a transportation network, \emph{that are not observable by standard measures such as node centrality and clustering}. In particular, Southwest's network has become less ``starlike'' over time, despite a fall in network density, but also favours unexpected local structure (e.g., circle, diamond). We illustrate how a simple new subgraph-based centrality measure can be used to identify important nodes based on membership of specific topological structures. Subgraph-based tools are useful in giving new qualitative and quantitative understanding of the behaviour of real-world networks, and can potentially be used as diagnostic tools for economic or mathematical models of network evolution.\\
\\
\noindent \emph{Directions for future work.} By focusing on small motifs, we are able to remain within an exact analytic framework for subgraph enumeration. The primary limitation of this method is that it will not be applicable to larger motifs, of the size that are regularly considered in biology (e.g., 10 nodes and above), when the very rapid increase in the number of possible topological subgraphs necessitates the use of very fast approximate sampling methods. Some objectives for extensions of our results include (1) investigating whether our results apply to other economic or transportation networks, and exploring ways to characterize the strategic behaviour of different networks based on their topology; (2) using subgraph-based methods and econometrics, possibly incorporating edge-weights or information on the spatial location of nodes into the graph, to explain the observed strategic and dynamic decisions on market entry and exit in an economic network; (3) developing a better understanding of which classes of theoretical and real-world graph models give rise to the scaling behaviour that we have observed, e.g., \cite{barrat_etal04} report power-law decay, as a function of node degree, in the degree distribution, in the total (and average) traffic handled by each airport, and in the average clustering coefficient, using data on the World Airline Network, while \cite{angelesserrano_etal08, song_etal05} use renormalization to show that scale-free and small-world behaviour can arise naturally in real complex networks that are invariant/self-similar under length-scale transformations; and (4) applying state-of-the-art computational tools to search for larger motifs in such networks. We leave these interesting problems for future work.

\clearpage

\appendix

\counterwithin{figure}{section}

\section{Proofs}\label{sec:proofs}

\begin{theorem}[Analytic formulae for non-induced subgraph enumeration on three and four nodes \cite{alon_etal97}]\label{thm:analytic_count}
\begin{equation}\label{eq:m_3_3}
|M_{3}^{(3)}| = \sum_{i}\binom{k_{i}}{2} = \frac{1}{2} \, \sum_{i}k_{i}(k_{i}-1).
\end{equation}
\begin{equation}\label{eq:m_7_3}
|M_{7}^{(3)}| = \frac{1}{6} \, \tr \, (g^{3}).
\end{equation}
\begin{equation}\label{eq:m_11_4}
|M_{11}^{(4)}| = \sum_{i}\binom{k_{i}}{3} = \frac{1}{6} \, \sum_{i}k_{i}(k_{i}-1)(k_{i}-2).
\end{equation}
\begin{equation}\label{eq:m_13_4}
|M_{13}^{(4)}| = \sum_{(i, j) \in E}(k_{i}-1)(k_{j}-1) - 3 \, |M_{7}^{(3)}|.
\end{equation}
\begin{equation}\label{eq:m_15_4}
|M_{15}^{(4)}| = \frac{1}{2} \sum_{i}(g^{3})_{ii} \, (k_{i}-2).
\end{equation}
\begin{equation}\label{eq:m_30_4}
|M_{30}^{(4)}| = \frac{1}{8} \, (\tr \, (g^{4}) - 4 \, |M_{3}^{(3)}| - 2 \, m).
\end{equation}
\begin{equation}\label{eq:m_31_4}
|M_{31}^{(4)}| = \frac{1}{2} \, \sum_{i,j}\binom{(g^{2})_{ij} \, (g)_{ij}}{2} = \frac{1}{2} \, \sum_{(i,j) \in E}(g^{2})_{ij}((g^{2})_{ij} - 1).
\end{equation}
\begin{equation}\label{eq:m_63_4}
|M_{63}^{(4)}| = \frac{1}{24} \sum_{i} \tr \, (g_{-i}^{3}).
\end{equation}
\end{theorem}

\begin{proof}[Proof of Theorem \ref{thm:analytic_count}] Full proofs are given in \cite{alon_etal97}. To give a self-contained treatment in this paper, we provide elementary combinatorial proofs of each count formula. For a full treatment of exact enumeration of \emph{five-node} subgraphs, see \cite{lawford20}. We denote the binomial coefficient $\binom{n}{r}$ and $\tr \, (g)$ is the trace of a square matrix.

\begin{enumerate}[label=(\alph*)]

    \item $|M_{3}^{(3)}|$: Node $i$ has edges to $k_{i}$ neighbours, and any pair of those edges will form a star, centered on $i$. The result (\ref{eq:m_3_3}) follows immediately. See also \cite[$n_{G}(H_{2})$]{alon_etal97}. In general, it is straightforward to count the number of $b$-node stars in a graph using $\sum_{i}\binom{k_{i}}{b-1}$, where a summand is set to zero if $b-1 > k_{i}$.
    
    \item $|M_{7}^{(3)}|$: The elements of $g^{3}$ are the number of walks of length 3 from node $i$ to node $j$, and so $\tr \, (g^{3})$ gives the total number of closed walks of length 3 in $G$, each of which must involve three distinct nodes $i$, $j$ and $x$. Since there are six ways to traverse a given triangle (starting at any corner, and moving clockwise or counterclockwise), e.g., $\{(i, j), (j, x), (x, i)\}$, we divide by six to give result (\ref{eq:m_7_3}). See also \cite[$n_{G}(C_{3})$]{alon_etal97}.
    
    \item $|M_{11}^{(4)}|$: Result (\ref{eq:m_11_4}) follows directly from the generalization of the argument used for $|M_{3}^{(3)}|$. See \cite[$n_{G}(H_{4})$]{alon_etal97}.
    
    \item $|M_{13}^{(4)}|$: Consider any edge $(i, j) \in E$, as the central edge in a 4-path $\{(x, i), (i, j), (j, y)\}$. Node $i$ has $k_{i} - 1$ possible neighbours (for node $x$), and node $j$ has $k_{j} - 1$ possible neighbours (for node $y$). There are $(k_{i} - 1)(k_{j} - 1)$ ways in which a neighbour of $i$ can be paired with a neighbour of $j$, which gives a total of $\sum_{(i, j) \in E}(k_{i}-1)(k_{j}-1)$ across all possible central edges. This sum includes the unwanted case $x = y$, which forms the triangle with corners $i$, $j$ and $x$. Since any of the three edges of a given triangle can be a candidate central edge $(i, j)$ of a 4-path, we subtract $3 \, |M_{7}^{(3)}|$ to give result (\ref{eq:m_13_4}). See also \cite[$n_{G}(H_{3})$]{alon_etal97}.
    
    \item $|M_{15}^{(4)}|$: The tadpole subgraph can be thought of as a triangle on nodes $i$, $j$ and $x$, with the addition of an extra edge $(i, y)$, where $k_{i} > 2$. The element $(1/2) \, (g^{3})_{ii}$ is the number of triangles incident to node $i$, where the division by two corrects for double-counting due to the two possible directions of travel around a given triangle. Hence, there are $(1/2) (g^{3})_{ii} \, (k_{i} - 2)$ tadpoles ``centered on'' node $i$. Result (\ref{eq:m_15_4}) follows immediately. See also \cite[$n_{G}(H_{5})$]{alon_etal97}.
    
    \item $|M_{30}^{(4)}|$: The elements of $g^{4}$ are the number of walks of length 4 from node $i$ to node $j$, and so $\tr \, (g^{4})$ gives the total number of closed walks of length 4 in $G$. We proceed to prove (\ref{eq:m_30_4}) indirectly, by expressing $\tr \, (g^{4})$ in terms of the number of circles and other walks of length 4 on a circle. Consider four distinct nodes $i$, $j$, $x$ and $y$, and the circle $M_{30}^{(4)}$ with edges $\{(i, j), (j, x), (x, y), (y, i)\}$. There are \emph{eight} ways to traverse this circle (starting at any corner, and moving clockwise or counterclockwise). However, there are two additional ways to walk from one of these nodes and back to itself in four steps, using only the four edges of the circle:
    
    \begin{itemize}
        \item First, there are four possible 3-stars $M_{3}^{(3)}$, i.e., (I) $\{(i, j), (j, x)\}$, (II) $\{(j, x), (x, y)\}$, (III) $\{(x, y), (y, i)\}$, and (IV) $\{(y, i), (i, j)\}$. Starting from node $i$, it is possible to build three of these: (I), (III), and (twice) (IV). Across the four nodes, each of (I)--(IV) will appear \emph{four} times.
        
        \item Second, there are four edges in the circle. Starting from node $i$, it is possible to build two of these: $(i, j)$ and $(i, y)$. Across the four nodes, each edge in the circle will appear \emph{twice}.
    \end{itemize}
    
    Hence, we can write $\tr \, (g^{4}) = 8 |M_{30}^{(4)}| + 4 |M_{3}^{(3)}| + 2m$, and result (\ref{eq:m_30_4}) follows directly. See also \cite[$n_{G}(C_{4})$]{alon_etal97}.
    
    \item $|M_{31}^{(4)}|$: We can think of a diamond on nodes $i$, $j$, $x$, and $y$ as two distinct triangles with a common edge $(i, j)$. Given this common edge, $(g^{2})_{ij} \, (g)_{ij}$ represents the number of walks of length 2 between $i$ and $j$, i.e., the number of distinct triangles in $G$ that contain $(i, j)$. A diamond is formed by any two of these triangles, and so $\binom{(g^{2})_{ij} \, (g)_{ij}}{2}$ gives the number of distinct diamonds that can be built from a common edge $(i, j)$. Summing across all pairs of nodes $i$ and $j$ will give twice the number of diamonds in $G$, since the edge $(i, j)$ has two endpoints. We divide the sum by two to give result (\ref{eq:m_31_4}). See also \cite[$n_{G}(H_{6})$]{alon_etal97}.
    
    \item $|M_{63}^{(4)}|$: Consider a triangle subgraph $M_{7}^{(3)}$ comprised of nodes $j$, $x$ and $y$. Let each node be in the neighbourhoood $\Gamma_{G}(i)$ of some node $i$ such that $i \neq j \neq x \neq y$. Hence, the four nodes $i$, $j$, $x$ and $y$, and the edges between them, form a 4-complete subgraph $M_{63}^{(4)}$. The quantity $(1/6) \tr \, (g_{-i}^{3})$ gives the number of 4-complete subgraphs that are incident to node $i$, where $g_{-i}$ is the adjacency matrix formed by the neighbourhood of $i$. By symmetry, summing across all nodes $i$ will give four times the total count of 4-complete subgraphs in the graph, and so we divide the sum by four to give result (\ref{eq:m_63_4}). See also \cite[Page 222]{alon_etal97}.
    
\end{enumerate}
\end{proof}

\begin{theorem}[Analytic formulae for induced subgraph enumeration on three and four nodes]\label{thm:analytic_count_not_nested}

\begin{flalign*}
|\widetilde{M}_{3}^{(3)}| &= |M_{3}^{(3)}| - 3 \, |M_{7}^{(3)}|.&\\
|\widetilde{M}_{11}^{(4)}| &= |M_{11}^{(4)}| - |M_{15}^{(4)}| + 2 \, |M_{31}^{(4)}| - 4 \, |M_{63}^{(4)}|.&\\
|\widetilde{M}_{13}^{(4)}| &= |M_{13}^{(4)}| - 2 \, |M_{15}^{(4)}| - 4 \, |M_{30}^{(4)}| + 6 \, |M_{31}^{(4)}| - 12 \, |M_{63}^{(4)}|.&\\
|\widetilde{M}_{15}^{(4)}| &= |M_{15}^{(4)}| - 4 \, |M_{31}^{(4)}| + 12 \, |M_{63}^{(4)}|.&\\
|\widetilde{M}_{30}^{(4)}| &= |M_{30}^{(4)}| - |M_{31}^{(4)}| + 3 \, |M_{63}^{(4)}|.&\\
|\widetilde{M}_{31}^{(4)}| &= |M_{31}^{(4)}| - 6 \, |M_{63}^{(4)}|.&
\end{flalign*}

\end{theorem}

\begin{proof}[Proof of Theorem \ref{thm:analytic_count_not_nested}] It is well-known that induced subgraph counts $\widetilde{y}=(|\widetilde{M}_{a}^{(b)}|)$ can be obtained from non-induced subgraph counts $y=(|M_{a}^{(b)}|)$ by a linear combination $\widetilde{y} = y - A\, \widetilde{y}$, where $A = (A)_{ij}$ gives the number of copies of each subgraph $i$ in subgraph $j$, and $(A)_{ii} = 0$. The elements of $A$ can be computed directly using Theorem \ref{thm:analytic_count}. Then, $\widetilde{y} = (I+A)^{-1}\,y$, where invertibility of $I+A$ follows from the properties of a unit upper-triangular matrix. Here we present an alternative combinatorial proof. Each induced subgraph count of $H'$ is found by considering all subgraphs $H$ on the same set of $b$ nodes as $H'$, such that $H' \subset H$, and noting the number of ways in which edges can be removed from $H$ to obtain $H'$. For example, the 4-star $M_{11}^{(4)}$ is nested in the tadpole $M_{15}^{(4)}$, the diamond $M_{31}^{(4)}$ and the 4-complete $M_{63}^{(4)}$ (Figure \ref{fig:nesting_subgraphs}). We treat each subgraph separately but, for convenience of exposition, not in order.

\begin{figure}\centering
    	\begin{subfigure}{0.24\textwidth}
    		\centering
    		\includegraphics[width=.6\linewidth, keepaspectratio]{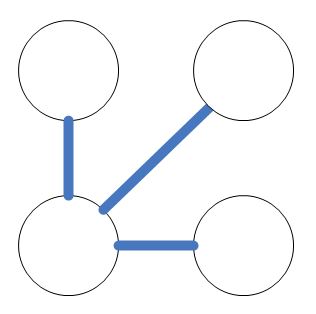}
    		\caption{4-star $M_{11}^{(4)}$.}
    		\label{fig:m_11_4_nested}
    	\end{subfigure}
    	\begin{subfigure}{0.24\textwidth}
    		\centering
    		\includegraphics[width=.6\linewidth, keepaspectratio]{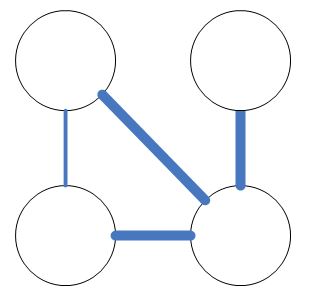}
    		\caption{Tadpole $M_{15}^{(4)}$.}
    		\label{fig:m_15_4_nested}
    	\end{subfigure}
    		\begin{subfigure}{0.24\textwidth}
    		\centering
    		\includegraphics[width=.6\linewidth, keepaspectratio]{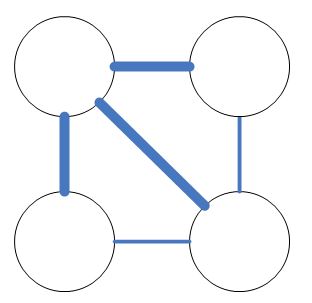}
    		\caption{Diamond $M_{31}^{(4)}$.}
    		\label{fig:m_31_4_nested}
    	\end{subfigure}
    	    \begin{subfigure}{0.24\textwidth}
    		\centering
    		\includegraphics[width=.6\linewidth, keepaspectratio]{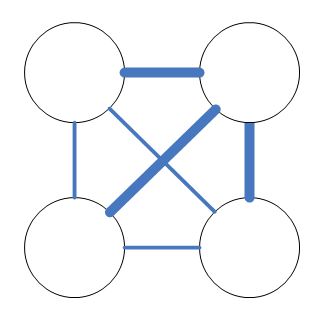}
    		\caption{4-complete $M_{63}^{(4)}$.}
    		\label{fig:m_63_4_nested}
    	\end{subfigure}
    	\caption{Illustrative nestings of the 4-star subgraph. There is one way in which one edge can be removed from the tadpole to create a 4-star, two ways in which two edges can be removed from the diamond, and four ways in which three edges can be removed from the 4-complete.}
    	\label{fig:nesting_subgraphs}
    \end{figure}

\begin{enumerate}[label=(\alph*)]

    \item $|\widetilde{M}_{3}^{(3)}|$: To create a 3-star, there are three ways to remove one edge from the triangle.
    
    \item $|\widetilde{M}_{31}^{(4)}|$: To create a diamond, there are six ways to remove one edge from the 4-complete.
    
    \item $|\widetilde{M}_{30}^{(4)}|$: To create a 4-circle, there is one way to remove one edge from the diamond, and three ways to remove two edges (with no common nodes) from the 4-complete. Hence,
    \begin{equation*}
        |\widetilde{M}_{30}^{(4)}| = |M_{30}^{(4)}| - |\widetilde{M}_{31}^{(4)}| - 3 \, |M_{63}^{(4)}|.
    \end{equation*}
    
    \item $|\widetilde{M}_{15}^{(4)}|$: To create a tadpole, there are four ways to remove one edge from the diamond, and twelve ways to remove two edges (with a common node) from the 4-complete. Hence,
    \begin{equation*}
        |\widetilde{M}_{15}^{(4)}| = |M_{15}^{(4)}| - 4 \, |\widetilde{M}_{31}^{(4)}| - 12 \, |M_{63}^{(4)}|.
    \end{equation*}
    
    \item $|\widetilde{M}_{13}^{(4)}|$: To create a 4-path, there are two ways to remove one edge from a tadpole, four ways to remove one edge from a 4-circle, six ways to remove two edges from a diamond, and twelve ways to remove three edges from a 4-complete. So,
    \begin{equation*}
        |\widetilde{M}_{13}^{(4)}| = |M_{13}^{(4)}| - 2 \, |\widetilde{M}_{15}^{(4)}| - 4 \, |\widetilde{M}_{30}^{(4)}| - 6 \, |\widetilde{M}_{31}^{(4)}| - 12 \, |M_{63}^{(4)}|.
    \end{equation*}
    
    \item $|\widetilde{M}_{11}^{(4)}|$: As illustrated in Figure \ref{fig:nesting_subgraphs}, to create a 4-star, there is one way to remove one edge from a tadpole, two ways to remove two edges from a diamond, and four ways to remove three edges (a triangle) from the 4-complete. Then,
    \begin{equation*}
        |\widetilde{M}_{11}^{(4)}| = |M_{11}^{(4)}| - |\widetilde{M}_{15}^{(4)}| - 2 \, |\widetilde{M}_{31}^{(4)}| - 4 \, |M_{63}^{(4)}|.
    \end{equation*}
    
\end{enumerate}

\end{proof}

\newpage

\section{Additional Tables and Figures}\label{sec:additional_figures}

\setcounter{table}{0}
\renewcommand{\thetable}{B.\arabic{table}}

\counterwithin{figure}{section}

\begin{landscape}
\vspace*{\fill}
\begin{table}[h]
\begin{tabular}{cccccccc}
\includegraphics[height=0.65in]{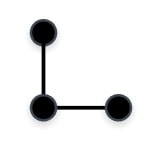} &
\includegraphics[height=0.65in]{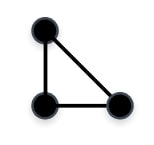} &
\includegraphics[height=0.65in]{M_11_4} &
\includegraphics[height=0.65in]{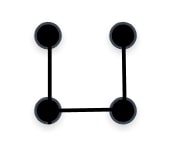} &
\includegraphics[height=0.65in]{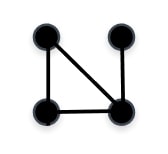} &
\includegraphics[height=0.65in]{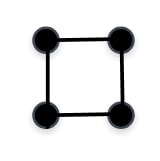} &
\includegraphics[height=0.65in]{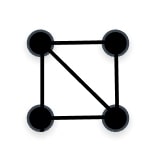} &
\includegraphics[height=0.65in]{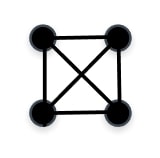}\\
3-star $M_{3}^{(3)}$ & Triangle $M_{7}^{(3)}$ & 4-star $M_{11}^{(4)}$ & 4-path $M_{13}^{(4)}$ & Tadpole $M_{15}^{(4)}$ & 4-circle $M_{30}^{(4)}$ & Diamond $M_{31}^{(4)}$ & 4-complete $M_{63}^{(4)}$\\
\\
\\
\includegraphics[height=0.65in]{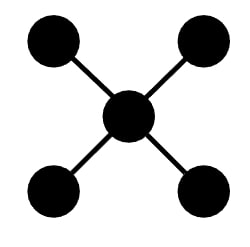} &
\includegraphics[height=0.65in]{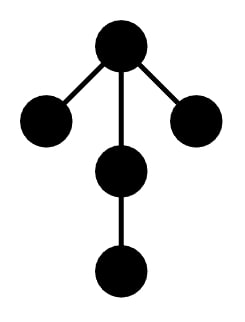} &
\includegraphics[height=0.65in]{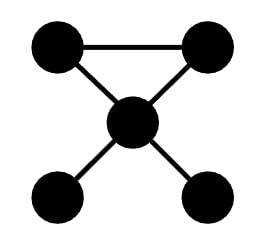} &
\includegraphics[height=0.65in]{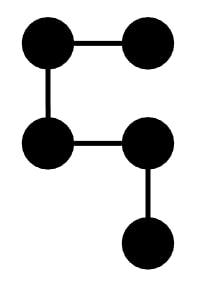} &
\includegraphics[height=0.65in]{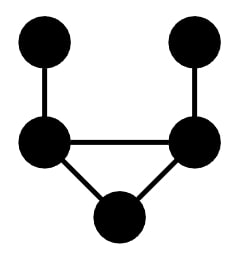} &
\includegraphics[height=0.65in]{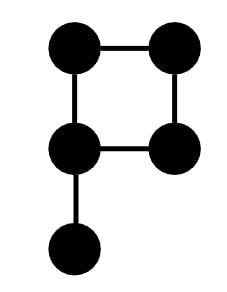} &
\includegraphics[height=0.65in]{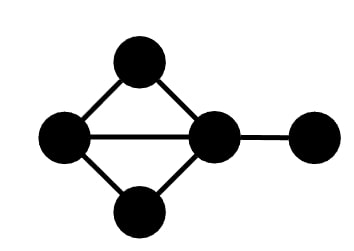} &
\includegraphics[height=0.65in]{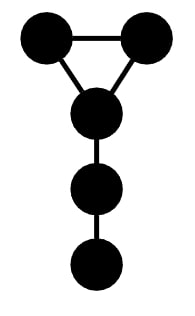}\\
5-star $M_{75}^{(5)}$ & 5-arrow $M_{77}^{(5)}$ & Cricket $M_{79}^{(5)}$ & 5-path $M_{86}^{(5)}$ & Bull $M_{87}^{(5)}$ & Banner $M_{94}^{(5)}$ & Stingray $M_{95}^{(5)}$ & Lollipop $M_{117}^{(5)}$\\
\\
\\
\includegraphics[height=0.65in]{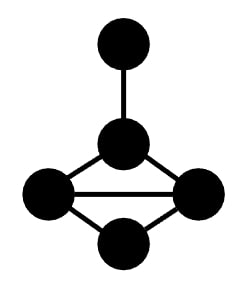} &
\includegraphics[height=0.65in]{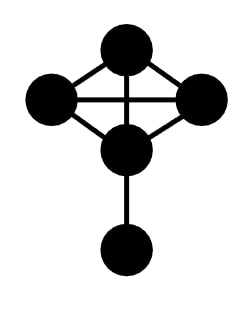} &
\includegraphics[height=0.65in]{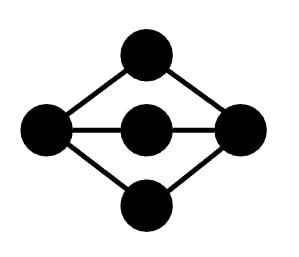} &
\includegraphics[height=0.65in]{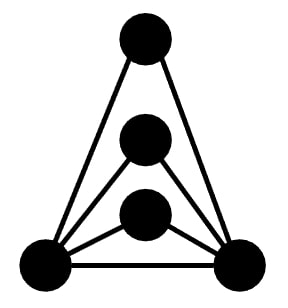} &
\includegraphics[height=0.65in]{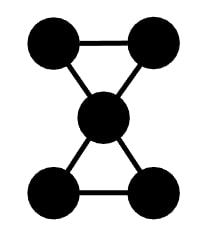} &
\includegraphics[height=0.65in]{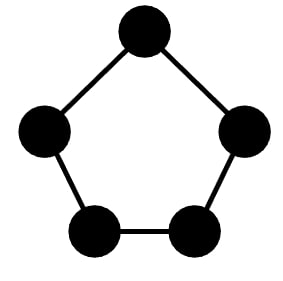} &
\includegraphics[height=0.65in]{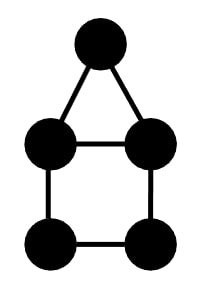} &
\includegraphics[width=0.65in]{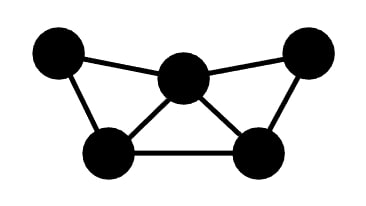}\\
Spinning top $M_{119}^{(5)}$ & Kite $M_{127}^{(5)}$ & Ufo $M_{222}^{(5)}$ & Chevron $M_{223}^{(5)}$ & Hourglass $M_{235}^{(5)}$ & 5-circle $M_{236}^{(5)}$ & House $M_{237}^{(5)}$ & Crown $M_{239}^{(5)}$\\
\\
\\
\includegraphics[height=0.65in]{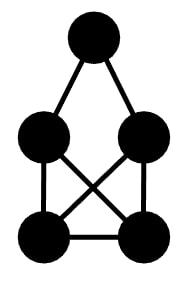} &
\includegraphics[height=0.65in]{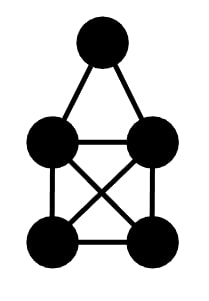} &
\includegraphics[height=0.65in]{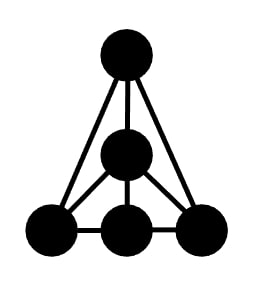} &
\includegraphics[height=0.65in]{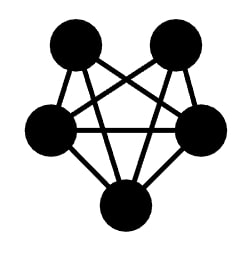} &
\includegraphics[height=0.65in]{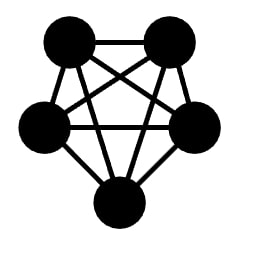} &
&
&
\\
Envelope $M_{254}^{(5)}$ & Lamp $M_{255}^{(5)}$ & Arrowhead $M_{507}^{(5)}$ & Cat's cradle $M_{511}^{(5)}$ & 5-complete $M_{1023}^{(5)}$ & & &\\
\\
\end{tabular}
\caption{The twenty-nine undirected connected subgraphs on three, four, or five nodes, denoted using the notation of \cite{lawford20, lawford_mehmeti20}.}
\label{fig:motif_notation}
\end{table}
\vspace*{\fill}
\end{landscape}

\begin{landscape}
\vspace*{\fill}
\begin{table}[h]
\begin{tabular}{cccccccc}
\includegraphics[height=0.65in]{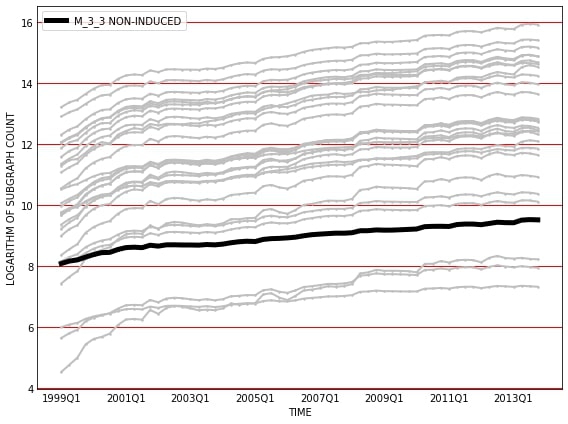} &
\includegraphics[height=0.65in]{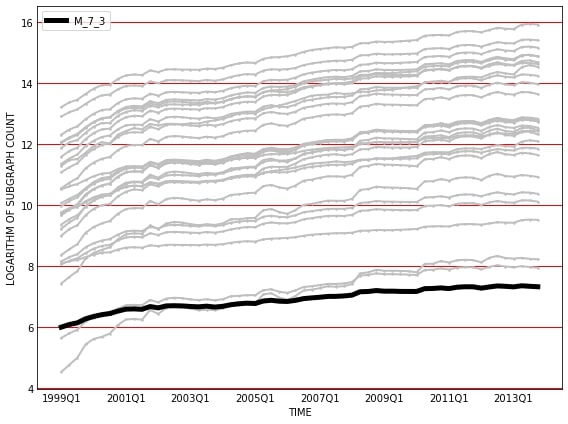} &
\includegraphics[height=0.65in]{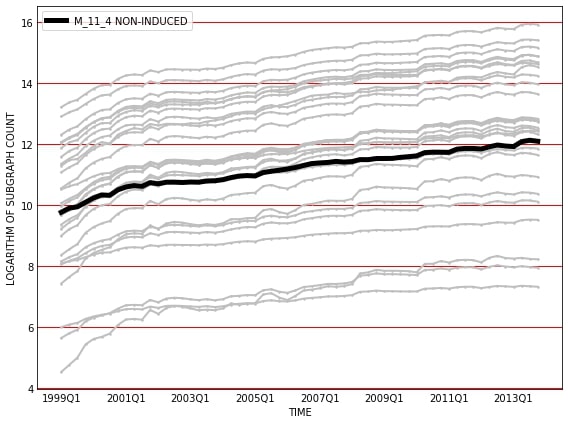} &
\includegraphics[height=0.65in]{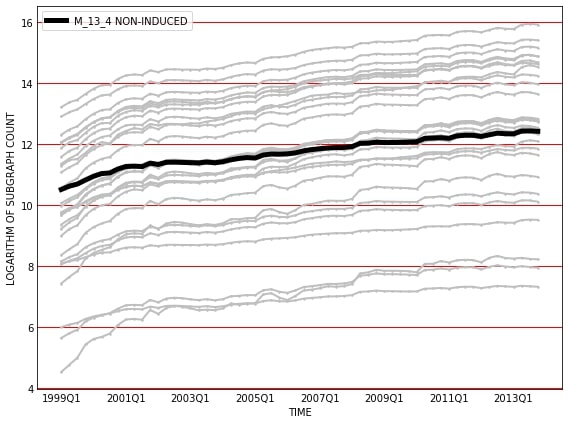} &
\includegraphics[height=0.65in]{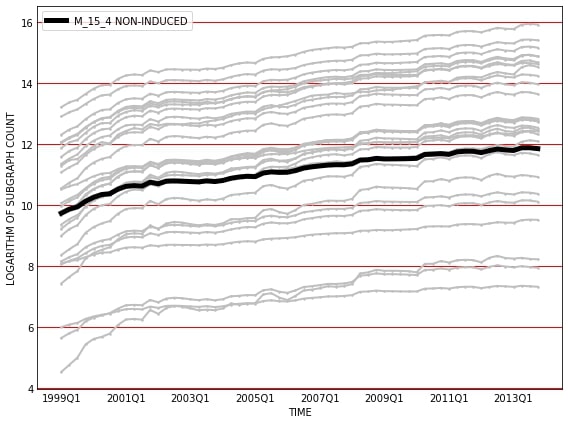} &
\includegraphics[height=0.65in]{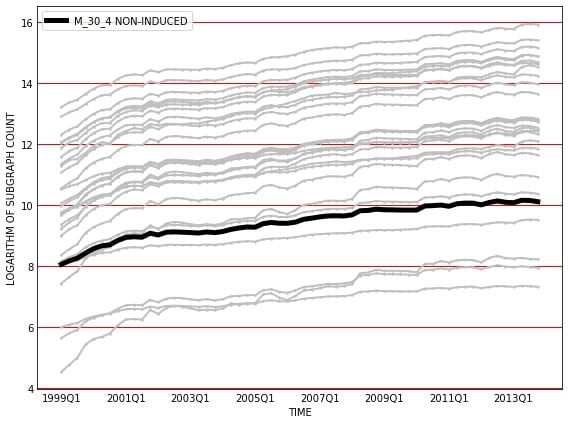} &
\includegraphics[height=0.65in]{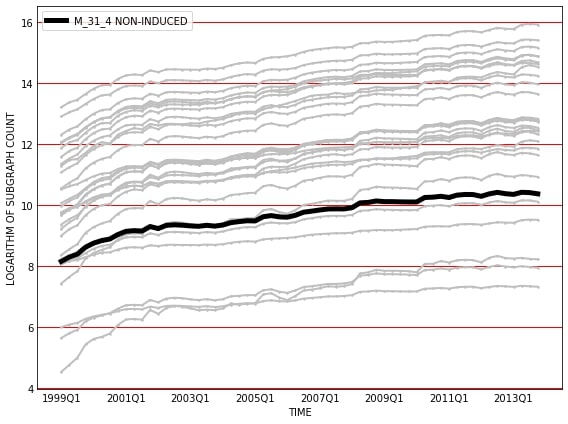} &
\includegraphics[height=0.65in]{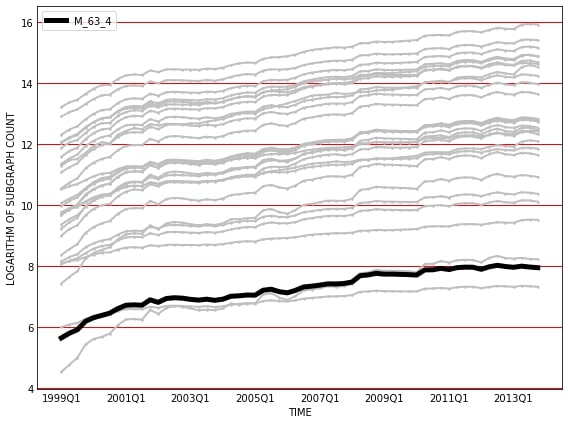}\\
3-star $M_{3}^{(3)}$ & Triangle $M_{7}^{(3)}$ & 4-star $M_{11}^{(4)}$ & 4-path $M_{13}^{(4)}$ & Tadpole $M_{15}^{(4)}$ & 4-circle $M_{30}^{(4)}$ & Diamond $M_{31}^{(4)}$ & 4-complete $M_{63}^{(4)}$\\
\\
\includegraphics[height=0.65in]{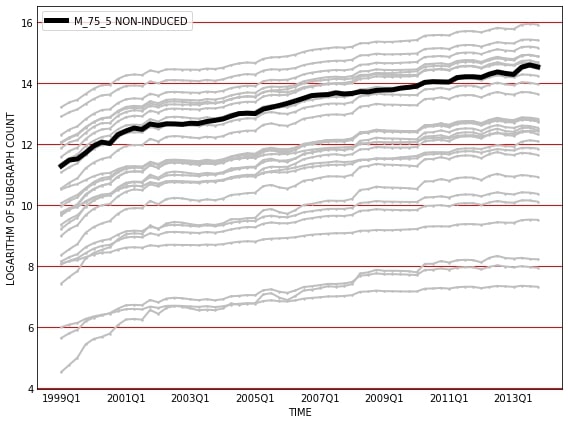} &
\includegraphics[height=0.65in]{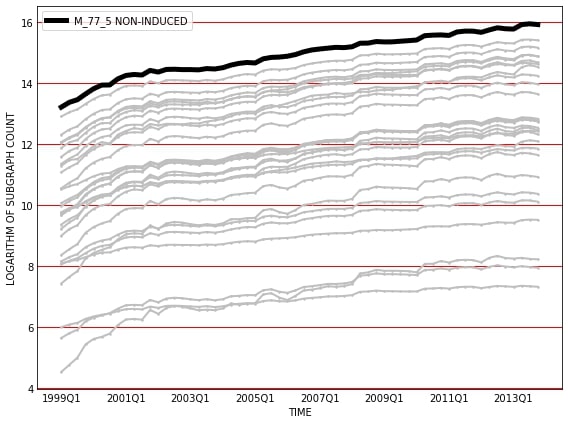} &
\includegraphics[height=0.65in]{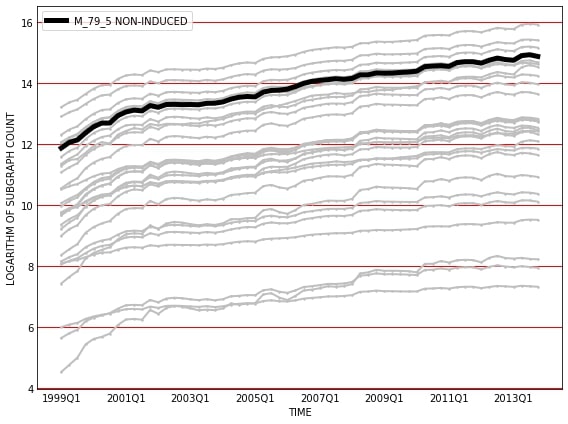} &
\includegraphics[height=0.65in]{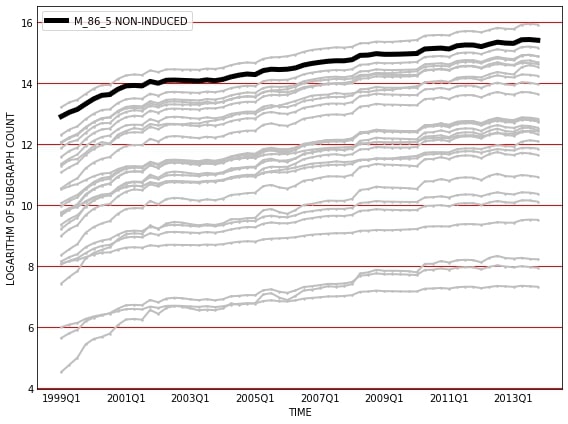} &
\includegraphics[height=0.65in]{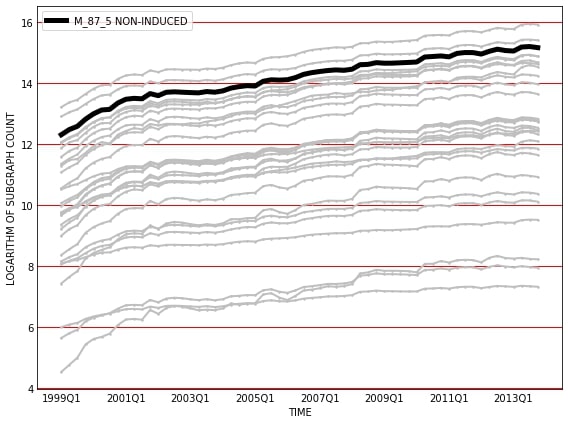} &
\includegraphics[height=0.65in]{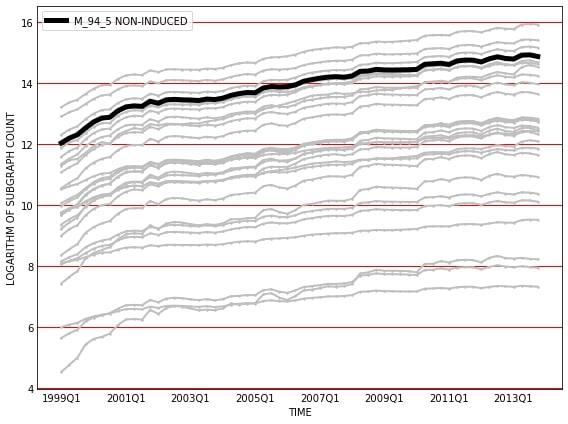} &
\includegraphics[height=0.65in]{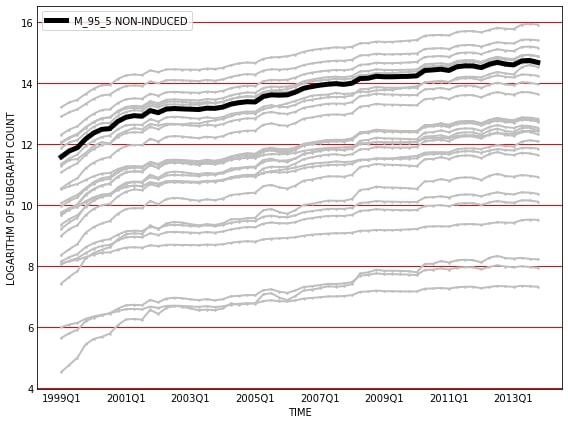} &
\includegraphics[height=0.65in]{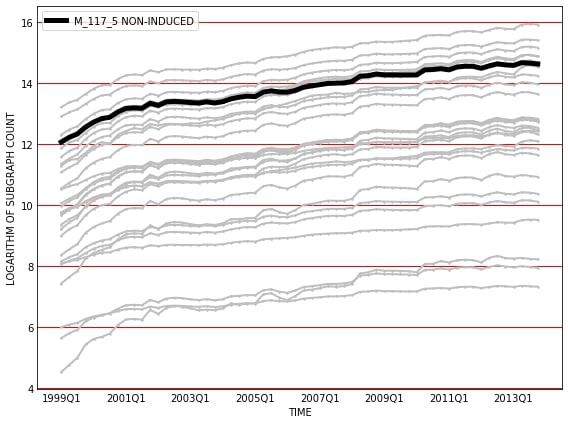}\\
5-star $M_{75}^{(5)}$ & 5-arrow $M_{77}^{(5)}$ & Cricket $M_{79}^{(5)}$ & 5-path $M_{86}^{(5)}$ & Bull $M_{87}^{(5)}$ & Banner $M_{94}^{(5)}$ & Stingray $M_{95}^{(5)}$ & Lollipop $M_{117}^{(5)}$\\
\\
\includegraphics[height=0.65in]{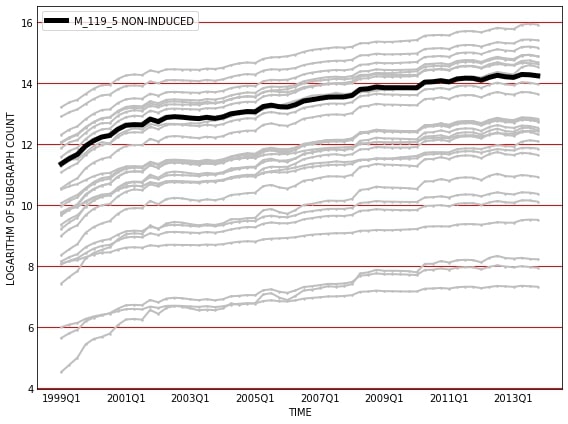} &
\includegraphics[height=0.65in]{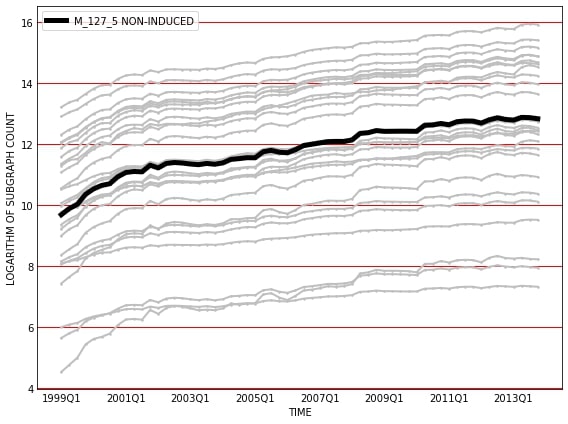} &
\includegraphics[height=0.65in]{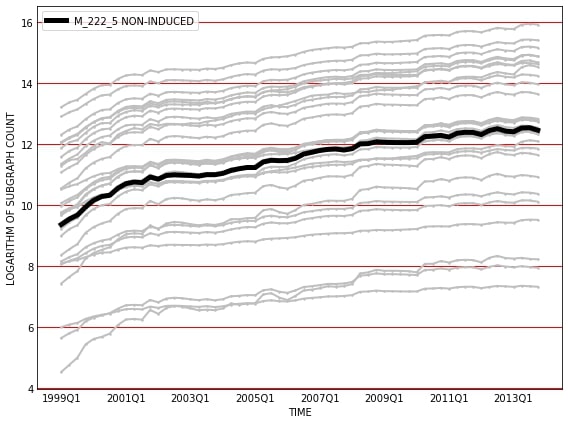} &
\includegraphics[height=0.65in]{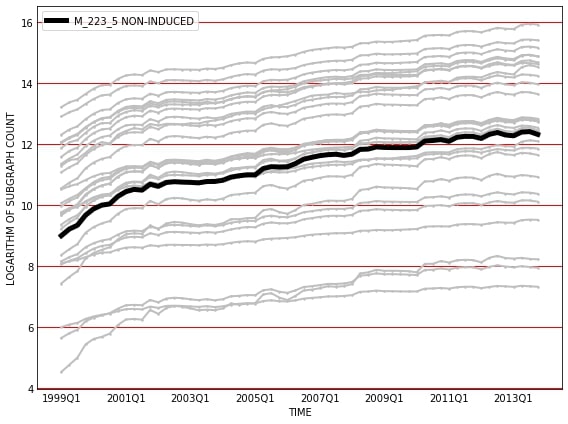} &
\includegraphics[height=0.65in]{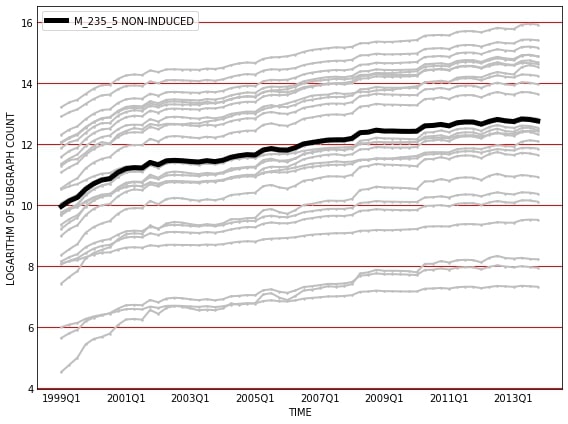} &
\includegraphics[height=0.65in]{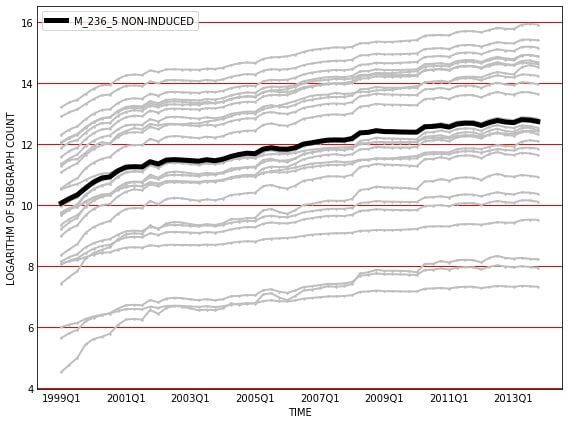} &
\includegraphics[height=0.65in]{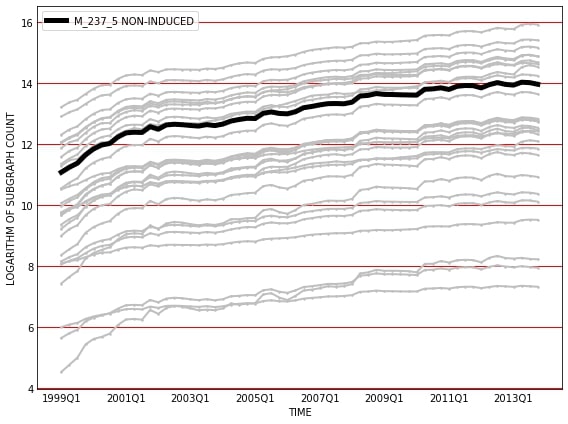} &
\includegraphics[height=0.65in]{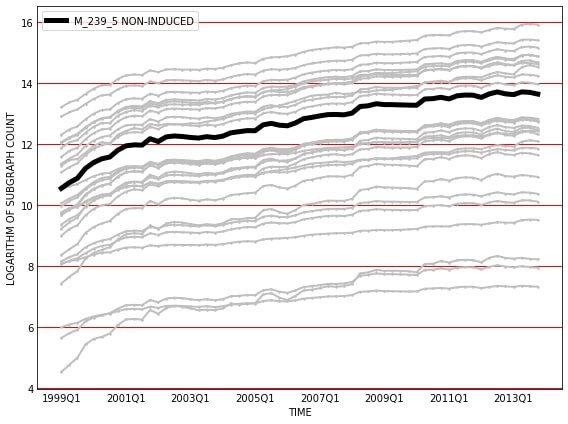}\\
Spinning top $M_{119}^{(5)}$ & Kite $M_{127}^{(5)}$ & Ufo $M_{222}^{(5)}$ & Chevron $M_{223}^{(5)}$ & Hourglass $M_{235}^{(5)}$ & 5-circle $M_{236}^{(5)}$ & House $M_{237}^{(5)}$ & Crown $M_{239}^{(5)}$\\
\\
\includegraphics[height=0.65in]{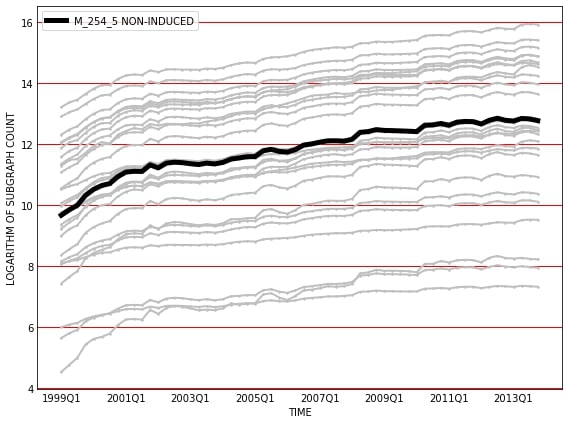} &
\includegraphics[height=0.65in]{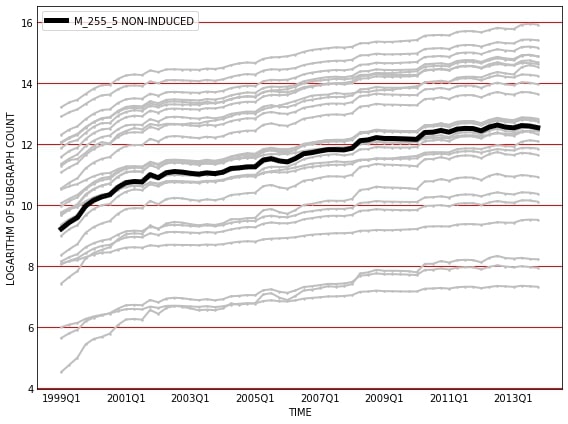} &
\includegraphics[height=0.65in]{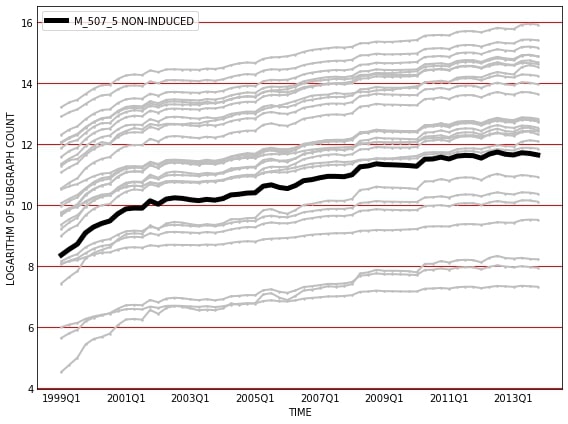} &
\includegraphics[height=0.65in]{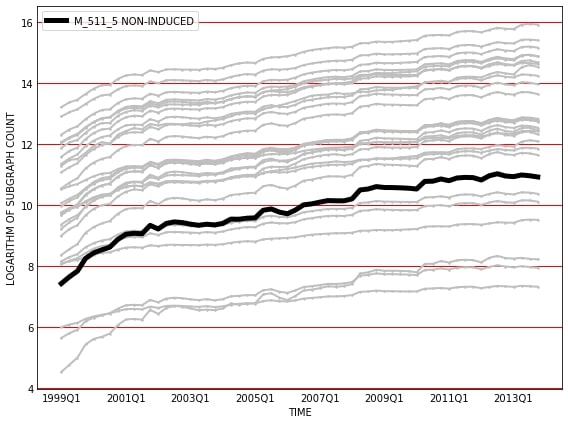} &
\includegraphics[height=0.65in]{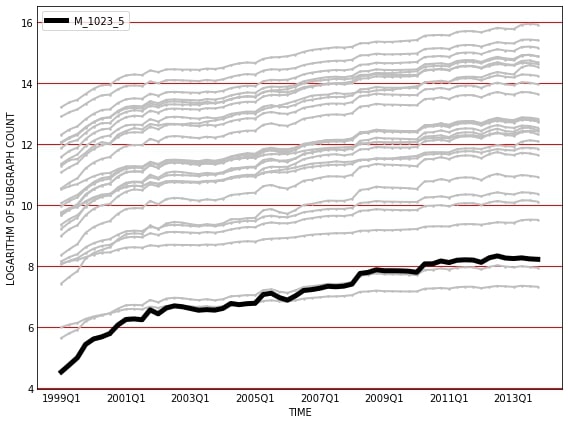} &
&
&
\\
Envelope $M_{254}^{(5)}$ & Lamp $M_{255}^{(5)}$ & Arrowhead $M_{507}^{(5)}$ & Cat's cradle $M_{511}^{(5)}$ & 5-complete $M_{1023}^{(5)}$ & & &\\
\\
\end{tabular}
\caption{Log counts of three-node, four-node and five-node non-induced subgraphs, computed using exact enumeration formulae on Southwest's network from 1999Q1 to 2013Q4 (Section \ref{sec:scaling_properties}).}
\label{fig:non_induced_counts}
\end{table}
\vspace*{\fill}
\end{landscape}

\clearpage

\setlength\extrarowheight{5pt}

\begin{table}[!htp]
    \begin{flushleft}
        \begin{threeparttable}
            \onehalfspacing
            \caption{Summary results of log-log regressions of non-induced subgraph count $|M_{a}^{(b)}|$ on number of edges $m$ (Section \ref{sec:scaling_properties}).}\label{tab:log_log_results}
            \begin{footnotesize}
                \begin{tabular*}{\linewidth}{@{\extracolsep{\fill}}l*{8}{c}}
                    \toprule
                    & \multicolumn{8}{c}{Subgraph}\\
                    \cline{2-9}
                    & $M_{3}^{(3)}$ & $M_{7}^{(3)}$ & $M_{11}^{(4)}$ & $M_{13}^{(4)}$ & $M_{15}^{(4)}$ & $M_{30}^{(4)}$ & $M_{31}^{(4)}$ & $M_{63}^{(4)}$\\
                    \midrule
                    slope ($\beta$) & 1.95 & 1.86 & 3.17 & 2.60 & 2.91 & 2.87 & 3.07 & 3.18\\
                    $R^{2}$ & 0.998 & 0.985 & 0.993 & 0.995 & 0.990 & 0.984 & 0.982 & 0.977\\
                    & \multicolumn{8}{c}{Subgraph}\\
                    \cline{2-9}
                    & $M_{75}^{(5)}$ & $M_{77}^{(5)}$ & $M_{79}^{(5)}$ & $M_{86}^{(5)}$ & $M_{87}^{(5)}$ & $M_{94}^{(5)}$ & $M_{95}^{(5)}$ & $M_{117}^{(5)}$\\
                    \midrule
                    slope ($\beta$) & 4.44 & 3.67 & 4.12 & 3.42 & 3.88 & 3.90 & 4.25 & 3.52\\
                    $R^{2}$ & 0.987 & 0.994 & 0.988 & 0.993 & 0.990 & 0.986 & 0.984 & 0.987\\
                    & \multicolumn{8}{c}{Subgraph}\\
                    \cline{2-9}
                    & $M_{119}^{(5)}$ & $M_{127}^{(5)}$ & $M_{222}^{(5)}$ & $M_{223}^{(5)}$ & $M_{235}^{(5)}$ & $M_{236}^{(5)}$ & $M_{237}^{(5)}$ & $M_{239}^{(5)}$\\
                    \midrule
                    slope ($\beta$) & 3.94 & 4.28 & 4.28 & 4.58 & 3.87 & 3.69 & 3.98 & 4.24\\
                    $R^{2}$ & 0.984 & 0.982 & 0.979 & 0.978 & 0.982 & 0.981 & 0.979 & 0.978\\
                    & \multicolumn{8}{c}{Subgraph}\\
                    \cline{2-9}
                    & $M_{254}^{(5)}$ & $M_{255}^{(5)}$ & $M_{507}^{(5)}$ & $M_{511}^{(5)}$ & $M_{1023}^{(5)}$ & & & \\
                    \midrule
                    slope ($\beta$) & 4.26 & 4.53 & 4.50 & 4.76 & 5.01 & & & \\
                    $R^{2}$ & 0.976 & 0.976 & 0.974 & 0.974 & 0.973 & & & \\
                    \bottomrule
                \end{tabular*}
            \end{footnotesize}
         \end{threeparttable}
    \end{flushleft}
\end{table}

    \clearpage

\begin{figure}\centering
    	\begin{subfigure}{0.48\textwidth}
    		\centering
    		\includegraphics[width=.7\linewidth]{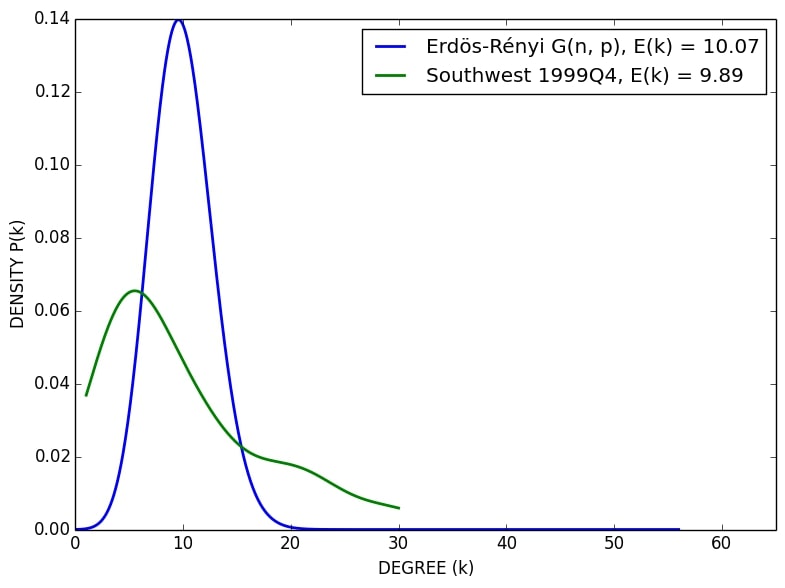}
    		\caption{1999Q4.}
    		\label{fig:degree_distribution_1999_1}
    	\end{subfigure}
    	\begin{subfigure}{0.48\textwidth}
    		\centering
    		\includegraphics[width=.7\linewidth]{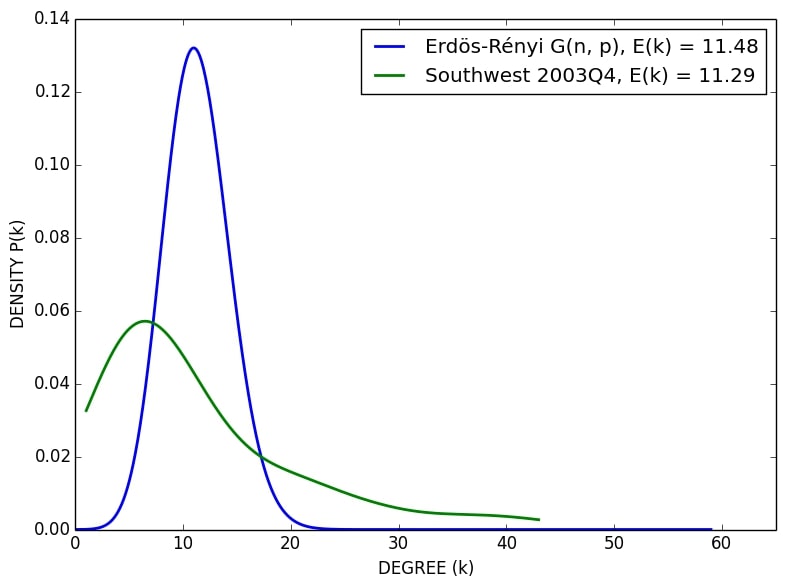}
    		\caption{2003Q4.}
    		\label{fig:degree_distribution_2003_4}
    	\end{subfigure}
    		\begin{subfigure}{0.48\textwidth}
    		\centering
    		\includegraphics[width=.7\linewidth]{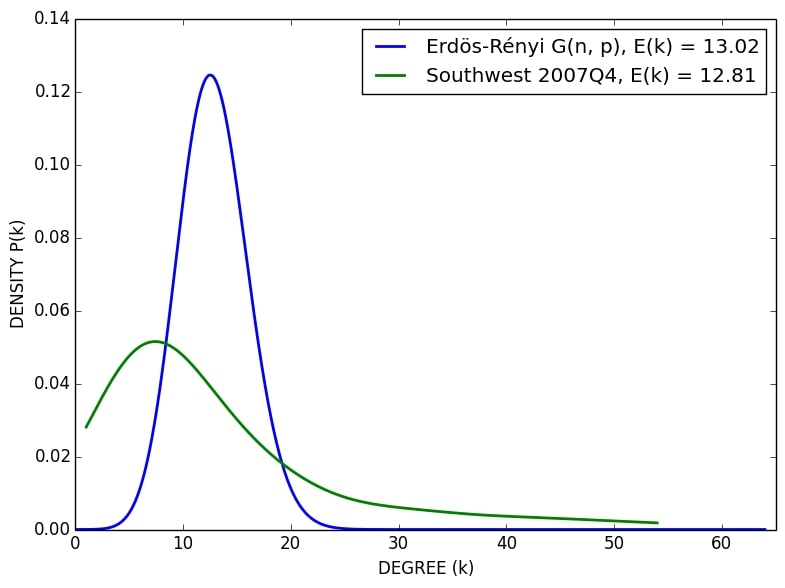}
    		\caption{2007Q4.}
    		\label{fig:degree_distribution_2007_4}
    	\end{subfigure}
    	    \begin{subfigure}{0.48\textwidth}
    		\centering
    		\includegraphics[width=.7\linewidth]{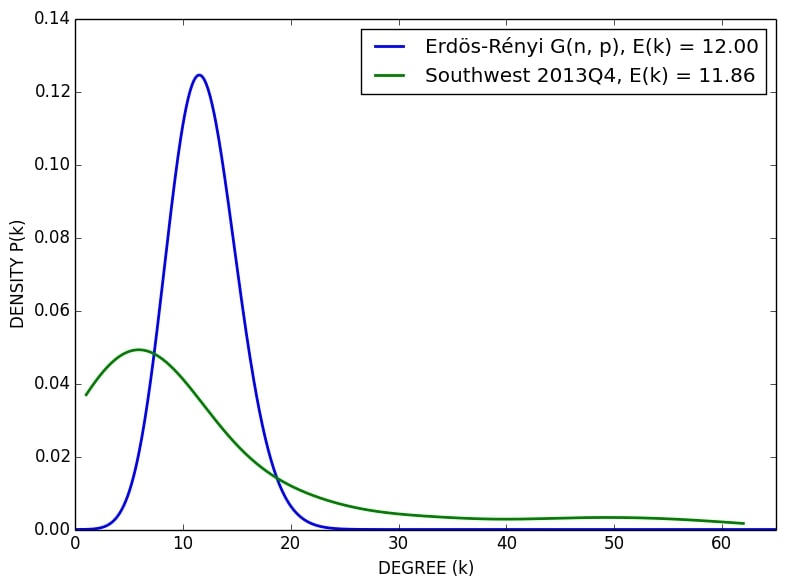}
    		\caption{2013Q4.}
    		\label{fig:degree_distribution_2013_4}
    	\end{subfigure}
    	\caption{Gaussian kernel density estimates corresponding to the degree distribution $P(k)$ of Southwest's network, not correcting for edge-effects at $k=0$; and the binomial degree distribution of the Erd{\H{o}}s-R{\'{e}}nyi graph $G(n, p)$ with edge-formation probability $p = d(G)$ (Section \ref{sec:motif_rewiring}).}
    	\label{fig:degree_distribution}
    \end{figure}

\clearpage

\section{Spatial properties of the triangle subgraph}\label{sec:results_spatial}
One of the distinctive characteristics of airline networks is their spatial nature: unlike many natural or social networks, the nodes (airports) have a fixed geographical location. With this in mind, we explore the dynamic spatial distribution of the triangle subgraph, chosen because ``area'' and ``center'' have a clear meaning in this case. To be precise, the area and barycenter of a triangle subgraph on a curved surface are calculated using the latitude and logitude of each node, with a great circle method. To illustrate, the triangle formed by Baltimore--Washington (BWI), Denver (DEN) and Las Vegas McCarran (LAS), with coordinates $(39.18^{\degree}, -76.67^{\degree})$, $(39.86^{\degree}, -104.67^{\degree})$, and $(36.08^{\degree}, -115.17^{\degree})$, respectively, has area 87,754 square miles, and a center located at $(38.37^{\degree}, -98.84^{\degree})$.

In Figure \ref{fig:barycenters}, we show the general trends in the spatial distribution of triangles between 1999 and 2013. We observe that the triangle centers are evenly-distributed across the U.S. in 1999, but become progressively more concentrated in the east of the country, most notably from 2009 onwards. In Figure \ref{fig:areas}, we plot the density of the triangle subgraph area: this shows that the triangles generally become larger over time. This approach provides a straightforward graphical means of assessing the spatial evolution of clustering in a network over time: clustering (in the sense of connected triples) seems to evolve towards (at least two) nodes that are located in the eastern U.S.

\begin{figure}\centering
    	\begin{subfigure}{0.48\textwidth}
    		\centering
    		\includegraphics[width=.75\linewidth]{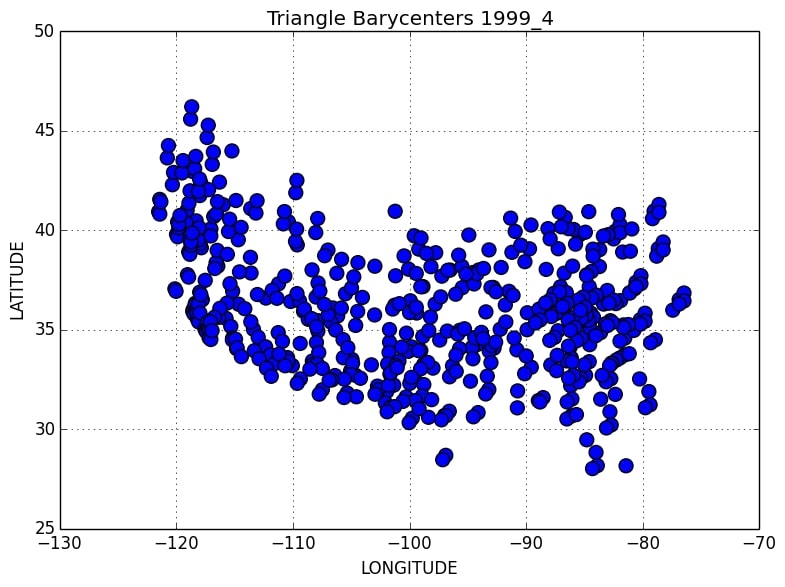}
    		\caption{1999Q4.}
    		\label{fig:barycenters_1999_4}
    	\end{subfigure}
    	\begin{subfigure}{0.48\textwidth}
    		\centering
    		\includegraphics[width=.75\linewidth]{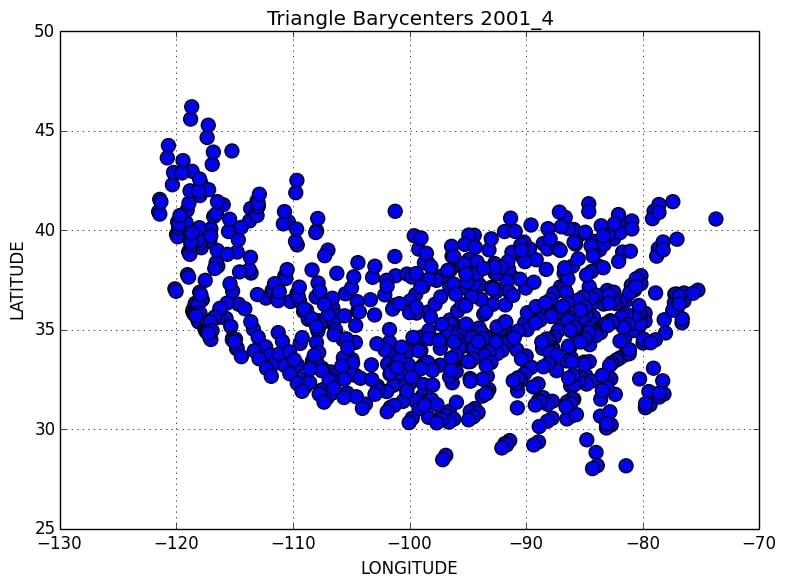}
    		\caption{2001Q4.}
    		\label{fig:barycenters_2001_4}
    	\end{subfigure}
    		\begin{subfigure}{0.48\textwidth}
    		\centering
    		\includegraphics[width=.75\linewidth]{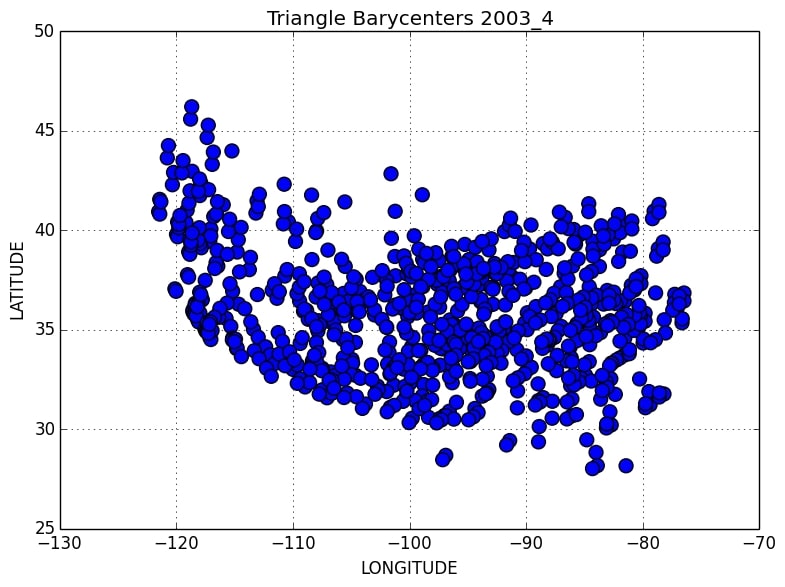}
    		\caption{2003Q4.}
    		\label{fig:barycenters_2003_4}
    	\end{subfigure}
    	    \begin{subfigure}{0.48\textwidth}
    		\centering
    		\includegraphics[width=.75\linewidth]{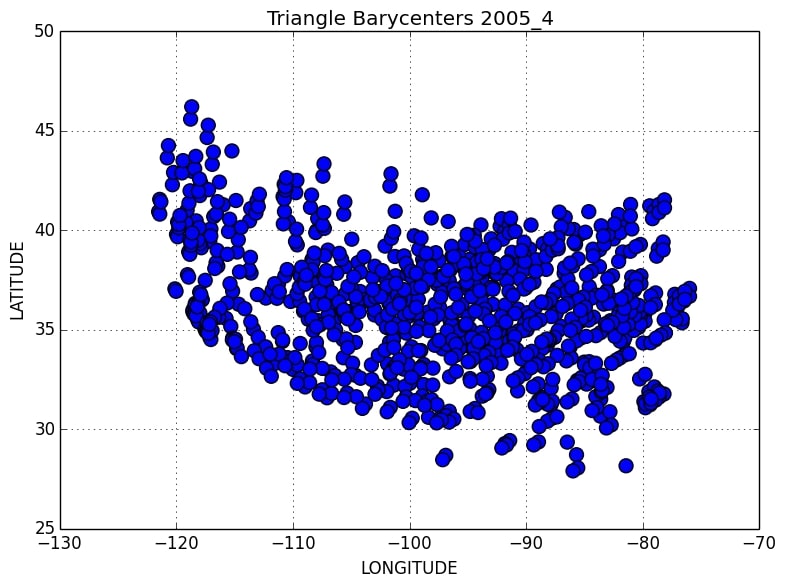}
    		\caption{2005Q4.}
    		\label{fig:barycenters_2005_4}
    	\end{subfigure}
    	\begin{subfigure}{0.48\textwidth}
    		\centering
    		\includegraphics[width=.75\linewidth]{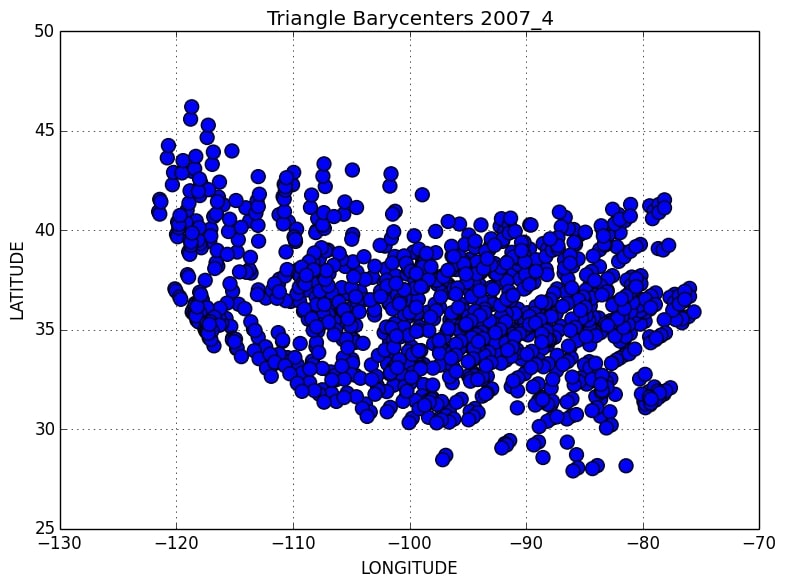}
    		\caption{2007Q4.}
    		\label{fig:barycenters_2007_4}
    	\end{subfigure}
    	\begin{subfigure}{0.48\textwidth}
    		\centering
    		\includegraphics[width=.75\linewidth]{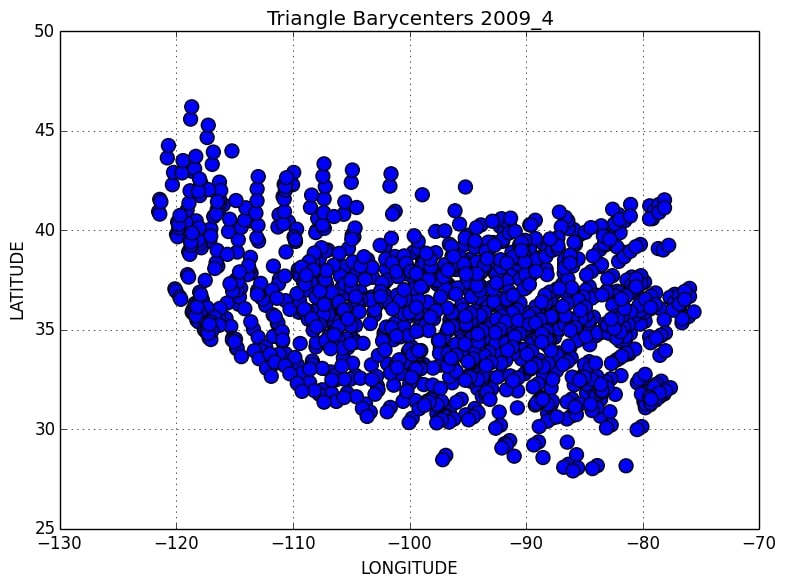}
    		\caption{2009Q4.}
    		\label{fig:barycenters_2009_4}
    	\end{subfigure}
    		\begin{subfigure}{0.48\textwidth}
    		\centering
    		\includegraphics[width=.75\linewidth]{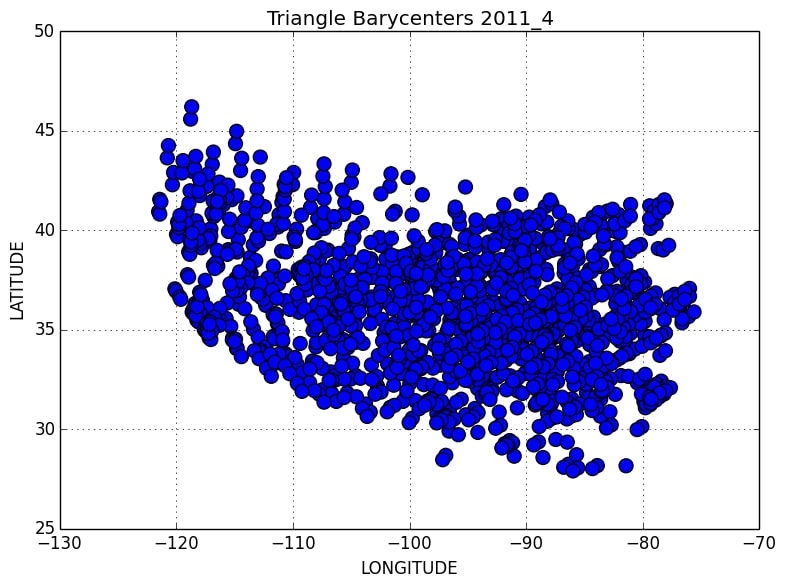}
    		\caption{2011Q4.}
    		\label{fig:barycenters_2011_4}
    	\end{subfigure}
    	    \begin{subfigure}{0.48\textwidth}
    		\centering
    		\includegraphics[width=.75\linewidth]{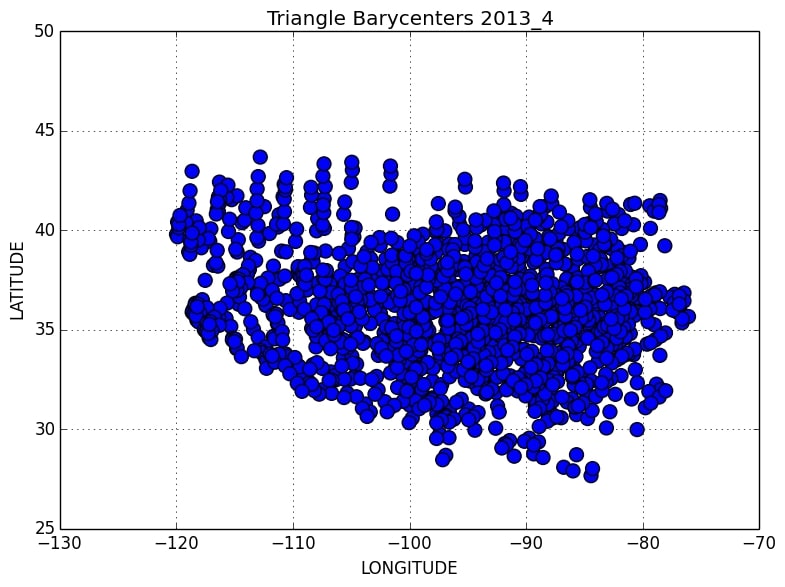}
    		\caption{2013Q4.}
    		\label{fig:barycenters_2013_4}
    	\end{subfigure}
    	\caption{Spatial distribution of the triangle subgraph center for odd-numbered years, last quarter, between 1999 and 2013 (Appendix \ref{sec:results_spatial}).}
    	\label{fig:barycenters}
    \end{figure}

\begin{figure}\centering
    	\begin{subfigure}{0.48\textwidth}
    		\centering
    		\includegraphics[width=.75\linewidth]{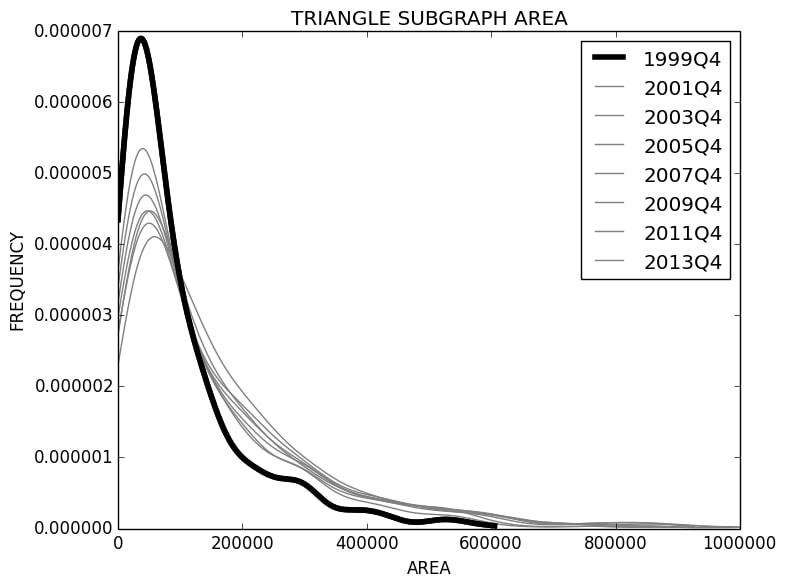}
    		\caption{1999Q4.}
    		\label{fig:triangle_area__1999_4}
    	\end{subfigure}
    	\begin{subfigure}{0.48\textwidth}
    		\centering
    		\includegraphics[width=.75\linewidth]{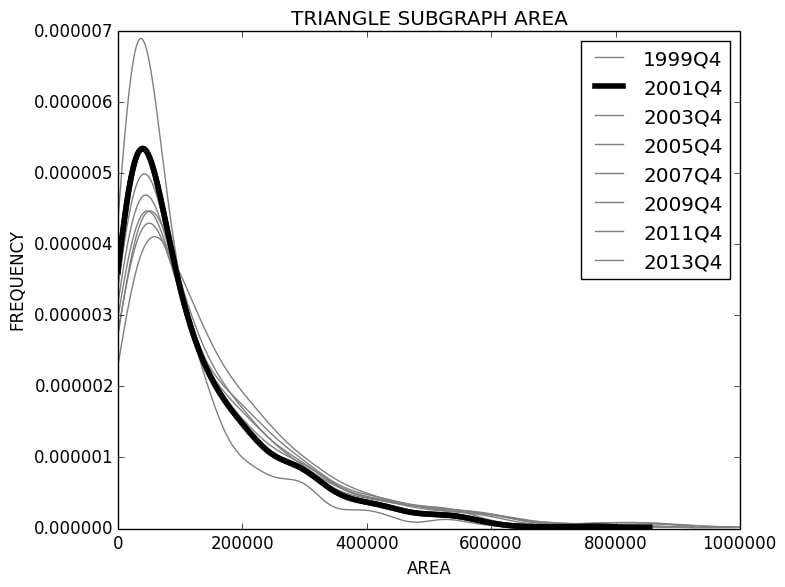}
    		\caption{2001Q4.}
    		\label{fig:triangle_area__2001_4}
    	\end{subfigure}
    		\begin{subfigure}{0.48\textwidth}
    		\centering
    		\includegraphics[width=.75\linewidth]{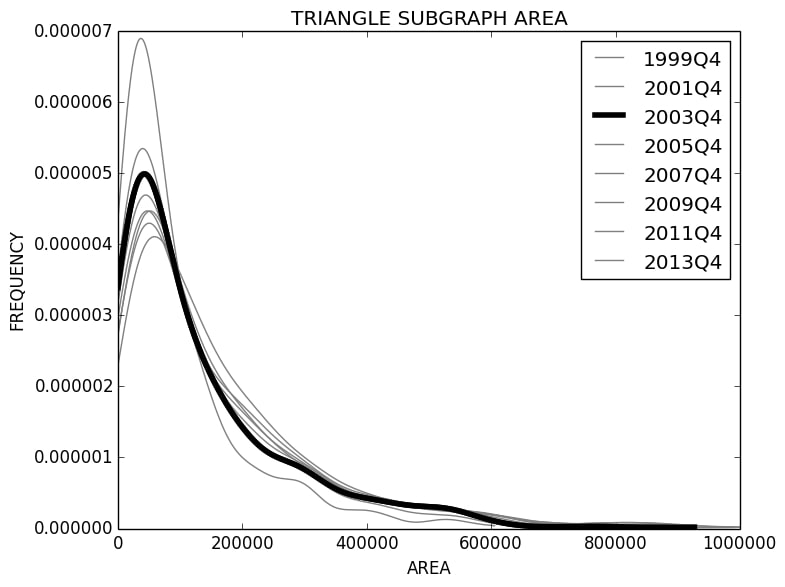}
    		\caption{2003Q4.}
    		\label{fig:triangle_area__2003_4}
    	\end{subfigure}
    	    \begin{subfigure}{0.48\textwidth}
    		\centering
    		\includegraphics[width=.75\linewidth]{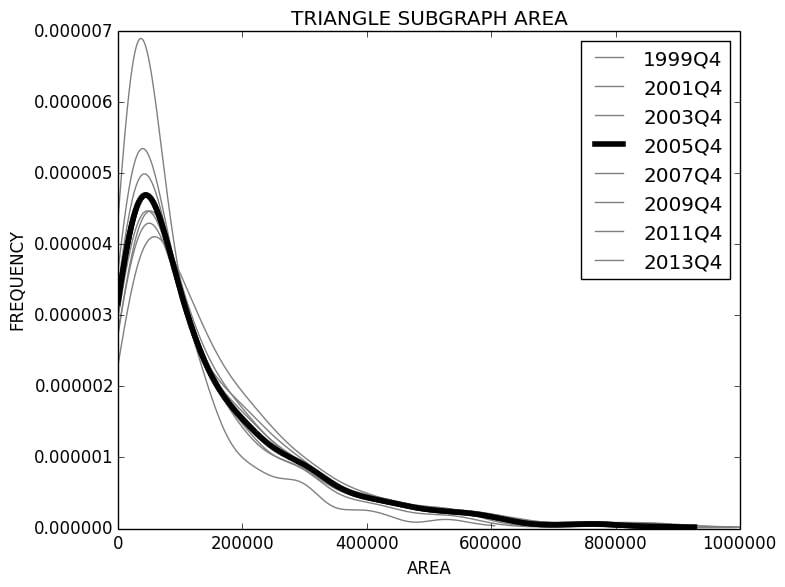}
    		\caption{2005Q4.}
    		\label{fig:triangle_area__2005_4}
    	\end{subfigure}
    	\begin{subfigure}{0.48\textwidth}
    		\centering
    		\includegraphics[width=.75\linewidth]{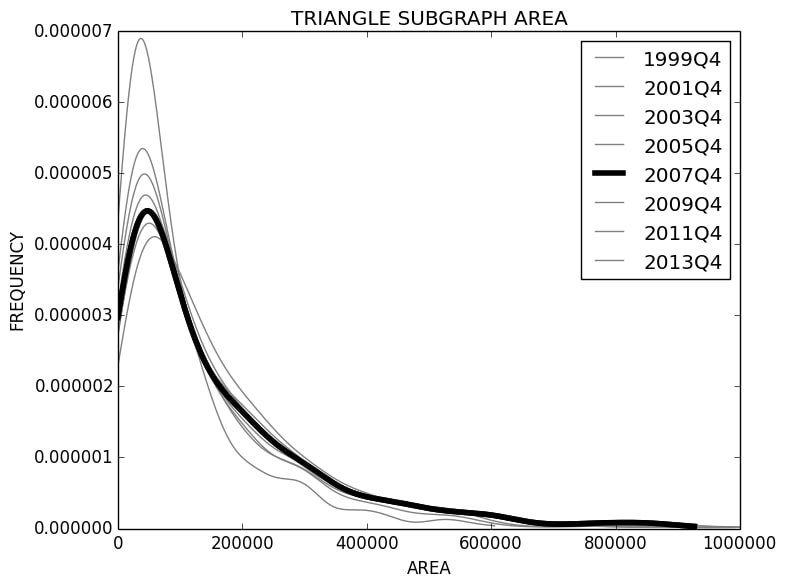}
    		\caption{2007Q4.}
    		\label{fig:triangle_area__2007_4}
    	\end{subfigure}
    	\begin{subfigure}{0.48\textwidth}
    		\centering
    		\includegraphics[width=.75\linewidth]{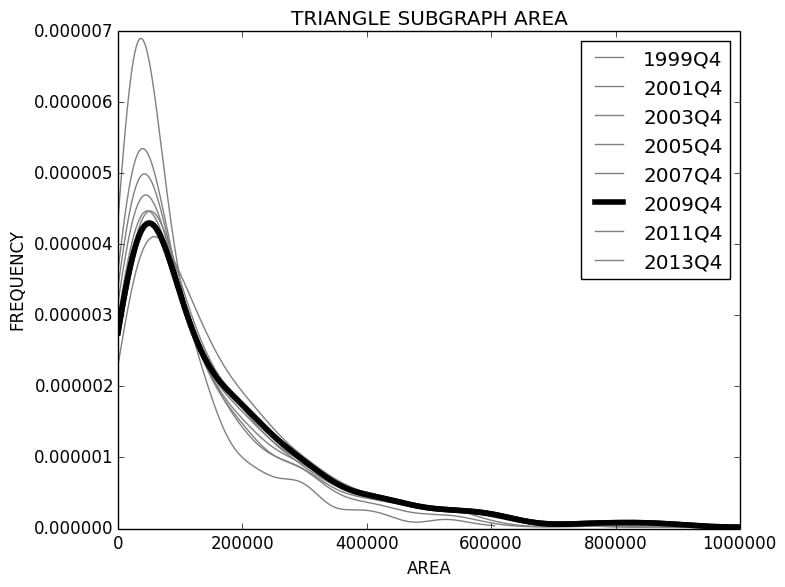}
    		\caption{2009Q4.}
    		\label{fig:triangle_area__2009_4}
    	\end{subfigure}
    		\begin{subfigure}{0.48\textwidth}
    		\centering
    		\includegraphics[width=.75\linewidth]{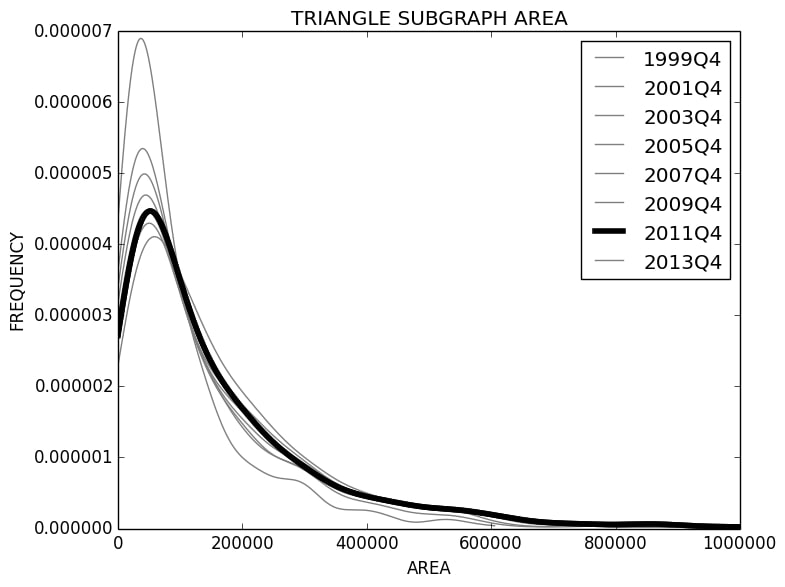}
    		\caption{2011Q4.}
    		\label{fig:triangle_area__2011_4}
    	\end{subfigure}
    	    \begin{subfigure}{0.48\textwidth}
    		\centering
    		\includegraphics[width=.75\linewidth]{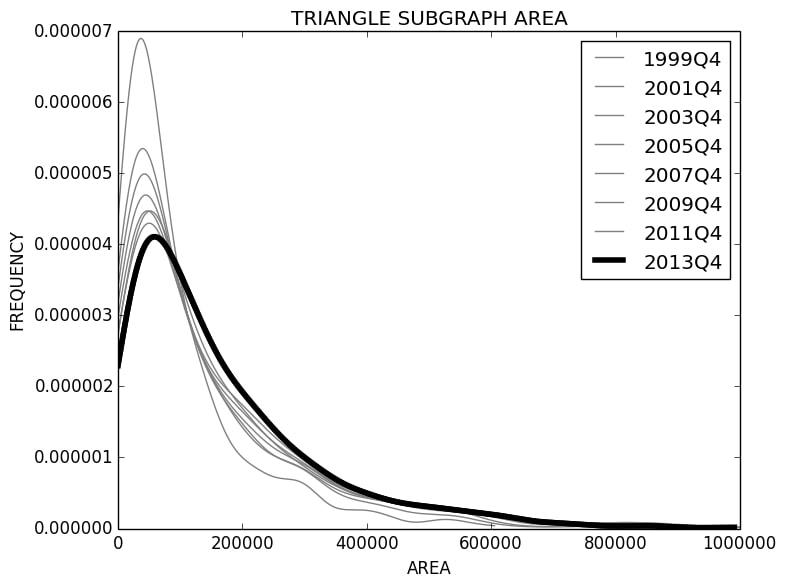}
    		\caption{2013Q4.}
    		\label{fig:triangle_area_s_2013_4}
    	\end{subfigure}
    	\caption{Kernel density estimates of the triangle subgraph area for odd-numbered years, last quarter, between 1999 and 2013, not correcting for edge effects at zero (Appendix \ref{sec:results_spatial}).}
    	\label{fig:areas}
    \end{figure}
    
\clearpage

\section{Weak bounds on complete subgraph counts}\label{sec:bounds_on_complete}
In passing, we mention a result of \cite{fisher_ryan92}, that we give in a symmetrized form, which provides bounds on the number of complete subgraphs (cliques) in a simple graph:

\begin{theorem}[Bounds on number of complete subgraphs \cite{fisher_ryan92}]\label{thm:bounds_complete}
Let $G$ be a simple graph with clique number $w=w(G)$. For $1 \leq h \leq w$, let $T_{h}$ be the number of $h$-complete subgraphs. Then:
\begin{equation}\label{eq:bounds_complete}
\left(\frac{T_{h+1}}{\binom{w}{h+1}}\right)^{\frac{1}{h+1}} \leq \left(\frac{T_{h}}{\binom{w}{h}}\right)^{\frac{1}{h}}.
\end{equation}
\end{theorem}
\noindent The terms $T_{1}$ through to $T_{5}$ refer to the number of nodes and edges, and the counts of the triangle, 4-complete and 5-complete subgraphs. To illustrate the strength of these inequalities, we use the Bron-Kerbosch algorithm \cite{bron_kerbosch73} to find all maximal cliques in Southwest's 2013Q4 network: this gives four maximum cliques with $w(G)=11$.\footnote{Each of the maximum cliques contains a common 9-complete subgraph, on Nashville (BNA), Baltimore--Washington (BWI), Denver (DEN), Houston William P. Hobby (HOU), Las Vegas McCarran (LAS), Kansas City (MCI), Chicago Midway (MDW), Louis Armstrong New Orleans (MSY) and St. Louis Missouri (STL). The 11-complete subgraphs include, in addition, one of the following pairs of nodes: (FLL, TPA), (LAX, PHX), (MCO, PHX) or (PHX, TPA), on Fort Lauderdale--Hollywood (FLL), Los Angeles (LAX), Orlando (MCO), Phoenix Sky Harbor (PHX), and Tampa (TPA). Each maximum subgraph contains 12.5\% of the 88 nodes in the entire 2013Q4 network.} We observe $T_{1}=88$, $T_{2}=522$, $T_{3}=1501$, $T_{4}=2806$ and $T_{5}=3690$. It follows from (\ref{eq:bounds_complete}) that, in our notation,
\begin{equation*}
    \left(\frac{|M_{1023}^{(5)}|}{\binom{w}{5}}\right)^{\frac{1}{5}} \leq \left(\frac{|M_{63}^{(4)}|}{\binom{w}{4}}\right)^{\frac{1}{4}} \leq \left(\frac{|M_{7}^{(3)}|}{\binom{w}{3}}\right)^{\frac{1}{3}} \leq \left(\frac{m}{\binom{w}{2}}\right)^{\frac{1}{2}} \leq \left(\frac{n}{w}\right),
\end{equation*}
which reduces to the rather weak set of inequalities
\begin{equation*}
    m \leq 3520, \quad |M_{7}^{(3)}| \leq 4824, \quad |M_{63}^{(4)}| \leq 6266, \quad |M_{1023}^{(5)}| \leq 6708,
\end{equation*}
based on the observed subgraph counts.

\clearpage

\section*{Acknowledgements}
We are grateful to Karim Abadir, Gergana Bounova, Pascal Lezaud, Chantal Roucolle, Miguel Urdanoz, and participants at the Conference on Complex Systems (CCS2018, Thessaloniki), for helpful comments and suggestions. We also thank Patrick Senac for supporting this project. The visualization, subgraph analysis, and motif detection tools used in this paper were coded by the authors in Python 2.7. The usual caveat applies. This research did not receive any specific grant from funding agencies in the public, commercial, or not-for-profit sectors.\\
\\
Keywords: Airline network, Graph theory, Network motif, Scaling, Subgraph, Topology transitions.\\
\\
PACS numbers: 02.10.Ox (Combinatorics; graph theory), 89.40.Dd (Air transportation), 89.65.Gh (Economics; econophysics; financial markets; business and management), 89.75.-k (Complex systems).

\clearpage

\singlespace

\bibliographystyle{plainnat}

\bibliography{ms}

\end{document}